\def\eqref#1{equation~\ref{#1}}
\def\Eqref#1{Equation~\ref{#1}}
\def\1{\bm{1}}
\DeclareMathAlphabet{\mathsfit}{\encodingdefault}{\sfdefault}{m}{sl}
\SetMathAlphabet{\mathsfit}{bold}{\encodingdefault}{\sfdefault}{bx}{n}
\newcommand{\E}{\mathbb{E}}
\newcommand{\R}{\mathbb{R}}
\pgfplotsset{compat=newest}
\newcommand\N{\mathbb{N}}
\newcommand{\norm}[1]{\left\lVert#1\right\rVert}
\newtheorem{theorem}{Theorem}
\newtheorem{lemma}{Lemma}
\newtheorem{assumption}{Assumption}
\newtheorem{corollary}{Corollary}
\newcommand{\ouralgo}{\textsc{Swift}\xspace}
\newcommand{\mytitle}{\textsc{Swift}\xspace: Rapid Decentralized Federated\\ Learning via Wait-Free Model Communication}
\title{\mytitle}
\author{Marco Bornstein, Tahseen Rabbani, Evan Wang, Amrit Singh Bedi, \& Furong Huang \\
Department of Computer Science, University of Maryland\\
\texttt{\{marcob, trabbani, ezw, amritbd, furongh\}@umd.edu} \\
}
\begin{document}

\maketitle

\begin{abstract}
  The decentralized Federated Learning (FL) setting avoids the role of a potentially unreliable or untrustworthy central host by utilizing groups of clients to collaboratively train a model via localized training and model/gradient sharing.
Most existing decentralized FL algorithms require synchronization of client models where the speed of synchronization depends upon the slowest client.
In this work, we propose \ouralgo: a novel wait-free decentralized FL algorithm that allows clients to conduct training at their own speed. 
Theoretically, we prove that \ouralgo matches the gold-standard iteration convergence rate $\mathcal{O}(1/\sqrt{T})$ of parallel stochastic gradient descent for convex and non-convex smooth optimization (total iterations $T$). 
Furthermore, we provide theoretical results for IID and non-IID settings \textit{without} any bounded-delay assumption for slow clients which is required by other asynchronous decentralized FL algorithms. 
Although \ouralgo achieves the same iteration convergence rate with respect to $T$ as other state-of-the-art (SOTA) parallel stochastic algorithms, it converges faster with respect to run-time due to its wait-free structure.
Our experimental results demonstrate that \ouralgo's run-time is reduced due to a large reduction in communication time per epoch, which falls by an order of magnitude compared to synchronous counterparts. 
Furthermore, \ouralgo produces loss levels for image classification, over IID and non-IID data settings, upwards of 50\% faster than existing SOTA algorithms.

\end{abstract}

\section{Introduction}\label{sec:intro}
\vspace{-1em}
Federated Learning (FL) is an increasingly popular setting to train powerful deep neural networks with data derived from an assortment of clients.
Recent research \citep{lian2017can, li2019communication, wang2018cooperative} has focused on constructing decentralized FL algorithms that overcome speed and scalability issues found within classical centralized FL \citep{mcmahan2017communication, savazzi2020federated}.
While decentralized algorithms have eliminated a major bottleneck in the distributed setting, the central server, their scalability potential is still largely untapped.
Many are plagued by high communication time per round \citep{wang2019matcha}.
Shortening the communication time per round allows more clients to connect and then communicate with one another, thereby increasing scalability.


Due to the synchronous nature of current decentralized FL algorithms, communication time per round, and consequently run-time, is amplified by parallelization delays.
These delays are caused by the slowest client in the network.
To circumvent these issues, asynchronous decentralized FL algorithms have been proposed \citep{lian2018asynchronous, luo2020prague, liu2022asynchronous, nadiradze2021asynchronous}.
However, these algorithms still suffer from high communication time per round.
Furthermore, their communication protocols either do not propagate models well throughout the network (via gossip algorithms) or require partial synchronization.
Finally, these asynchronous algorithms rely on a deterministic bounded-delay assumption, which ensures that the slowest client in the network updates at least every $\tau$ iterations. 
This assumption is satisfied only under certain conditions \citep{abbasloo2020sharpedge}, and worsens the convergence rate by adding a sub-optimal reliance on $\tau$.

To remedy these drawbacks, we propose the \textbf{S}hared \textbf{W}a\textbf{I}t-\textbf{F}ree \textbf{T}ransmission (\ouralgo) algorithm: an efficient, scalable, and high-performing decentralized FL algorithm. 
Unlike other decentralized FL algorithms, \ouralgo obtains minimal communication time per round due to its wait-free structure.
Furthermore, \ouralgo is the first asynchronous decentralized FL algorithm to obtain an optimal $\mathcal{O}(1/\sqrt{T})$ convergence rate (aligning with stochastic gradient descent) \textit{without a bounded-delay assumption}. 
Instead, \ouralgo leverages the expected delay of each client (detailed in our remarks within Section~\ref{sec:Experiments}).
Experiments validate \ouralgo's efficiency, showcasing a reduction in communication time by nearly an order of magnitude and run-times by upwards of 50\%.
All the while, \ouralgo remains at state-of-the-art (SOTA) global test/train loss for image classification compared to other decentralized FL algorithms. We summarize our \textbf{\emph{main contributions}} as follows.
 \vspace{-0.5em}
 \begin{list}{$\rhd$}
 {\topsep=0.ex \leftmargin=0.19in \rightmargin=0.in \itemsep =0.0in}
    \item \textbf{(1)} Propose a novel wait-free decentralized FL algorithm (called \ouralgo) and prove its theoretical convergence without a bounded-delay assumption.
    \item  \textbf{(2)} Implement a novel pre-processing algorithm to ensure non-symmetric and non-doubly stochastic communication matrices are symmetric and doubly-stochastic under expectation.
    \item \textbf{(3)} Provide the first theoretical client-communication error bound for non-symmetric and non-doubly stochastic communication matrices in the asynchronous setting.
    \item \textbf{(4)} Demonstrate experimentally a significant reduction in communication time and run-time per epoch for CIFAR-10 classification in both IID and non-IID settings compared to synchronous decentralized FL algorithms.
 \end{list}
    



\section{Related Works}\label{sec:related}
\vspace{-1em}
\begin{table}[t]
\centering
    \setlength\extrarowheight{2pt}
    \resizebox{\textwidth}{!}
    {
\begin{threeparttable}
\begin{tabular}{|c|c|c|c|c|c|}
\hline
Algorithm & Iteration Convergence Rate    &  Client ($i$) Comm-Time Complexity  & Neighborhood Avg. & Asynchronous & Private Memory \\ \hline
D-SGD     & $\mathcal{O}({1}/{\sqrt{T}})$    & $\mathcal{O}(T \max_{j \in \mathcal{N}_i} \mathscr{C}_j )$  &   \checkmark   &  \ding{55} &  \checkmark    \\ \hline
PA-SGD    & $\mathcal{O}({1}/{\sqrt{T}})$    & $\mathcal{O}(|\mathcal{C}_s| \max_{j \in \mathcal{N}_i} \mathscr{C}_j )$ &  \checkmark &  \ding{55} &\checkmark   \\ \hline
LD-SGD    & $\mathcal{O}({1}/{\sqrt{T}})$    &  $\mathcal{O}(|\mathcal{C}_s| \max_{j \in \mathcal{N}_i} \mathscr{C}_j )$   &    \checkmark & \ding{55} &   \checkmark  \\ \hline
AD-PSGD   & $\mathcal{O}({\tau}/{\sqrt{T}})$ & $\mathcal{O}(T  \mathscr{C}_i )$ &   \ding{55}   & \checkmark &   \ding{55}    \\ \hline
\textbf{\ouralgo}     & $\mathcal{O}({1}/{\sqrt{T}})$  & $\mathcal{O}(|\mathcal{C}_s|  \mathscr{C}_i )$   &   \checkmark   & \checkmark  & \checkmark  \\ \hline
\end{tabular}
\begin{tablenotes}\footnotesize
        \item \textbf{(1)} Notation: total iterations $T$, communication set $\mathcal{C}_s$ ($|\mathcal{C}_s| < T$), client $i$'s neighborhood $\mathcal{N}_i$, maximal bounded delay $\tau$, and client $i$'s communication time per round $\mathscr{C}_i$. \textbf{(2)} As compared to AD-PSGD, \ouralgo does not have a $\tau$ convergence rate term due to using an expected client delay in analysis.
\end{tablenotes}
\end{threeparttable}
}
\caption{Rate and complexity comparisons for decentralized FL algorithms.}
\label{tab:baseline-comp}
\end{table}
\textbf{Asynchronous Learning.} \quad
HOGWILD! \citep{recht2011hogwild},
AsySG-Con \citep{lian2015asynchronous}, and AD-PSGD \citep{lian2017can} are seminal examples of asynchronous algorithms that allow clients to proceed at their own pace.
However, these methods require a shared memory/oracle from which clients grab the most up-to-date global parameters (e.g. the current graph-averaged gradient). 
By contrast, \ouralgo relies on a message passing interface (MPI) to exchange parameters between neighbors, rather than interfacing with a shared memory structure.
Algorithmically, each client relies on local memory to store current neighbor weights. 
To circumvent local memory overload, common in IoT clusters \citep{li2018learning}, clients maintain a mailbox containing neighbor models: each client pulls out neighbor models one at a time to sequentially compute the desired aggregated statistics.

\textbf{Decentralized Stochastic Gradient Descent (SGD)}. \quad The predecessor to decentralized FL is gossip learning \citep{boyd2006randomized, hegedHus2021decentralized}.
Gossip learning was first introduced by the control community to assist with mean estimation of decentrally-hosted data distributions \citep{aysal2009broadcast, boyd2005gossip}.
Now, SGD-based gossip algorithms are used to solve large-scale machine learning tasks \citep{lian2015asynchronous, lian2018asynchronous, ghadimi2016mini, nedic2009distributed, recht2011hogwild, agarwal2011distributed}.
A key feature of gossip learning is the presence of a globally shared oracle/memory with whom clients exchange parameters at the end of training rounds \citep{boyd2006randomized}.
While read/write-accessible shared memory is well-suited for a single-organization ecosystem (i.e. all clients are controllable and trusted), this is unrealistic for more general edge-based paradigms. 
Neighborhood-based communication and aggregation algorithms, such as D-SGD \citep{lian2017can} and PA-SGD \citep{li2019communication}, can theoretically and empirically outperform their centralized counterparts, especially under heterogeneous client data distributions.
Unfortunately, these algorithms suffer from synchronization slowdowns. 
\ouralgo is asynchronous (avoiding slowdowns), utilizes neighborhood averaging, and does not require shared memory.

\textbf{Communication Under Expectation.} \quad
Few works in FL center on communication uncertainty. 
In \citep{ye2022decentralized}, a lightweight, yet unreliable, transmission protocol is constructed in lieu of slow heavyweight protocols. 
A synchronous algorithm is developed to converge under expectation of an unreliable communication matrix (probabilistic link reliability).
\ouralgo also convergences under expectation of a communication matrix, yet in a different and asynchronous setting.
\ouralgo is already lightweight and reliable, and our use of expectation does not regard link reliability.

\textbf{Communication Efficiency.} \quad
Minimizing each client $i$'s communication time per round $\mathscr{C}_i$ is a challenge in FL, as the radius of information exchange can be large \citep{kairouz2021advances}.
\textsc{Matcha} \citep{wang2019matcha} decomposes the base network into $m$ disjoint matchings.
Every epoch, a random sub-graph is generated from a combination of matchings, each having an activation probability $p_k$.
Clients then exchange parameters along this sub-graph.
This requires a total communication-time complexity of $\mathcal{O}(T \sum_{k=1}^m p_k \max_{j \in \mathcal{N}_i} \mathscr{C}_j)$, where $\mathcal{N}_i$ are client $i$'s neighbors. 
LD-SGD \citep{li2019communication} and PA-SGD \citep{wang2018cooperative} explore how reducing the number of neighborhood parameter exchanges affects convergence. 
Both algorithms create a communication set $\mathcal{C}_s$ (defined in Appendix \ref{app:sec:review-DecentralizedFL}) that dictate when clients communicate with one another.
The communication-time complexities are listed in Table \ref{tab:baseline-comp}.
These methods, however, are synchronous and their communication-time complexities depend upon the slowest neighbor $\max_{j \in \mathcal{N}_i} \mathscr{C}_j$.
\ouralgo improves upon this, achieving a communication-time complexity depending on a client's own communication-time per round.
Unlike AD-PSGD \cite{lian2018asynchronous}, which achieves a similar communication-time complexity, \ouralgo allows for periodic communication, uses only local memory, and does not require a bounded-delay assumption.

\setlength\abovedisplayskip{0pt}
\setlength\belowdisplayskip{0pt}
\section{Problem Formulation}\label{sec:formulation}
\vspace{-1em}
\textbf{Decentralized FL.}\quad  In the FL setting, we have $n$ clients represented as vertices of an arbitrary communication graph $\mathcal{G}$ with vertex set $\mathcal{V} = \{ 1, \ldots, n\}$ and edge set $\mathcal{E} \subseteq \mathcal{V} \times \mathcal{V}$. Each client $i$ communicates with one-hop neighboring clients $j$ such that $(i,j) \in \mathcal{E}$. 
We denote the neighborhood for client $i$ as $\mathcal{N}_i$, and clients work in tandem to find the global model parameters $x$  by solving:
\begin{equation}
\setlength\abovedisplayskip{2pt}
\setlength\belowdisplayskip{2pt}
    \label{eq:global_objective}
    \begin{aligned}
    \min_{x \in \R^d} f(x) &:= \sum_{i=1}^n p_i f_i(x), \quad f_i(x) := \E_{\xi_i \sim \mathcal{D}_i} \big[ \ell(x, \xi) \big], \quad \sum_{i=1}^n p_i = 1, \quad p_i \geq 0.
    \end{aligned}
\end{equation}
The global objective function $f(x)$ is the weighted average of all local objective functions $f_i(x)$.
In \Eqref{eq:global_objective}, $p_i, \forall i\in[n]$ denotes the client influence score.
This term controls the influence of client $i$ on the global consensus model, forming the \textit{client influence vector} $p=\{p_i\}_{i=1}^n$. 
These scores also reflect the sampling probability of each client.
We note that each local objective function $f_i(x)$ is the expectation of loss function $\ell$ with respect to potentially different local data $\xi_i=\{\xi_{i,j}\}_{j=1}^M$ from each client $i$'s distribution $\mathcal{D}_i$, i.e., $\xi_{i,j} \sim \mathcal{D}_i$. 
The total number of iterations is denoted as $T$.
\begin{table}[t]
\centering
\resizebox{0.825\textwidth}{!}{ 
\begin{tabular}{cc||cc}
\toprule
Definition & Notation& Definition & Notation\\
\midrule
Global Iteration &  $t$  & Number of Clients & $n$ \\
Active Client at $t$ & $i_t$ & Parameter Dimension & $d$\\
Client $(i,j)$ Communication Coefficient at $t$ & $w^t_{i,j}$ &    Data Dimension & $f$\\
\midrule
Local Model of Client $i$ at $t$ &  $x_i^t \in \R^d$  & Step-Size & $\gamma$ \\
Local Model of Active Client at $t$ &  $x_{i_t}^t \in \R^d$  &  Total \ouralgo Iterations & $T$\\
Mini-Batch Data from Active Client at $t$ &  $\xi_{i_t}^t \in \R^f$  &  Mini-Batch Size (Uniform) & $M$\\
Gradient of the Active Model at $t$ &  $g(x_{i_t}^t, \xi_{i_t}^t) \in \R^d$  & Communication Set & $\mathcal{C}_s$\\
Client $i$ Communication Vector at $t$ & $w^t_i \in \R^{n}$ &  Global Data Distribution & $\mathcal{D}$\\
\midrule
Local Model Matrix at $t$ &  $X^t \in \R^{d \times n}$  & Global Objective & $f(x)$ \\
Zero-Padded Gradient of the Active Model at $t$ &  $G(x_{i_t}^t, \xi_{i_t}^t) \in \R^{d \times n}$  & Client Influence Score & $p_i$\\
Expected Local Model Gradients at $t$ &  $\bar{G}(X^t, \xi^t) \in \R^{d \times n}$  & Client Influence Vector & $p \in \R^n$\\
Active Client Communication Matrix at $t$ &  $W^t_{i_t} \in \R^{n \times n}$  &  One-Hot Vector & $\bm{e_i} \in \R^n$\\
Expected Client Communication Matrix at $t$ &  $\bar{W}^t \in \R^{n \times n}$  & Identity Matrix & $I_n \in \R^{n \times n}$ \\
\bottomrule
\end{tabular}
}
\caption{Notation Table}
\end{table}

\textbf{Existing Inter-Client Communication in Decentralized FL.} \label{subsec:comm-overview} \quad 
All clients balance their individual training with inter-client communications in order to achieve consensus while operating in a decentralized manner. 
The core idea of decentralized FL is that each client communicates with its neighbors (connected clients) and shares local information.  
Balancing individual training with inter-client communication ensures individual client models are well-tailored to personal data while remaining (i) robust to other client data, and  (ii) able to converge to an optimal consensus model. 

\textbf{Periodic Averaging.} \quad
Algorithms such as Periodic Averaging SGD (PA-SGD) \citep{wang2018cooperative} and Local Decentralized SGD (LD-SGD) reduce communication time by performing \textit{multiple} local updates before synchronizing.
This process is accomplished through the use of a communication set $\mathcal{C}_s$, which defines the set of iterations a client must perform synchronization,
\begin{equation}
\mathcal{C}_s = \{ t \in \N \; | \; t \text{ mod } (s+1) = 0, \; t \leq T\}.
\end{equation}
We adopt this communication set notation, although synchronization is unneeded in our algorithm.

\section{Shared Wait-Free Transmission (\ouralgo) Federated Learning}\label{sec:swift-algo}
\vspace{-1em}
\setlength\abovedisplayskip{0pt}
\setlength\belowdisplayskip{0pt}

In this section, we present the \textbf{S}hared \textbf{W}a\textbf{I}t-\textbf{F}ree \textbf{T}ransmission (\ouralgo) Algorithm. 
\ouralgo is an asynchronous algorithm that allows clients to work at their own speed.
Therefore, it removes the dependency on the slowest client which is the major drawback of synchronous settings.
Moreover, unlike other asynchronous algorithms, \ouralgo does not require a bound on the speed of the slowest client in the network and allows for neighborhood averaging and periodic communication.

\textbf{A \ouralgo Overview.} \quad Each client $i$ runs \ouralgo in parallel, first receiving an initial model $x_i$, communication set $\mathcal{C}_s$, and counter $c_i \leftarrow 1$. 
\ouralgo is concisely summarized in the following steps:
\newline
\textbf{(0)} Determine client-communication weights $w_i$ via Algorithm \ref{alg:ccs}.\\
\textbf{(1)} Broadcast the local model to all neighboring clients.\\
\textbf{(2)} Sample a random local data batch of size $M$.\\
\textbf{(3)} Compute the gradient update of the loss function $\ell$ with the sampled local data.\\
\textbf{(4)} Fetch and store neighboring local models, and average them with one's own local model if $c_i \in \mathcal{C}_s$.\\
\textbf{(5)} Update the local model with the computed gradient update, as well as the counter $c_i \leftarrow c_i + 1$.\\
\textbf{(6)} Repeat steps \textbf{(1)}-\textbf{(5)} until convergence.\\
A diagram and algorithmic block of \ouralgo are depicted in Figure~\ref{fig:swift_diagram} and Algorithm~\ref{alg:swift} respectively.

\begin{figure}[t]
\vspace{-1em}
    \centering
    \includegraphics[width=0.75\textwidth]{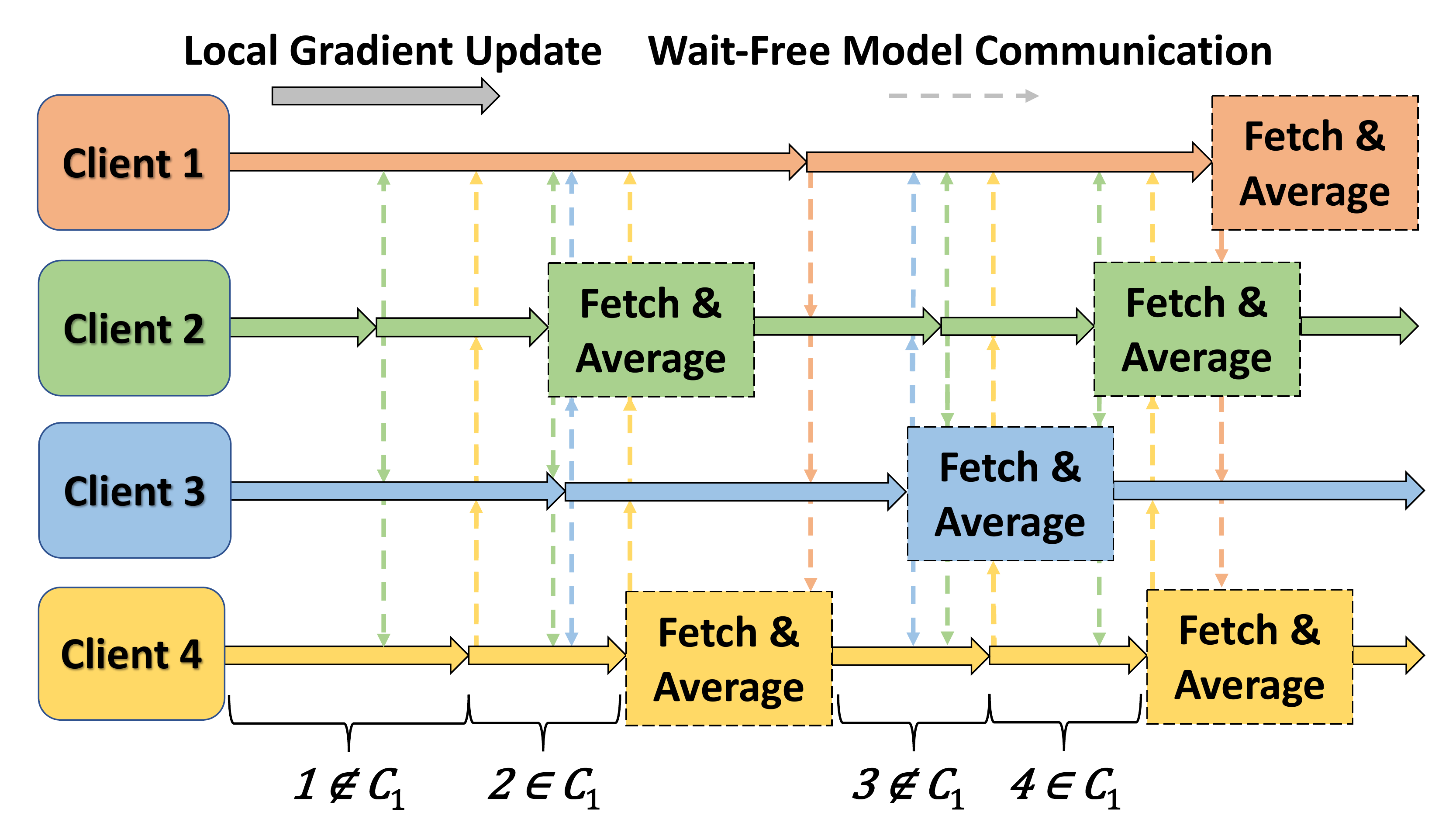}
\vspace{-0.75em}
\caption{\ouralgo schematic with $\mathcal{C}_s =\mathcal{C}_1$ (i.e., clients communicate every two local update steps).}\label{fig:swift_diagram}
\vspace{-1em}
\end{figure}

\textbf{Active Clients, Asynchronous Iterations, and the Local-Model Matrix.} \quad
Each time a client finishes a pass through steps \textbf{(1)}-\textbf{(5)}, one global iteration is performed. 
Thus, the global iteration $t$ is increased after the completion of \textit{any} client's averaging and local gradient update.
The client that performs the $t$-th iteration is called the active client, and is designated as $i_t$ (Line 6 of Algorithm~\ref{alg:swift}). \textit{There is only one active client per global iteration}. 
All other client models remain unchanged during the $t$-th iteration (Line 16 of Algorithm~\ref{alg:swift}).
In synchronous algorithms, the global iteration $t$ increases \textit{only after all clients finish an update}.
\ouralgo, which is asynchronous, increases the global iteration $t$ after \textit{any} client finishes an update. 
In our analysis, we define local-model matrix $X^t \in \R^{d \times n}$ as the concatenation of all local client models at iteration $t$ for ease of notation,
\begin{equation}
    \label{eq:model-matrix}
    \setlength\abovedisplayskip{2pt}
    \setlength\belowdisplayskip{2pt}
    X^t := [x_1^t, \ldots, x_n^t]\in \R^{d \times n}.
\end{equation}
Inspired by PA-SGD~\citep{wang2018cooperative}, \ouralgo handles multiple local gradient steps before averaging models amongst neighboring clients (Line 10 of Algorithm~\ref{alg:swift}). 
Periodic averaging for \ouralgo, governed by a dynamic client-communication matrix, is detailed below.

\begin{algorithm}[!htbp]
 \caption{\textbf{S}hared \textbf{W}a\textbf{I}t-\textbf{F}ree \textbf{T}ransmission (SWIFT)}
 \label{alg:swift}
 \DontPrintSemicolon
 \everypar={\nl}
\Input{Vertex set $\mathcal{V}$, Total steps $T$, Step-size $\gamma$, Client Influence Vector $p$, Distributions of client data $\mathcal{D}_i$, Communication set $\mathcal{C}_s$, Batch size $M$, Loss function $\ell$, and Initial model $x^0$}
\Output{Consensus model $\frac{1}{n}\sum_{i=1}^n x_i^T$}
Initialize each client's local update counter $c_i \leftarrow 1, \; \forall i \in \mathcal{V}$\;
Obtain each client's new communication vector $w_{i}^t$ using Algorithm~\ref{alg:ccs}\;
 \For{$t = 1,\ldots, T$}{
    \If{network topology changes}{
    Renew each client's communication vector $w_{i}^t$ using Algorithm~\ref{alg:ccs}\;
    }
 Randomly \textbf{\underline{select an active client}} $i_t$ according to Client Influence Probability Vector $p$\;
 \textbf{\underline{Broadcast active client's model}} $x^{t}_{i_t}$ to all its neighbors $\{k \; | \; w^t_{i_t,k} \neq 0, k \neq i_t\}$\;
 Sample a batch of active client $i_t$'s local data $\xi^{t}_{i_t} := \{\xi^{t}_{i_t, m} \}_{m=1}^M$ from distribution $\mathcal{D}_{i_t}$\;
 \textbf{\underline{Compute the gradient update}}: $g(x^t_{i_t}, \xi^{t}_{i_t}) \gets \frac{1}{M}\sum_{m=1}^M \nabla \ell(x^t_{i_t}, \xi^{t}_{i_t, m})$\; 
  \If{current step falls in the predefined communication set, i.e., $c_{i_t} \in \mathcal{C}_s$}
  {
  \textbf{\underline{Fetch and store the latest models}} $\{x_k^{t}\}$ from $i_t$'s neighbors $\{k \; | \; w^t_{i_t,k} \neq 0, k \neq i_t\}$\;
  \textbf{\underline{Model average for the active client}}: $x^{t+1/2}_{i_t} \gets \sum_{k} w^t_{i_t,k} x_k^{t} + w^t_{i_t,i_t} x_{i_t}^{t}$\;
  }
  \Else{
    Active client model remains the same: $x^{t+1/2}_{i_t} \gets x^{t}_{i_t}$
  }
  \textbf{\underline{Model update for the active client}}: $x^{t+1}_{i_t} \gets x^{t+1/2}_{i_t} - \gamma g(x^t_{i_t}; \xi^{t}_{i_t})$\;
  Update other clients: $x^{t+1}_j \gets x^{t}_j, \quad \forall \; j \neq i_t$\;
  Update the active client's counter $c_{i_t} \leftarrow c_{i_t} + 1$\;
  }
\end{algorithm}

\textbf{Wait Free.} \quad 
The backbone of \ouralgo is its wait-free structure.
Unlike any other decentralized FL algorithms, \ouralgo does not require simultaneous averaging between two clients or a neighborhood of clients.
Instead, each client fetches the latest models its neighbors have sent it and performs averaging with those available (Lines 11-12 of Algorithm \ref{alg:swift}).
There is no pause in local training waiting for a neighboring client to finish computations or average with, making \ouralgo wait-free.

\textbf{Update Rule.} \quad 
\ouralgo runs in parallel with all clients performing local gradient updates and model communication simultaneously.
Collectively, the update rule can be written in matrix form as 
\begin{equation}
    \label{eq:swift-update}
    \setlength\abovedisplayskip{2pt}
    \setlength\belowdisplayskip{2pt}
    X^{t+1} =  X^t W^t_{i_t} - \gamma G(x^t_{i_t}, \xi^{t}_{i_t}), 
\end{equation}
where $\gamma$ denotes the step size parameter and the matrix $G(x^t_{i_t}, \xi^{t}_{i_t}) \in \R^{d \times n}$ is the zero-padded gradient of the active model $x^t_{i_t}$. 
The entries of $G(x^t_{i_t}, \xi^{t}_{i_t})$ are zero except for the $i_t$-th column, which contains the active gradient $g(x^t_{i_t}, \xi^{t}_{i_t})$. 
Next, we describe the client-communication matrix $W^t_{i_t}$. 
%

\textbf{Client-Communication Matrix.}  \quad 
The backbone of decentralized FL algorithms is the client-communication matrix $W$ (also known as the weighting matrix). 
To remove all forms of synchronization and to become wait-free, \ouralgo relies upon a novel client-communication matrix $W^t_{i_t}$ that is neither symmetric nor doubly-stochastic, unlike other algorithms in FL \citep{wang2018cooperative, lian2018asynchronous, li2019communication, koloskova2020unified}.
The result of a non-symmetric and non-doubly stochastic client-communication matrix, is that averaging occurs for a single active client $i_t$ and not over a pair or neighborhood of clients.
This curbs superfluous communication time.

Within \ouralgo, a dynamic client-communication matrix is implemented to allow for periodic averaging. 
We will now define the \textit{active client-communication matrix} $W^t_{i_t}$ in \ouralgo, where $i_t$ is the active client which performs the $t$-th global iteration. $W^t_{i_t}$ can be one of two forms: (1) an identity matrix $W^t_{i_t} = I_n$ if $c_{i_t} \notin \mathcal{C}_s$ or (2) a communication matrix if $c_{i_t} \in \mathcal{C}_s$ with structure,
\begin{align}
    \label{eq:weight-matrix}
    \setlength\abovedisplayskip{5pt}
    \setlength\belowdisplayskip{0pt}
    W^t_{i_t} := I_n + (w_{i_t}^t - \bm{e}_{i_t})  \bm{e}_{i_t}^\intercal, \;
    w_{i_t}^t := [w_{1, i_t}^t, \ldots,  w_{n, i_t}^t]^\intercal \in \R^n, \; \sum_{j=1}^n w_{j,i} = 1, \; w_{i,i} \geq 1/n \; \forall i.
\end{align}
The vector $w_{i_t}^t \in \R^n$  denotes \textit{the active client-communication vector} at iteration $t$, which contains the communication coefficients between client $i_t$ and all clients (including itself).
The client-communication coefficients induce a weighted average of local neighboring models. We note that $w_{i_t}^t$ is often sparse because clients are connected to few other clients only in most decentralized settings.

\begin{algorithm}[t]
 \caption{Communication Coefficient Selection (CCS)}
 \label{alg:ccs}
 \DontPrintSemicolon
 \everypar={\nl}
\Input{Client Influence Score (CIS) $p_i\in \R$, Client Degree $d_i \in \R$, \\
Client Neighbor Set $J_i=\{{\forall j: \text{client } j \text{ is a one-hop neighbor of client }i}\}, \forall i$\;}
\Output{Client-Communication Vector $w_i\in\R^{n}, \forall i\in[n]$\;}
\For{$i=1:n$ in parallel}{

    \If{CIS are non-uniform}{
    Initialize Client-Communication Vector $w_i =[w_{1,i}, w_{2,i},\cdots, w_{n,i}] \leftarrow (1/n)\bm{e}_i $\;
    }
    \Else{
    Initialize Client-Communication Vector $w_i =[w_{1,i}, w_{2,i},\cdots, w_{n,i}] \leftarrow \bm{0} $\;
    }
    Exchange CIS and degree with all neighbors\;
    Store 
    \textit{Neighbor CIS Vector} $P^J \gets [\{p_j\}_{j\in J_i}]\in \R^{d_i}$\;
    Construct Neighbor Subsets $J^{\mathsf{L}}$, $J^{\mathsf{SE}}$, $J^{\mathsf{E}}\subset J_i$ as subsets of $i$'s neighbors with degree larger than, no larger than and equal to $d_i$ respectively\;
    \For{$\forall j\in J^{\mathsf{L}}$}{
        \underline{\textbf{Wait to fetch}} $w_{j,i}$ from neighbor client $j$ with a degree larger than $d_i$\;}
    Determine the sum of the \textit{total coefficients assigned (TCA)} $s_w^i \leftarrow \sum_{m = 1}^n w_{m, i}$\;
    \If{$|J^{\mathsf{SE}}|>0$}{
        Determine the sum of all remaining neighbors' CIS $s_p^i \gets \sum_{j \in J^{\mathsf{SE}}} P^J_j$\;
        \If{$|J^{\mathsf{E}}|>0$}{
            Exchange $s_w^i$ and $s_p^i$ with all neighbors $j \in J^{\mathsf{E}}$, storing all exchanged $s_w^j, s_p^j$\;
            Store $s_w^* \leftarrow \max \{ s_w^i, s_w^j \}$ and $s_p^* \leftarrow \max \{s_p^i, s_p^j \} \; \ \forall j \in J^{\mathsf{E}}$\;
            Set $w_{j,i} \gets \frac{(1-s_w^*)P_j^J}{s_p^*} \; \forall j \in J^{\mathsf{E}}$\;
            Recompute $s_w^i = \sum_{m = 1}^n w_{m, i}$ and $s_p^i = \sum_{j \in \{J^{\mathsf{SE}} \cup i\} \setminus J^{\mathsf{E}}} P^J_j$\;
        }
        Set $w_{j,i} \gets w_{j,i} + \frac{(1-s_w^i)P^J_j}{s_p^i} \; \forall j \in \{J^{\mathsf{SE}} \cup i\} \setminus J^{\mathsf{E}}$ for all remaining neighbors\;
        \underline{\textbf{Send}} $w_{i,j} = \frac{(1-s_w^i) P^J_i}{s_p^i}$ to all waiting neighbors $j \in J^{\mathsf{SE}} \setminus J^{\mathsf{E}}$\;
    }
    \Else{
    $w_{i,i} = 1 - s_w^i$\;
    }
}
\end{algorithm}

\textbf{Novel Client-Communication Weight Selection.} \quad
While utilizing a non-symmetric and non-doubly-stochastic client-communication matrix decreases communication time, there are technical difficulties when it comes to guaranteeing the convergence.
One of the novelties of our work is that we carefully design a client-communication matrix $W^t_{i_t}$ such that it is symmetric and doubly-stochastic \textit{under expectation of all potential active clients $i_t$} and has diagonal values greater than or equal to $1/n$. 
Specifically, we can write 
\begin{equation}
    \label{eq:expected-matrix}
        \setlength\abovedisplayskip{0pt}
    \setlength\belowdisplayskip{3pt}
    \E_{i_t}\big[W^t_{i_t}\big] 
    = \sum_{i=1}^n p_i \big[ I_n + (w_{i}^t - \bm{e}_{i})  \bm{e}_{i}^\intercal \big] 
    = I_n + \sum_{i=1}^n p_i (w_{i}^t - \bm{e}_{i})   \bm{e}_{i}^\intercal =: \bar{W}^t,
\end{equation}
where, we denote $\bar{W}^t$ as the \textit{expected client-communication matrix} with the following form,
\begin{align}\label{eq:expected-matrix-expanded}
    [\bar{W}^t]_{i,i}=1 + p_i(w_{i,i}^t - 1), \ \ \text{and}\ \    [\bar{W}^t]_{i,j}=p_j w_{i,j}^t, \text{for} \ i\neq j. 
\end{align}
Note that $\bar{W}^t$ is column stochastic as the entries of any column sum to one. 
If we ensure that $\bar{W}^t$ is symmetric, then it will become doubly-stochastic. 
By \Eqref{eq:expected-matrix-expanded}, $\bar{W}^t$ becomes symmetric if,
\begin{equation}
    \label{eq:symmetric-condition}
    \setlength\abovedisplayskip{5pt}
    \setlength\belowdisplayskip{5pt}
    p_j w_{i,j}^t = p_i w_{j,i}^t \; \forall i,j \in \mathcal{V}.
\end{equation}
To achieve the symmetry of \Eqref{eq:symmetric-condition}, \ouralgo deploys a novel pre-processing algorithm: the Communication Coefficient Selection (CCS) Algorithm. 
Given any client-influence vector $p_i$, CCS determines all client-communication coefficients such that Equations \ref{eq:weight-matrix} and \ref{eq:symmetric-condition} hold for every global iteration $t$.
Unlike other algorithms, CCS focuses on the \textit{expected} client-communication matrix, ensuring its symmetry.
CCS only needs to run once, before running \ouralgo.
In the event that the underlying network topology changes, CCS can be run again during the middle of training.
Below, we detail how CCS guarantees Equations \ref{eq:weight-matrix} and \ref{eq:symmetric-condition} to hold.

The CCS Algorithm, presented in Algorithm \ref{alg:ccs}, is a waterfall method: clients receive coefficients from their larger-degree neighbors.
Every client runs CCS concurrently, with the following steps:
\newline
\textbf{(1)}  Receive coefficients from larger-degree neighbors. If the largest, or tied, skip to \textbf{(2)}.\\
\textbf{(2)} Calculate the total coefficients already assigned $s_w$ as well as the sum of the client influence scores for the unassigned clients $s_p$.\\
\textbf{(3)} Assign the leftover coefficients $1-s_w$ to the remaining unassigned neighbors (and self) in a manner proportional to each unassigned client $i$'s percentage of the leftover influence scores $p_i/s_p$.\\
\textbf{(4)} If tied with neighbors in degree size, ensure assigned coefficients won't sum to larger than one.\\
\textbf{(5)} Send coefficients to smaller-degree neighbors.

In Algorithm \ref{alg:ccs}, the terms $s_w$ and $s_p$ play a pivotal role in satisfying Equations \ref{eq:weight-matrix} and \ref{eq:symmetric-condition} respectively. 
\Eqref{eq:symmetric-condition} is satisfied by assigning weights amongst client $i$ and its neighbors $j$ as follows
\begin{equation}
    p_j \frac{(1-s_w)p_i}{s_p} = p_i\frac{(1-s_w)p_j}{s_p}.
\end{equation}
Assigning weights proportionally with respect to neighboring client influence scores and total neighbors ensures higher influence clients receive higher weighting during averaging.

\section{\ouralgo Theoretical Analysis}\label{sec:theory}
\vspace{-1em}

\textbf{Major Message from Theoretical Analysis.} As summarized in Table~\ref{tab:baseline-comp}, the efficiency and effectiveness of decentralized FL algorithms depend on both the iteration convergence rate and communication-time complexity; their product roughly approximates the total time for convergence. In this section, we will prove that \ouralgo improves the SOTA convergence time of decentralized FL as it obtains SOTA iteration convergence rate (Theorem~\ref{thm:convergence}) and outperforms SOTA communication-time complexity. 

Before presenting our theoretical results, we first detail standard assumptions \citep{kairouz2021advances} required for the analysis.
\begin{assumption}[$L$-smooth global and local objective functions]\label{assum:gradientlip}
$\norm{\nabla f(x) - \nabla f(y)} \leq L \norm{x-y}.$
\end{assumption}
\vspace{-0.5em}
\begin{assumption}[Unbiased stochastic gradient]\label{assum:unbiasedgradient}
$ \E_{\xi \sim \mathcal{D}_i} \big[ \nabla \ell (x;\xi) \big] = \nabla f_i(x)$ for each $i \in \mathcal{V}$.
\end{assumption}
\vspace{-0.5em}
\begin{assumption}[Bounded inter-client gradient variance]
    \label{assum:boundedvariance}
    The variance of the stochastic gradient is bounded for any $x$ with client $i$ sampled from probability vector $p$ and local client data $\xi$ sampled from $\mathcal{D}_i$. This implies there exist constants $\sigma, \zeta \geq 0$ (where $\zeta = 0$ in IID settings) such that: $
            \E_{i}\norm{\nabla f(x) - \nabla f_i(x)}^2 \leq \zeta^2 \; \forall i, \forall x
        $,
        $
            \E_{\xi \sim \mathcal{D}_i}\norm{\nabla f_i(x) - \nabla \ell (x;\xi)}^2 \leq \sigma^2, \forall x.
        $
\end{assumption}
As mentioned in Section \ref{sec:swift-algo}, the use of a non-symmetric, non-doubly-stochastic matrix $W^t_{i_t}$ causes issues in analysis. 
In Appendix~\ref{app:matrix-properties}, we discuss properties of stochastic matrices, including symmetric and doubly-stochastic matrices, and formally define $\rho_\nu$, a constant related to the connectivity of the network.
Utilizing our symmetric and doubly-stochastic expected client-communication matrix (constructed via Algorithm \ref{alg:ccs}), we reformulate \Eqref{eq:swift-update} by adding and subtracting out $\bar{W}^t$,
\begin{equation}
    \label{eq:altered-update}
    \setlength\abovedisplayskip{5pt}
    \setlength\belowdisplayskip{5pt}
    X^{t+1} =  X^t \bar{W}^t + X^t ( W^t_{i_t} - \bar{W}^t) - \gamma G(x^t_{i_t}, \xi^{t}_{i_t}).
\end{equation}
Next, we present our first main result in Lemma \ref{lem:client-error-bound}, establishing a client-communication error bound. 
\begin{lemma}[Client-Communication Error Bound]
    \label{lem:client-error-bound}
    Following Algorithm \ref{alg:ccs}, the product of the difference between the expected and actual client communication matrices is bounded as follows: 
    \begin{align}
        \E \sum_{j=0}^{t} \Big\|{G(x^j_{i_j}, \xi^{i_j}_{*,j}) \prod_{q=j+1}^{t} (W_{i_q}^q - \bar{W}^q)(\frac{\bm{1_n}}{n}-\bm{e}_i)}\Big\|^2 = \mathcal{O}\Big(\frac{\sigma^2}{M} + \E \sum_{j=0}^{t}\|{\nabla f_{i_j}(x^j_{i_j})}\|^2\Big).
    \end{align}
\end{lemma}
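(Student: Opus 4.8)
The plan is to exploit the rank-one structure of the zero-padded gradient and reduce the matrix product to a scalar whose second moment decays geometrically. Write $M_q := W^q_{i_q} - \bar{W}^q$ for the centered communication matrices and $v_i := \frac{\bm{1}_n}{n} - \bm{e}_i$. Since $G(x^j_{i_j}, \xi^{i_j}_{*,j}) = g(x^j_{i_j},\xi^j_{i_j})\,\bm{e}_{i_j}^\intercal$ has only its $i_j$-th column nonzero, the summand factors exactly as $\norm{G(x^j_{i_j},\xi^{i_j}_{*,j})\prod_{q=j+1}^t M_q\, v_i}^2 = \norm{g(x^j_{i_j},\xi^j_{i_j})}^2\,\big|\bm{e}_{i_j}^\intercal \prod_{q=j+1}^t M_q\, v_i\big|^2$. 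Applying Cauchy--Schwarz to the scalar (using $\norm{\bm{e}_{i_j}}=1$) bounds the second factor by $\norm{\prod_{q=j+1}^t M_q\, v_i}^2$, which decouples the gradient magnitude from the communication product.

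Next I would establish geometric decay of $\E\,\norm{\prod_{q=j+1}^t M_q v_i}^2$. Two structural facts drive this: by \Eqref{eq:expected-matrix} each factor is centered, $\E_{i_q}[M_q]=0$, and because both $W^q_{i_q}$ and $\bar{W}^q$ are column-stochastic we have $\bm{1}_n^\intercal M_q = 0$, so every partial product applied to $v_i$ stays in the subspace orthogonal to $\bm{1}_n$ (note $\bm{1}_n^\intercal v_i = 0$). Invoking the one-step second-moment contraction attached to the connectivity constant $\rho_\nu$ from Appendix~\ref{app:matrix-properties}, namely $\E\,\norm{M_q y}^2 \le \rho_\nu \norm{y}^2$ for $y \perp \bm{1}_n$ with $\rho_\nu<1$, I would peel the factors one at a time: conditioning on the remaining (later-iteration) matrices and averaging over the fresh selection $i_q$, each peeled factor contributes a multiplicative $\rho_\nu$. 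Iterating over the $t-j$ factors and using $\norm{v_i}^2 = 1 - 1/n \le 1$ gives $\E\,\norm{\prod_{q=j+1}^t M_q v_i}^2 \le \rho_\nu^{\,t-j}$.

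It then remains to control the gradient magnitudes and sum. Assumption~\ref{assum:boundedvariance} together with a mini-batch of size $M$ yields $\E\,\norm{g(x^j_{i_j},\xi^j_{i_j})}^2 \le \frac{\sigma^2}{M} + \E\,\norm{\nabla f_{i_j}(x^j_{i_j})}^2$, separating the irreducible sampling variance from the true local gradient. Combining with the geometric bound and taking total expectation, the left-hand side is at most $\sum_{j=0}^t \rho_\nu^{\,t-j}\big(\frac{\sigma^2}{M} + \E\,\norm{\nabla f_{i_j}(x^j_{i_j})}^2\big)$. Summing the geometric weights on the variance term gives $\frac{1}{1-\rho_\nu}\cdot\frac{\sigma^2}{M}$, while bounding $\rho_\nu^{\,t-j}\le 1$ on the gradient term leaves $\E\sum_{j=0}^t\norm{\nabla f_{i_j}(x^j_{i_j})}^2$; together these are exactly $\mathcal{O}\big(\frac{\sigma^2}{M} + \E\sum_{j=0}^t\norm{\nabla f_{i_j}(x^j_{i_j})}^2\big)$.

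I expect the main obstacle to be the geometric-decay step: establishing the contraction constant $\rho_\nu<1$ for a matrix that is only symmetric and doubly-stochastic \emph{in expectation} (not per-iteration), and --- more delicately --- justifying the factor-by-factor peeling when the per-client communication schedule couples $M_q$ to the history through the counter $c_{i_q}$. Handling this requires treating $\{M_q\}$ as conditionally centered increments with respect to the natural filtration and arguing that, conditioned on the schedule position, each factor behaves like an independent draw over the active client $i_q\sim p$, so the scalar contraction may be applied repeatedly while the cross-terms (which vanish by $\E_{i_q}[M_q]=0$) never reintroduce growth.
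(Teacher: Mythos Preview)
Your proposal has a genuine gap in the geometric-decay step. You invoke a ``one-step second-moment contraction attached to the connectivity constant $\rho_\nu$ from Appendix~\ref{app:matrix-properties},'' claiming $\E\norm{M_q y}^2 \le \rho_\nu \norm{y}^2$ for $y\perp\bm{1}_n$ with $\rho_\nu<1$. But no such contraction is stated or proved anywhere in the paper, and the constant $\rho_\nu$ defined in \Eqref{eq:rho_nu} is $\rho_\nu=\frac{n-1}{n}\big(\tfrac{7}{2(1-\rho)}+\tfrac{\sqrt{\rho}}{(1-\sqrt{\rho})^2}+\tfrac{384}{1-\nu^2}\big)\geq 775/4$ (see the proof of Lemma~\ref{lem:z}); it is emphatically \emph{not} a contraction factor less than one. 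More substantively, $W_{i_q}^q$ is only column-stochastic, so its operator norm need not be at most $1$, and there is no reason $\E_{i_q}\norm{(W_{i_q}^q-\bar{W}^q)y}^2$ should contract on $\bm{1}_n^\perp$. Without that inequality, the peeling argument produces no decay, and the rest of the proof collapses.

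The paper takes a different route that avoids any second-moment contraction of $M_q$. It inserts and removes $\phi\bm{1}^\intercal$ (with $\phi$ the limiting stochastic vector) to split the product into two pieces. One piece is $G\prod_q(W_{i_q}^q-\phi\bm{1}^\intercal)(\frac{\bm{1}_n}{n}-\bm{e}_i)$, which is controlled \emph{deterministically} by Lemma~\ref{lem:stochastic-bound} (Nedic--Olshevsky): because each $W_{i_q}^q$ is column-stochastic with diagonal $\geq 1/n$ on a strongly connected graph, every entry of $\prod_q W_{i_q}^q-\phi\bm{1}^\intercal$ is at most $4\nu^{t-j-1}$, giving the $\nu^{2(t-j)}$ decay. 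The other piece reduces to $B_1$ and is handled via Lemma~\ref{lem:matrix-bound} using the symmetry and double-stochasticity of $\bar{W}^q$, yielding $\rho^{t-j}$ decay. The crucial advantage is that the Nedic--Olshevsky bound holds for \emph{any} sequence of such column-stochastic matrices, so no conditional-independence or martingale peeling is needed; the dependence issues you flag as obstacles simply do not arise.
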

\textbf{Remark.} 
One novelty of our work is that we are the first to bound the client-communication error in the asynchronous decentralized FL setting. The upper bound in Lemma \ref{lem:client-error-bound} is unique to our analysis because other decentralized works do not incorporate wait-free communication \citep{lian2017can, li2019communication, wang2018cooperative, lian2018asynchronous}. 
Now, we are ready to present our main theorem, which establishes the convergence rate of \ouralgo:

\begin{theorem}[Convergence Rate of \ouralgo]
\label{thm:convergence}
Under assumptions~\ref{assum:gradientlip}, \ref{assum:unbiasedgradient} and \ref{assum:boundedvariance} (with Algorithm \ref{alg:ccs}), let $\Delta_f := f(\bar{x}^0) - f(\bar{x}^*)$, step-size $\gamma$, total iteration $T$, and average model $\bar{x}^t$ be defined as
\begin{align*}
\gamma := \frac{\sqrt{Mn^2 \Delta_f}}{\sqrt{TL}+\sqrt{M}}\leq  \sqrt{\frac{Mn^2\Delta_f}{TL}}, \quad T \geq 193^2 LM\Delta_f \rho_\nu^2 n^4 p_{max}^2, \quad \bar{x}^t := \frac{1}{n} \sum_{i=1}^n x^t_i.
\end{align*}
Then, for the output of Algorithm \ref{alg:swift}, it holds that 
\begin{equation}
\label{eq:convergence-result}
\frac{1}{T}\sum_{t=0}^{T-1}\E \norm{\nabla f(\bar{x}^{t})}^2 \leq  \frac{2\sqrt{\Delta_f}}{T} + \frac{2\sqrt{L \Delta_f}\bigg(1 + \frac{1921}{20}\rho_\nu \big( \sigma^2 + 6\zeta^2 \sqrt{M} \big) \bigg)}{\sqrt{TM}}.
\end{equation}
\end{theorem}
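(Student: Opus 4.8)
The plan is to run the standard descent analysis on the network-averaged iterate $\bar{x}^t = \frac{1}{n}\sum_i x_i^t = \frac{1}{n}X^t\bm{1}_n$, but with two extra complications handled separately: the averaging is driven by a matrix that is only doubly-stochastic in expectation, and only one client updates per iteration. First I would compute how $\bar{x}^t$ evolves: right-multiplying \Eqref{eq:swift-update} by $\frac{1}{n}\bm{1}_n$ and using the rank-one structure of $W^t_{i_t}$ in \Eqref{eq:weight-matrix} gives $\bar{x}^{t+1} = \bar{x}^t + \frac{1}{n}X^t(w^t_{i_t}-\bm{e}_{i_t}) - \frac{\gamma}{n}g(x^t_{i_t},\xi^t_{i_t})$. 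Taking $\E_{i_t}$ and using that $\bar{W}^t$ is doubly-stochastic (so $\sum_i p_i(w^t_i-\bm{e}_i)=(\bar{W}^t-I_n)\bm{1}_n=0$), the mixing term vanishes in expectation, leaving $\E_{i_t}[\bar{x}^{t+1}]=\bar{x}^t-\frac{\gamma}{n}\sum_i p_i\nabla f_i(x_i^t)$ after Assumption~\ref{assum:unbiasedgradient}. Thus the average follows a perturbed SGD step whose expected direction is $\sum_i p_i\nabla f_i(x_i^t)$.

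Next I would invoke $L$-smoothness (Assumption~\ref{assum:gradientlip}) in the descent form $\E f(\bar{x}^{t+1})\le \E f(\bar{x}^t)+\E\langle\nabla f(\bar{x}^t),\bar{x}^{t+1}-\bar{x}^t\rangle+\frac{L}{2}\E\norm{\bar{x}^{t+1}-\bar{x}^t}^2$ and bound the two right-hand terms. For the inner-product term I substitute the expected direction and apply the polarization identity to split it into a favorable $-\norm{\nabla f(\bar{x}^t)}^2$ piece (the quantity we want on the left), a $-\norm{\sum_i p_i\nabla f_i(x_i^t)}^2$ piece, and a residual $\norm{\sum_i p_i(\nabla f_i(\bar{x}^t)-\nabla f_i(x_i^t))}^2$ that is controlled by the consensus error $\sum_i p_i\norm{x_i^t-\bar{x}^t}^2$ through $L$-smoothness. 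For the second-moment term I separate the zero-mean mixing contribution from the gradient contribution and bound both using Assumption~\ref{assum:boundedvariance}, which introduces the $\sigma^2/M$ and $\zeta^2$ terms appearing in \Eqref{eq:convergence-result}; the $1/M$ comes from averaging $M$ samples in the mini-batch.

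The crux is bounding the accumulated consensus error $\sum_i\E\norm{x_i^t-\bar{x}^t}^2$, i.e. the deviation of $X^t(I_n-\frac{1}{n}\bm{1}_n\bm{1}_n^\intercal)$. Using \Eqref{eq:altered-update} and the fact that $\bar{W}^t$ fixes $\bm{1}_n$ on both sides, right-multiplying by $(I_n-\frac{1}{n}\bm{1}_n\bm{1}_n^\intercal)$ turns the $X^t\bar{W}^t$ term into a contraction governed by the spectral quantity $\rho_\nu$, while the $X^t(W^t_{i_t}-\bar{W}^t)$ term is the genuine wait-free perturbation. Unrolling the recursion produces exactly the telescoping products $\prod_{q=j+1}^t(W^q_{i_q}-\bar{W}^q)(\frac{\bm{1}_n}{n}-\bm{e}_i)$ acting on past gradients $G(x^j_{i_j},\cdot)$, whose squared norm is bounded by Lemma~\ref{lem:client-error-bound}. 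That lemma closes the loop by re-expressing this perturbation in terms of $\sigma^2/M$ and past gradient norms $\sum_j\E\norm{\nabla f_{i_j}(x^j_{i_j})}^2$, so the consensus error becomes a $\rho_\nu$-geometric aggregate of past stochastic-gradient energy rather than an uncontrolled quantity.

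Finally I would sum the descent inequality over $t=0,\dots,T-1$, telescope the function-value differences against $\Delta_f=f(\bar{x}^0)-f(\bar{x}^*)$, and substitute the consensus-error bound. This leaves $\frac{1}{T}\sum_t\E\norm{\nabla f(\bar{x}^t)}^2$ on the left together with a feedback term of the same form on the right whose coefficient is of order $\gamma^2 L^2\rho_\nu^2 n^2 p_{max}^2$; the lower bound $T\ge 193^2 LM\Delta_f\rho_\nu^2 n^4 p_{max}^2$, combined with the stated choice of $\gamma$, is precisely what forces this coefficient below $1/2$ so that it can be absorbed into the left-hand side. Inserting $\gamma=\sqrt{Mn^2\Delta_f}/(\sqrt{TL}+\sqrt{M})$ then collapses everything into the $\mathcal{O}(1/\sqrt{T})$ form of \Eqref{eq:convergence-result}. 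The main obstacle is this final closing of the self-referential loop: the consensus error depends on past gradients while the descent bound depends on the consensus error, so the argument only yields a clean rate once the step size (equivalently the $T$ threshold) is small enough to make the feedback contractive, and tracking the explicit constants ($193$, $\frac{1921}{20}$) through the $\rho_\nu$-geometric sums is the most delicate bookkeeping.
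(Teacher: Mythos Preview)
Your proposal is correct and matches the paper's overall architecture: descent lemma on $\bar{x}^t$, polarization of the inner product into $-\norm{\nabla f(\bar{x}^t)}^2$, $-\norm{\sum_i p_i\nabla f_i(x_i^t)}^2$, and a residual controlled by the consensus error; then a bound on $\sum_{t,i}p_i\norm{\bar{x}^t-x_i^t}^2$ by unrolling the recursion and invoking Lemma~\ref{lem:client-error-bound}; and finally closing the self-referential loop using the $T$ (equivalently $\gamma$) threshold.

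One noteworthy difference in execution: you propose to bound the quadratic term $\E\norm{\bar{x}^{t+1}-\bar{x}^t}^2$ by splitting it into a zero-mean mixing part and a gradient part and applying Assumption~\ref{assum:boundedvariance} directly. The paper instead exploits that only one client moves, so $\bar{x}^{t+1}-\bar{x}^t=\frac{1}{n}(x_{i_t}^{t+1}-x_{i_t}^t)$, and bounds this via the triangle inequality through $\norm{x_{i_t}^{t+1}-\bar{x}^{t+1}}^2+\norm{x_{i_t}^t-\bar{x}^t}^2$; the quadratic term is thus absorbed entirely into the consensus error, and the $\sigma^2/M$ and $\zeta^2$ constants enter only later, inside the consensus-error bound (through Lemmas~\ref{lem:gradient-bound} and~\ref{lem:outside-sum}). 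Your route is the more standard D-SGD-style argument and would also go through; the paper's route avoids having to bound the raw mixing perturbation $X^t(W^t_{i_t}-\bar{W}^t)\frac{\bm{1}_n}{n}$ separately at this stage, which is convenient since that term has no direct variance interpretation.
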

\vspace{-1em}

\textbf{Iteration Convergence Rate Remarks.} \quad
\textbf{(1)} We prove that \ouralgo obtains a $\mathcal{O}(1/\sqrt{T})$ iteration convergence rate, matching the optimal rate for SGD \citep{dekel2012optimal, ghadimi2013stochastic, lian2017can, lian2018asynchronous}.
\textbf{(2)} Unlike existing asynchronous decentralized SGD algorithms, \ouralgo's iteration convergence rate does not depend on the maximal bounded delay.
Instead, we bound any delays by taking the \textit{expectation} over the active client.
The probability of each client $i$ being the active client is simply its sampling probability $p_i$.
We therefore assume that each client $i$ is \textit{expected} to perform updates at its prescribed sampling probability rate $p_i$.
Clients which are often delayed in practice can be dealt with by lowering their inputted sampling probability.
\textbf{(3)} \ouralgo converges in fewer total iterations $T$ with respect to $n$ total clients compared to other asynchronous methods \citep{lian2018asynchronous} ($T = \Omega(n^4 p_{max}^2)$ in \ouralgo versus $T = \Omega(n^4)$ in AD-PSGD). 

\textbf{Communication-Time Complexity Remarks.} \quad
\textbf{(1)} Due to its asynchronous nature, \ouralgo achieves a communication-time complexity that relies only on each client's own communication time per round $\mathscr{C}_i$.
This improves upon synchronous decentralized SGD algorithms, which rely upon the communication time per round of the slowest neighboring client $\max_{j \in \mathcal{N}_i} \mathscr{C}_j$.
\textbf{(2)} Unlike AD-PSGD \citep{lian2018asynchronous}, which also achieves a communication-time complexity reliant on $\mathscr{C}_i$, \ouralgo incorporates periodic averaging which further reduces the communication complexity from $T$ rounds of communication to $|\mathcal{C}_s|$.
Furthermore, \ouralgo allows for entire neighborhood averaging, and not just one-to-one gossip averaging.
This increases neighborhood information sharing, improving model robustness and reducing model divergence.

\begin{corollary}[Convergence under Uniform Client Influence]
    In the common scenario where client influences are uniform, $p_i = 1/n \; \forall i \implies p_{max} = 1/n$, \ouralgo obtains convergence improvements:
    total iterations $T$ with respect to the number of total clients $n$ improves to $T = \Omega(n^2)$ as compared to $T = \Omega(n^4)$ for AD-PSGD under the same conditions.
\end{corollary}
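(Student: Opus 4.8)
The plan is to obtain this corollary as an immediate specialization of Theorem~\ref{thm:convergence}, which is already stated for an arbitrary client influence vector $p$. First I would invoke the admissibility condition on the number of iterations recorded in the theorem,
\begin{equation*}
T \geq 193^2 L M \Delta_f \rho_\nu^2 n^4 p_{max}^2,
\end{equation*}
which encodes the dependence $T = \Omega(n^4 p_{max}^2)$ once $L$, $M$, $\Delta_f$, and the connectivity constant $\rho_\nu$ are held fixed. The only quantity that changes under the uniform-influence hypothesis is $p_{max}$: since $p_i = 1/n$ for every client $i$, we have $p_{max} = \max_i p_i = 1/n$. Substituting this value collapses the $n^4 p_{max}^2$ factor to $n^4 \cdot n^{-2} = n^2$, so the requirement becomes $T \geq 193^2 L M \Delta_f \rho_\nu^2 n^2$, i.e. $T = \Omega(n^2)$, which is the claimed improvement.

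Next I would observe that the convergence guarantee of Equation~\ref{eq:convergence-result} itself continues to hold verbatim under this substitution, so no re-derivation of the rate is needed. This is because the right-hand side of Equation~\ref{eq:convergence-result} does not depend on $p$ explicitly; the influence vector enters Theorem~\ref{thm:convergence} only through the minimum-iteration condition on $T$. Relaxing that condition therefore merely widens the regime of $T$ for which the $\mathcal{O}(1/\sqrt{T})$ rate is in force, leaving the rate expression unchanged. To justify the stated comparison I would then cite the AD-PSGD requirement $T = \Omega(n^4)$ recorded in Remark (3) following the theorem; placing the two side by side under the identical uniform regime yields the quadratic improvement from $\Omega(n^4)$ to $\Omega(n^2)$.

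I do not expect any genuine obstacle, since the corollary is essentially a one-line substitution into an already-proved theorem. The single point requiring a word of care is that the asymptotic statements $\Omega(n^2)$ and $\Omega(n^4)$ adopt the standard convention of treating $L$, $M$, $\Delta_f$, and $\rho_\nu$ as fixed in $n$, so that the entire $n$-dependence is carried by the explicit $n^4 p_{max}^2$ factor; once this convention is made explicit, setting $p_{max} = 1/n$ delivers the result directly and the comparison with AD-PSGD follows.
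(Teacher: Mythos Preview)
Your proposal is correct and matches the paper's intended derivation: the corollary is stated in the paper without proof precisely because it is a one-line substitution of $p_{max}=1/n$ into the iteration lower bound $T \geq 193^2 LM\Delta_f \rho_\nu^2 n^4 p_{max}^2$ of Theorem~\ref{thm:convergence}, yielding $T=\Omega(n^2)$. Your additional observation that the rate in Equation~\ref{eq:convergence-result} does not depend on $p$ explicitly, together with the comparison to AD-PSGD via Remark~(3), is exactly the reasoning the paper relies on.
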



\section{Experiments}\label{sec:Experiments}
\vspace{-1em}
Below, we perform image classification experiments for a range of decentralized FL algorithms \cite{krizhevsky2009learning}.
We compare the results of \ouralgo to the following decentralized baselines:\\
$\bullet$ The most common synchronous decentralized FL algorithm: D-SGD \citep{lian2017can}.\\
$\bullet$ Synchronous decentralized FL communication reduction algorithms: PA-SGD \citep{wang2018cooperative} and LD-SGD \citep{li2019communication}.\\
$\bullet$ The most prominent asynchronous decentralized FL algorithm: AD-PSGD \citep{lian2018asynchronous}.

Finer details of the experimental setup are in Appendix~\ref{supp:sec:exp}.
Throughout our experiments we use two network topologies: standard ring and ring of cliques (ROC).
ROC-$x$C signifies a ring of cliques with $x$ clusters.
The ROC topology is more reflective of a realistic network, as networks usually have pockets of connected clients.
These topologies are visualized in Figures \ref{fig:ring-topology} and \ref{fig:network-topologies} respectively.

\vspace{-1em}
\subsection{Baseline Comparison}
\label{subsec:baseline}
\vspace{-1em}
To compare the performance of \ouralgo to all other algorithms listed above, we reproduce an experiment within \citep{lian2018asynchronous}.
With no working code for AD-PSGD to run on anything but an extreme supercomputing cluster (Section 5.1.2 of \citep{lian2018asynchronous}), reproducing this experiment allows us to compare the relative performance of \ouralgo to AD-PSGD.

\begin{figure}[!htbp]
\centering
\begin{subfigure}[b]{0.45\textwidth}
    \resizebox{\textwidth}{!}{\includegraphics[]{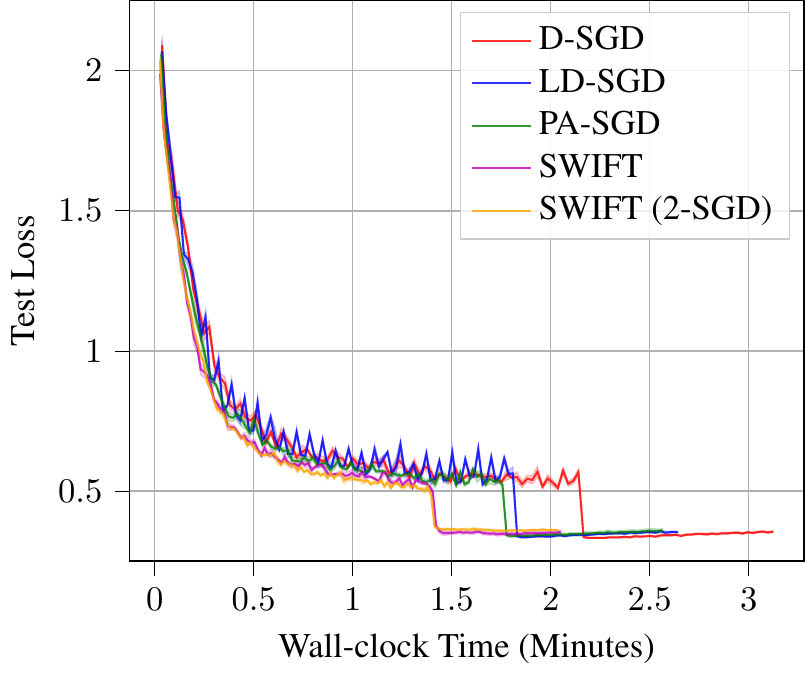}}
    \caption{Average test loss.}
    \label{fig:baseline-iid-ring-test}
\end{subfigure}
\hfill
\begin{subfigure}[b]{0.45\textwidth}
    \resizebox{\textwidth}{!}{\includegraphics[]{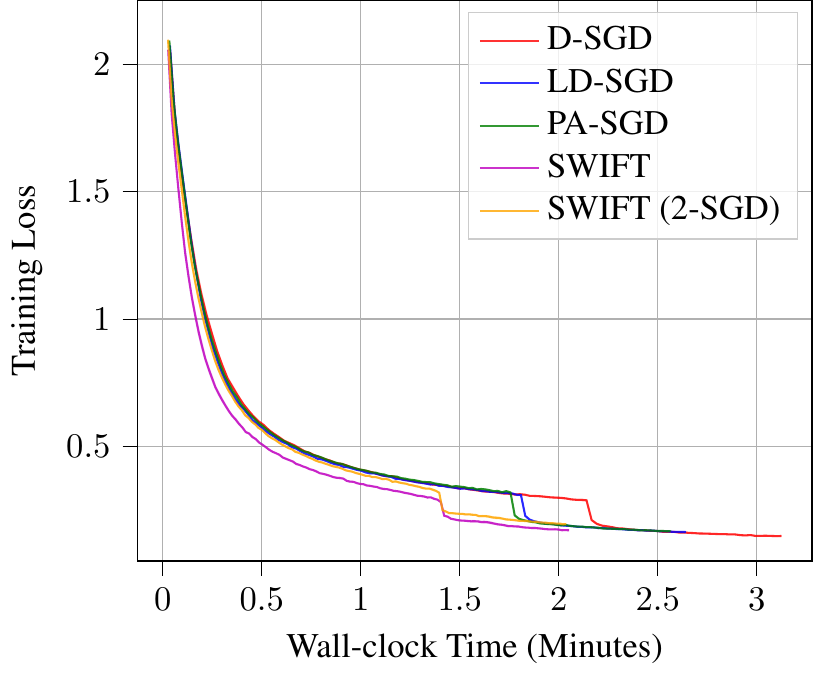}}
    \caption{Average train loss.}
    \label{fig:baseline-iid-ring-train}
\end{subfigure}
\caption{Baseline performance comparison on CIFAR-10 for 16 client ring.}
    \label{fig:swift-baseline}
\end{figure}
Table in Figure~\ref{tab:baseline} showcases that \ouralgo reduces the average epoch time, relative to D-SGD, by 35\% ($\mathcal{C}_0$ and $\mathcal{C}_1$).
This far outpaces AD-PSGD (as well as the other synchronous algorithms), with AD-PSGD only reducing the average epoch time by 16\% relative to D-SGD.
Finally, Figure~\ref{fig:swift-baseline} displays how much faster \ouralgo achieves optimal train and test loss values compared to other decentralized baseline algorithms. 
\ouralgo outperforms all other baseline algorithms even without any slow-down (which we examine in Section \ref{subsec:vary-speed}), where wait-free algorithms like \ouralgo especially shine.

\begin{table}
\centering
\begin{threeparttable}
\setlength\extrarowheight{2pt}
\begin{tabular}{|c|c|c|c|c|}
    \hline
    Decentralized FL &\multicolumn{4}{c|}{16 Client Ring} \\ 
    \cline{2-5} Algorithms & Epoch (s) & \% Change & Comm. (s) & \% Change \\
    \hline
    \textbf{\ouralgo} ($\mathcal{C}_0$ ) & \textbf{1.019} & -34.60 & \textbf{0.086} & -86.28 \\
    D-SGD ($\mathcal{C}_0$ ) & 1.558 & --- & 0.627 & ---- \\
    AD-PSGD$^*$ ($\mathcal{C}_0$ ) & --- & -15.86 & --- & --- \\
    \hline
    \textbf{\ouralgo} ($\mathcal{C}_1$) & \textbf{1.016} & -34.79&\textbf{0.064} & -89.79\\
    LD-SGD ($\mathcal{C}_1$) & 1.320 & -15.28 & 0.428 & -31.74 \\
    PA-SGD ($\mathcal{C}_1$) & 1.281 & -17.78 & 0.358 & -42.90 \\
    \hline
\end{tabular}
\begin{tablenotes}\footnotesize
\item[*] AD-PSGD results come from Table 4 in \citep{lian2018asynchronous}.
\end{tablenotes}
\end{threeparttable}
\caption{Baseline average epoch and communication times on CIFAR-10 for 16 client ring.}
\label{tab:baseline}
\end{table}

\vspace{-1em}
\subsection{Varying Heterogeneities}
\vspace{-1em}
\textbf{Varying Degrees of Non-IIDness}\label{subsec:vary-noniid} \quad Our second experiment evaluates \ouralgo's efficacy at converging to a well-performing optima under varying degrees of non-IIDness.
We vary the degree (percentage) of each client's data coming from one label.
The remaining percentage of data is randomly sampled (IID) data over all labels.
A ResNet-18 model is trained by 10 clients in a 3-cluster ROC network topology.
We chose 10 clients to make the label distribution process easier: CIFAR-10 has 10 labels.
\begin{table}[b]
\vspace{-0.5em}
    \centering
    \setlength\extrarowheight{2pt}
   \resizebox{0.7\textwidth}{!}{\begin{tabular}{|c|c|c|}
        \hline
        Decentralized FL &\multicolumn{2}{c|}{10 Client Ring of Cliques - 3 Cluster Topology}\\ 
        \cline{2-3} Algorithms & Epoch Time (s) & Communication Time (s) \\
        \hline
        \ouralgo ($\mathcal{C}_0$) & \textbf{1.709} & \textbf{0.197}\\
        D-SGD ($\mathcal{C}_0$) & 2.116 & 0.705 \\
        \hline
        \ouralgo ($\mathcal{C}_1$) & \textbf{1.517} & \textbf{0.110}\\
        LD-SGD ($\mathcal{C}_1$)& 1.973 & 0.575 \\
        PA-SGD ($\mathcal{C}_1$)& 1.929 & 0.421 \\
        \hline
    \end{tabular}}
    \caption{Average epoch and communication times for non-IID setting on CIFAR-10.}
    \label{tab:epochcommtime2}
     \vspace{-1em}
\end{table}
As expected, when data becomes more non-IID, the test loss becomes higher and the overall accuracy lower (Table~\ref{tab:epochcommtime2}).
We do see, however, that \ouralgo converges faster, and to a lower average loss, than all other synchronous baselines (Figure~\ref{fig:noniid-comparison-test}).
In fact, \ouralgo with $\mathcal{C}_1$ converges \textit{much} quicker than the synchronous algorithms.
This is an important result: \ouralgo converges both quicker and to a smaller loss than synchronous algorithms in the non-IID setting.


\begin{figure}[!htbp]
    \centering
    \begin{subfigure}[b]{0.45\textwidth}
    \resizebox{\textwidth}{!}{\includegraphics[]{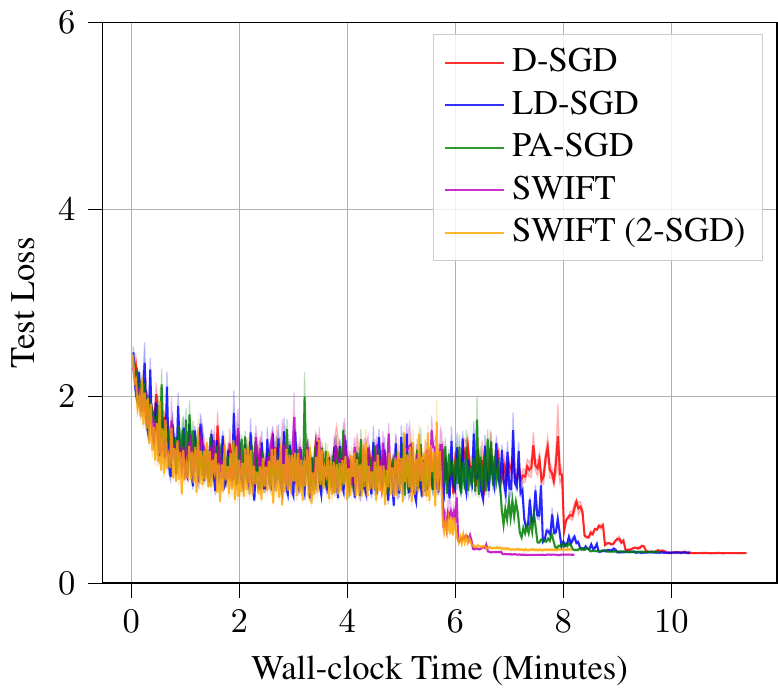}}
    \caption{1/4 degree non-IID data.}
    \label{fig:everything-noniid-0.25-train}
    \end{subfigure}
    \hfill
    \begin{subfigure}[b]{0.45\textwidth}
    \resizebox{\textwidth}{!}{\includegraphics[]{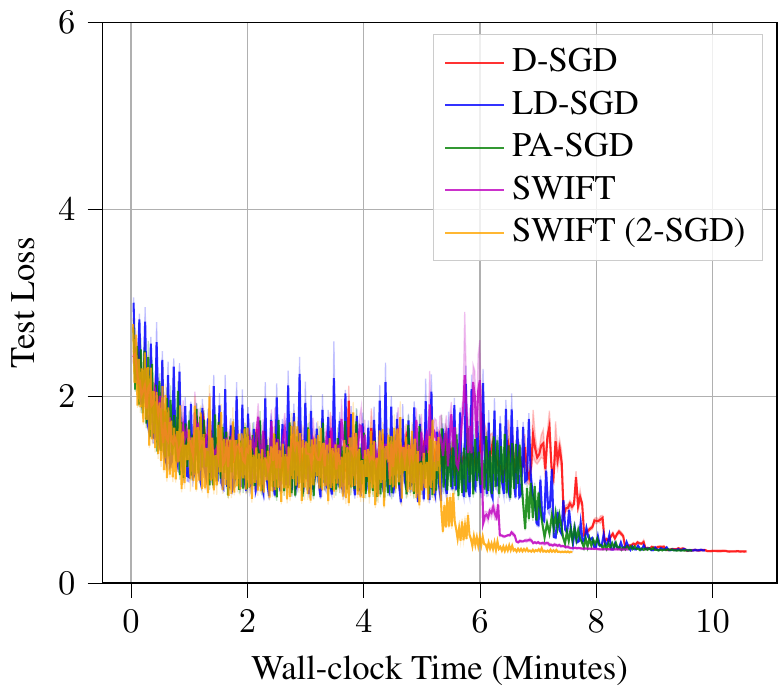}}
    \caption{1/2 degree non-IID data.}
    \label{fig:everything-noniid-0.5-train}
    \end{subfigure}
    \hfill
    \begin{subfigure}[b]{0.45\textwidth}
    \resizebox{\textwidth}{!}{\includegraphics[]{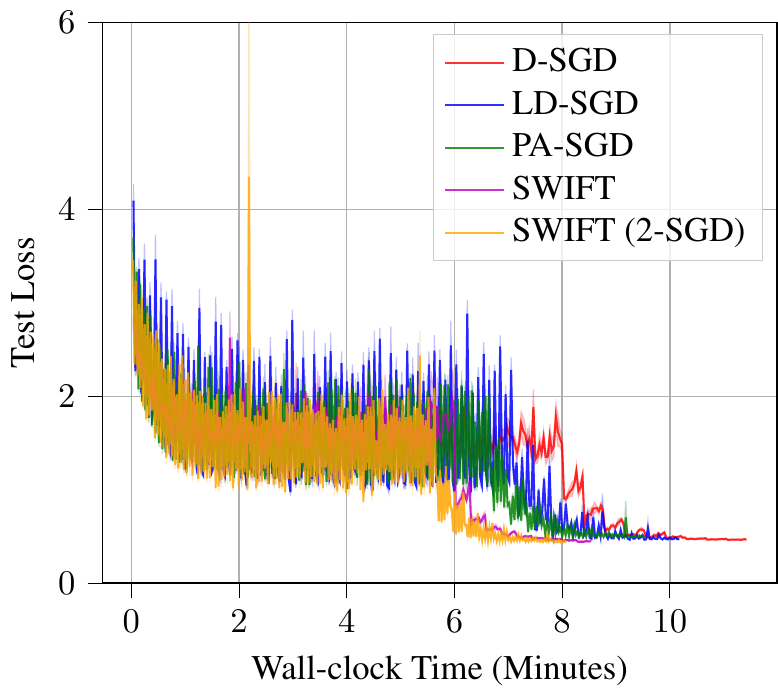}}
    \caption{7/10 degree non-IID data.}
    \label{fig:everything-noniid-0.7-train}
    \end{subfigure}
    \hfill
    \begin{subfigure}[b]{0.45\textwidth}
    \resizebox{\textwidth}{!}{\includegraphics[]{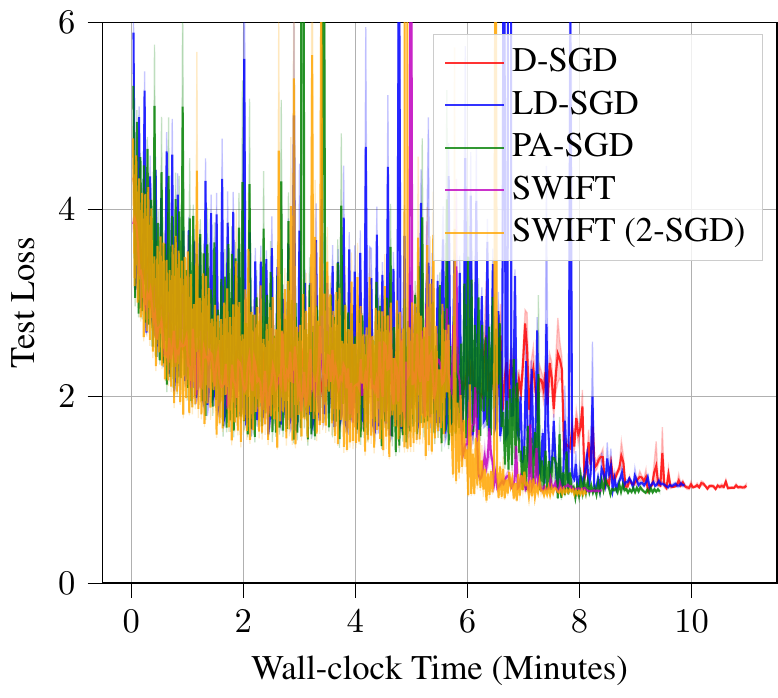}}
    \caption{9/10 degree non-IID data.}
    \label{fig:everything-noniid-0.9-train}
    \end{subfigure}
    \caption{Average test loss for varying degrees of non-IIDness on CIFAR-10, 10 client ROC-3C.}
    \label{fig:noniid-comparison-test}
\end{figure}

\textbf{Varying Heterogeneity of Clients}\label{subsec:vary-speed} \quad In this experiment, we investigate the performance of \ouralgo under varying heterogeneity, or speed, of our clients (causing different delays).
This is done with 16 clients in a ring topology.
We add an artificial slowdown, suspending execution of one of the clients such that it takes a certain amount of time longer (slowdown) to perform the computational portions of the training process.
We perform tests in the case where a client is two times (2x) and four times (4x) as slow as usual.
We then compare how the decentralized algorithms fare under these circumstances compared to the normal setting with no added slowdown.

\begin{table}[!htbp]
    \centering
    \setlength\extrarowheight{2pt}
    \resizebox{\textwidth}{!}{
    \begin{tabular}{|c|c|c|c|c|c|c|c|c|c|}
        \hline
        Decentralized FL &\multicolumn{3}{c|}{No Slowdown} &\multicolumn{3}{c|}{2x Slowdown} &\multicolumn{3}{c|}{4x Slowdown}\\ 
        \cline{2-10} Algorithms & Total (s) & Epoch (s) & Comm. (s)  & Total (s) & Epoch (s) & Comm. (s) & Total (s) & Epoch (s) & Comm. (s) \\
        \hline
        \textbf{\ouralgo} ($\mathcal{C}_0$) & \textbf{2.69}& \textbf{1.033} & \textbf{0.087} &\textbf{2.451} & \textbf{0.964} & \textbf{0.064} & \textbf{3.054} & \textbf{1.117} & \textbf{0.091} \\
        D-SGD ($\mathcal{C}_0$) & 3.110 & 1.45 & 0.564 & 3.972 & 1.571 & 0.616 &6.137 & 1.666 & 0.651\\
        \hline
        \textbf{\ouralgo} ($\mathcal{C}_1$) & \textbf{2.44} & \textbf{0.996} & \textbf{0.061} &\textbf{2.152} & \textbf{0.928} & \textbf{0.040} & \textbf{2.847} & \textbf{1.074}  & \textbf{0.065} \\
        LD-SGD ($\mathcal{C}_1$) & 3.323 & 1.353 & 0.413 &3.762 & 1.389 & 0.428 & 5.917& 1.412  & 0.448\\
        PA-SGD ($\mathcal{C}_1$) & 3.183 & 1.270 & 0.331 &3.557 & 1.262 & 0.309 &5.743 & 1.270 & 0.318\\
        \hline
    \end{tabular}
    }
    \caption{Average epoch and communication times on CIFAR-10 for 16 client ring with slowdown.}
    \label{tab:slowdown}
\end{table}

In Table~\ref{tab:slowdown}, the average epoch, communication, and total time is displayed. 
Average total time includes computations, communication, and any added slowdown (wait time).
\ouralgo avoids large average total times as the slowdown grows larger. 
The wait-free structure of \ouralgo allows all non-slowed clients to finish their work at their own speed.
All other algorithms require clients to wait for the slowest client to finish a mini-batch before proceeding.
At large slowdowns (4x), the average total time for \ouralgo is nearly \textit{half} of those for synchronous algorithms.
Thus, \ouralgo is very effective at reducing the run-time when clients are slow within the network.

\begin{figure}[!htbp]
     \centering
     \begin{subfigure}[b]{0.45\textwidth}
     {\resizebox{\textwidth}{!}{\includegraphics[]{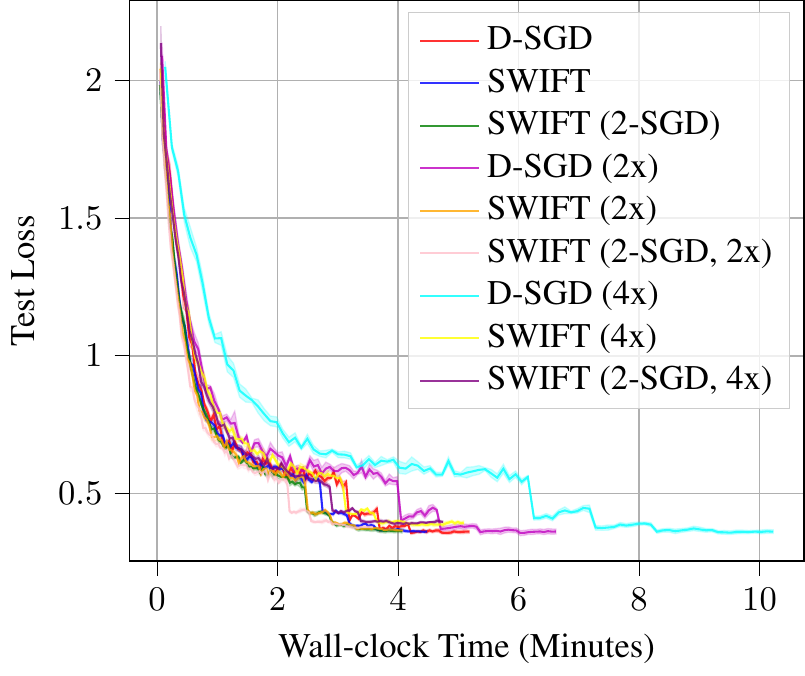}}}
     \caption{Test loss slowdown.}\label{fig:dsgd-slowdown-test}
     \end{subfigure}
     \begin{subfigure}[b]{0.45\textwidth}
     {\resizebox{\textwidth}{!}{\includegraphics[]{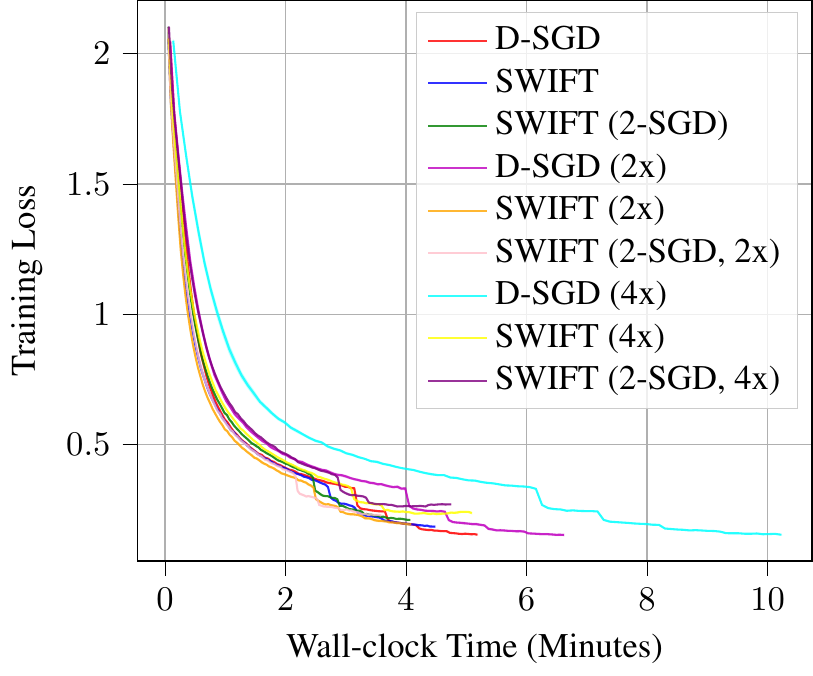}}}
     \caption{Train loss slowdown.}\label{fig:dsgd-slowdown-train}
     \end{subfigure}
     \vspace{-0.5em}
     \caption{\ouralgo vs. D-SGD for CIFAR-10 in 16 client ring with varying slowdown.}
     \label{fig:slowness-comparison-train}
\end{figure}

Figure~\ref{fig:slowness-comparison-train} shows how \ouralgo is able to converge faster to an equivalent, or smaller, test loss than D-SGD for all slowdowns.
In the case of large slowdowns (4x), \ouralgo significantly outperforms D-SGD, finishing in better than half the run-time.
We do not include the other baseline algorithms to avoid overcrowding of the plotting space.
However, \ouralgo also performs much better than PA-SGD and LD-SGD as shown in Table~\ref{tab:slowdown}.
These results show that the wait-free structure of \ouralgo allows it to be efficient under client slowdown.

\subsection{Varying Numbers of Clients \& Network Topologies}
\vspace{-1em}
\textbf{Varying Numbers of Clients}\label{subsec:vary-clients}\quad In our fourth experiment, we determine how \ouralgo performs versus other baseline algorithms as we vary the number of clients. 
In Table~\ref{tab:varying-client}, the time per epoch for \ouralgo drops by nearly the optimal factor of 2 as the number of clients is doubled. 
For all algorithms, there is a bit of parallel overhead when the number of clients is small, however this becomes minimal as the number of clients grow to be large (greater than 4 clients). 
In comparison to the synchronous algorithms, \ouralgo actually \textit{decreases} its communication time as the number of clients increases.
This allows the parallel performance to be quite efficient, as shown in Figure~\ref{fig:varying-clients}.
\begin{figure}[!htbp]
\hfil
\begin{subfigure}{0.45\textwidth}
    \includegraphics[width=\textwidth]{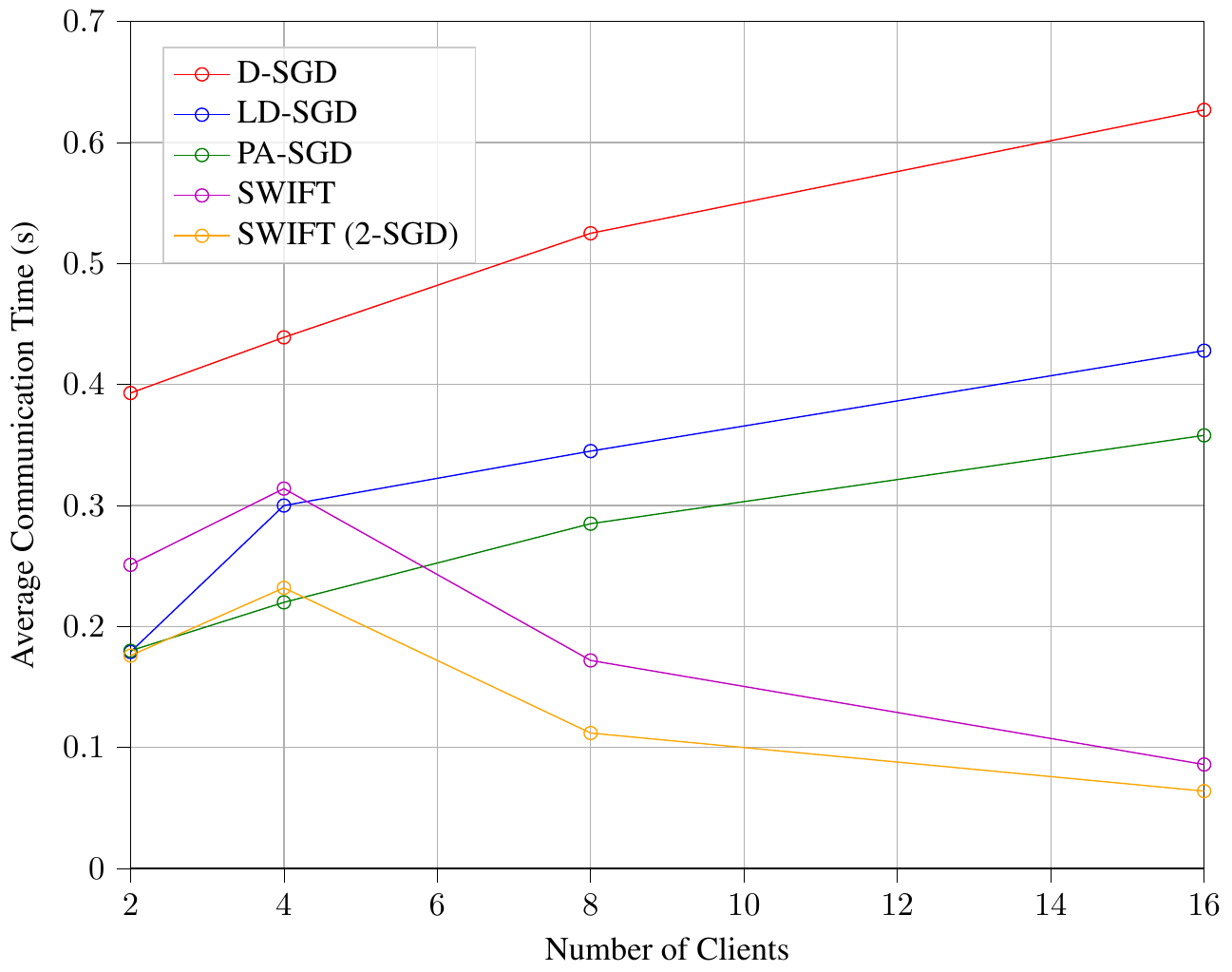}
    \caption{Communication time.}
    \label{fig:varying-comm}
\end{subfigure}
\hfil
\begin{subfigure}{0.45\textwidth}
    \includegraphics[width=\textwidth]{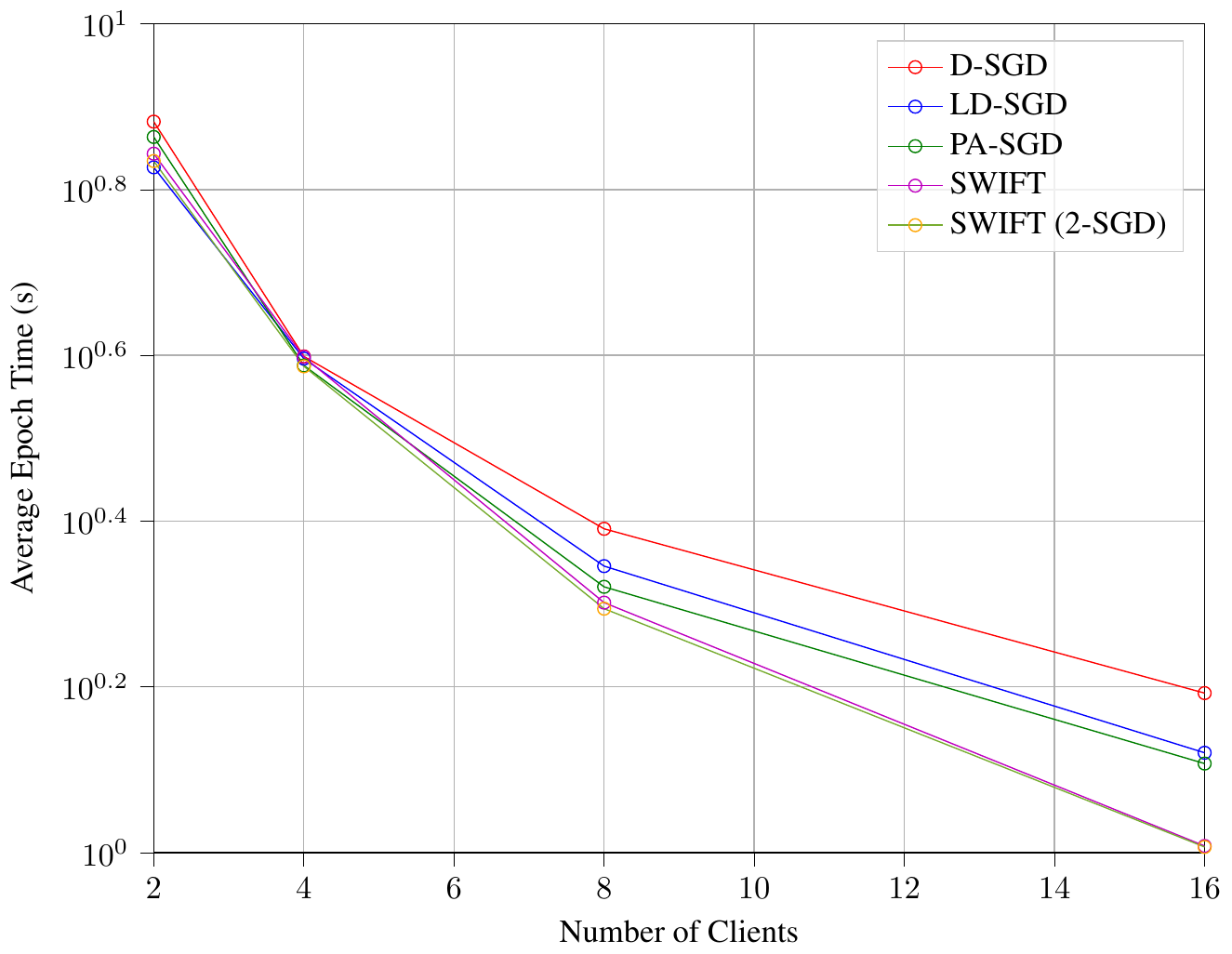}
    \caption{Epoch time.}
    \label{fig:varying-epoch}
\end{subfigure}
\hfil
\caption{Average communication and epoch times for increasing numbers of clients.}
    \label{fig:varying-clients}
\end{figure}

\begin{table}[!htbp]
    \centering
    \setlength\extrarowheight{2pt}
    \resizebox{\textwidth}{!}{
    \begin{tabular}{|c|c|c|c|c|c|c|c|c|}
        \hline
        Decentralized FL &\multicolumn{2}{c|}{16 Client Ring} &\multicolumn{2}{c|}{8 Client Ring} &\multicolumn{2}{c|}{4 Client Ring} &\multicolumn{2}{c|}{2 Client Ring}\\ 
        \cline{2-9} Algorithms & Epoch (s) & Comm. (s) & Epoch (s) & Comm. (s) & Epoch (s) & Comm. (s) & Epoch (s) & Comm. (s) \\
        \hline
        \textbf{\ouralgo} ($\mathcal{C}_0$) & \textbf{1.019} & \textbf{0.086} & \textbf{2.003} & \textbf{0.172} & \textbf{3.964} & \textbf{0.314} & \textbf{6.971} & \textbf{0.251}\\
        D-SGD ($\mathcal{C}_0$) & 1.558 & 0.627 & 2.459 & 0.525 & 3.970 & 0.439 & 7.62 & 0.393\\
        \hline
        \textbf{\ouralgo} ($\mathcal{C}_1$) & \textbf{1.016} & \textbf{0.064} & \textbf{1.970} & \textbf{0.112} & \textbf{3.862}  & 0.232 & 6.83 & \textbf{0.176} \\
        LD-SGD ($\mathcal{C}_1$) & 1.320 & 0.428 & 2.217 & 0.345 & 3.946  & 0.300 & \textbf{6.712} & 0.179\\
        PA-SGD ($\mathcal{C}_1$) & 1.281 & 0.358 & 2.093 & 0.285 & 3.871 & \textbf{0.220} & 7.303 & 0.180\\
        \hline
    \end{tabular}
    }
    \caption{Average epoch and communication times on CIFAR-10 with varying clients in ring topology.}
    \label{tab:varying-client}
\end{table}

\textbf{Varying Topologies}\label{subsec:vary-topology}\quad Our fifth experiment analyzes the effectiveness of \ouralgo versus other baseline decentralized algorithms under different, and more realistic, network topologies.
In this experiment setting, we train 16 clients on varying network topologies (Table~\ref{tab:epochcommtime} and Figure~\ref{fig:full-comparison}).

\begin{table}[!htbp]
    \centering
    \setlength\extrarowheight{2pt}
    \resizebox{\textwidth}{!}{
    \begin{tabular}{|c|c|c|c|c|c|c|}
        \hline
        Decentralized FL &\multicolumn{2}{c|}{16 Client ROC-2C} &\multicolumn{2}{c|}{16 Client ROC-4C} &\multicolumn{2}{c|}{16 Client Ring}\\ 
        \cline{2-7} Algorithms & Epoch (s) & Comm. (s) & Epoch (s) & Comm. (s) & Epoch (s) & Comm. (s) \\
        \hline
        \textbf{\ouralgo} ($\mathcal{C}_0$) & \textbf{1.793} & \textbf{0.416} & \textbf{1.291} & \textbf{0.124} & \textbf{1.367} & \textbf{0.121} \\
        D-SGD ($\mathcal{C}_0$) & 2.799 & 1.479 & 2.813 & 1.464 & 2.241 & 0.962 \\
        \hline
        \textbf{\ouralgo} ($\mathcal{C}_1$) & \textbf{1.611} & \textbf{0.295} & \textbf{1.494} & \textbf{0.174} & \textbf{1.348} & \textbf{0.085} \\
        LD-SGD ($\mathcal{C}_1$) & 2.408 & 0.987 & 2.525  & 1.105 & 2.172 & 0.517\\
        PA-SGD ($\mathcal{C}_1$) & 2.639 & 0.765 & 2.216 & 0.708 & 1.982 & 0.500 \\
        \hline
    \end{tabular}
    }
    \caption{Average epoch and communication times on CIFAR-10 for varying network topologies.}
    \label{tab:epochcommtime}
\end{table}

\begin{figure}[!htbp]
    \centering
    \begin{subfigure}[b]{0.32\textwidth}
    \resizebox{\columnwidth}{!}{\includegraphics[]{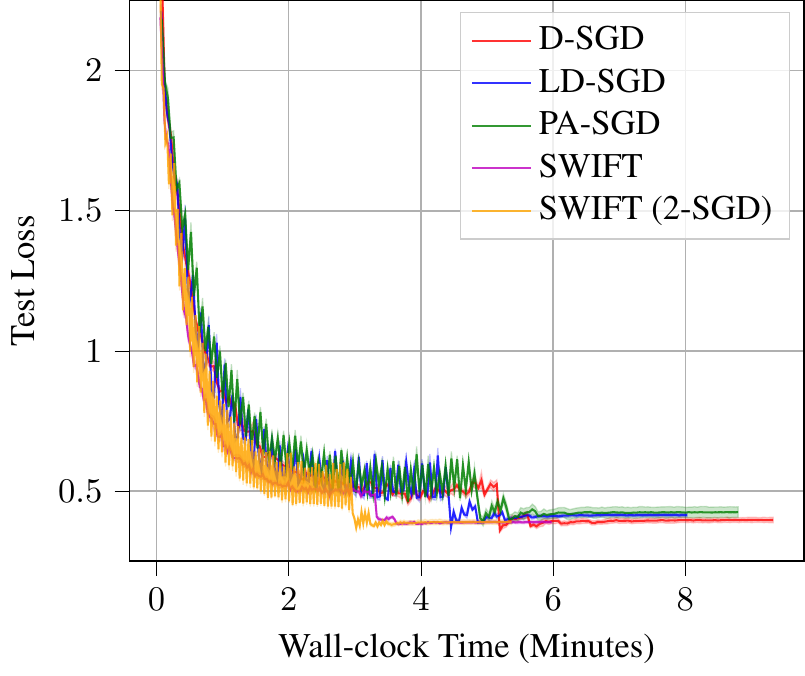}}
    \caption{ROC-2C ($n=16$).}\label{fig:roc2-varying}
    \end{subfigure}
    \hfil
    \begin{subfigure}[b]{0.32\textwidth}
    \resizebox{\columnwidth}{!}{\includegraphics[]{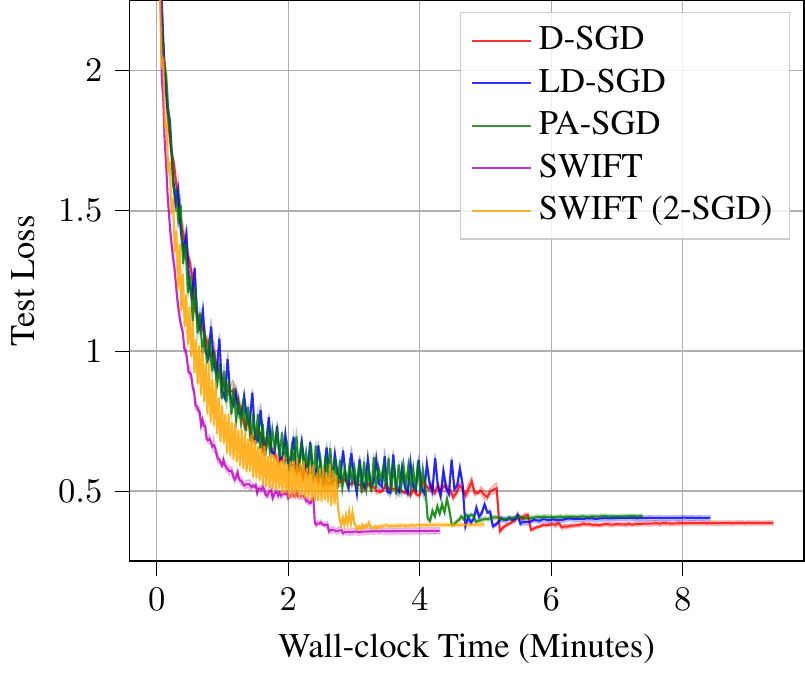}}
    \caption{ROC-4C ($n=16$).}\label{fig:roc4-varying}
    \end{subfigure}
    \hfil
    \begin{subfigure}[b]{0.32\textwidth}
    \resizebox{\columnwidth}{!}{\includegraphics[]{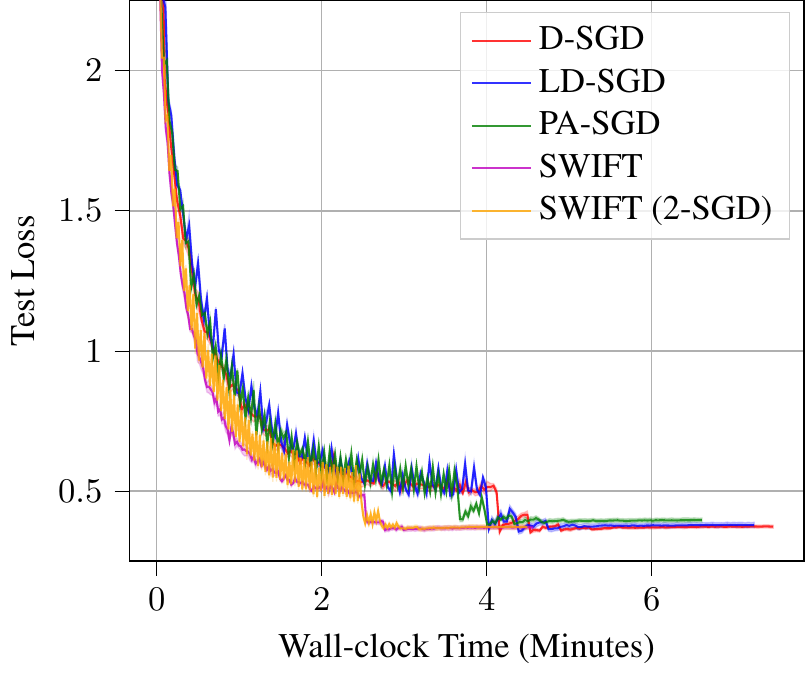}}
    \caption{Ring ($n=16$).}\label{fig:ring-varying}
    \end{subfigure}
    \caption{Average test loss for varying network topologies on CIFAR-10.}
    \label{fig:full-comparison}
\end{figure}
\section{Conclusion}\label{sec:conc}
\vspace{-1em}
\ouralgo delivers on the promises of decentralized FL: a low communication and run-time algorithm (\textbf{scalable}) which attains SOTA loss (\textbf{high-performing}).
As a wait-free algorithm, \ouralgo is well-suited to rapidly solve large-scale distributed optimization problems.
Empirically, \ouralgo reduces communication time by almost 90\% compared to baseline decentralized FL algorithms.
In future work, we aim to add protocols for selecting optimal client sampling probabilities.
We would like to show how varying these values can: (i) boost convergence both theoretically and empirically, and (ii) improve robustness under local client data distribution shift.

\subsubsection*{Acknowledgments}
Bornstein, Rabbani and Huang acknowledge support by the National Science Foundation NSF-IIS-FAI program, DOD-ONR-Office of Naval Research, DOD-DARPA-Defense Advanced Research Projects Agency Guaranteeing AI Robustness against Deception (GARD), Adobe, Capital One and JP Morgan faculty fellowships. 
Rabbani is additionally supported by NSF DGE-1632976.
Bedi acknowledges the support by  Army Cooperative Agreement W911NF2120076.
Bornstein additionally thanks Michael Blankenship for both helpful discussions regarding parallel code implementation and inspiration for \ouralgo's name.

\bibliography{bibliography}
\bibliographystyle{iclr2023_conference}

\clearpage
\newpage
\appendix
{\begin{center} \bf \LARGE
    Supplementary Material
    \end{center}
}

\section{Aditional Experimental  Details and Setup}\label{supp:sec:exp}

\subsection{Computational Specifications} We train our consensus model on a network of nodes. Each node has an NVIDIA GeForce RTX 2080 Ti GPU. All algorithms are built in Python and communicate via Open MPI, using MPI4Py (Python bindings for MPI). The training is also done in Python, leveraging Pytorch.

\subsection{Data Partitioning}

For all experiments, the training set is evenly partitioned amongst the number of clients training the consensus model. While the size of each client's partition is equal, we perform testing with data that is both (1) independent and identically distributed (iid) among all clients and (2) sorted by class and is thus non-iid. For the iid setting, each client is assigned data uniformly at random over all classes. In the non-iid setting, each client is assigned a subset of classes from which it will receive data exclusively. The $c$ classes are assigned in the following steps:\\
\textbf{(1)} The class subset size $n_c$, for all clients, is determined as the ceiling of the number of classes per client $n_c = \lceil \frac{c}{n} \rceil$. Each class $k$ within the class subset will take up $1/n_c$ of the client's total data partition if possible.\\
\textbf{(2)} The classes within each client's class subset are assigned cyclically, starting with the first client. The first client selects the first $n_c$ classes, the second client selects the next $n_c$ classes, and so on. Classes can, in some cases, be assigned to multiple clients. If the final class has been assigned, and more clients have yet to be assigned any classes, classes can be re-assigned starting at the first class.\\
\textbf{(3)} Each client is assigned data from the classes in their class subset cyclically ($1/n_c$ of its partition for each class), starting with the first client. If no more data is available from a specific class, the required data to fill its fraction of the partition is replaced by data from the next class.

Since we follow this data partitioning process within our experiments, each client is assigned equal partitions of data.
Therefore, following the works \citep{lian2018asynchronous,wang2019matcha,ye2022decentralized,li2019communication}, we set the client influence scores to be uniform for all clients $p_i = 1/n \; \forall i \in \mathcal{V}$.

\subsection{Experimental Setup}

Below we provide information into the hyperparameters we select for our experiments in Section \ref{sec:Experiments}.

\begin{table}[!htbp]
\resizebox{\textwidth}{!}{
\begin{tabular}{|c|c|c|c|c|c|c|c|}
\hline
Experiment Type & Model     & Epochs $E$ & $\gamma$ & $\gamma$ Decay (Rate, $E$, Freq.) & $M$ & Weight Decay & Momentum \\ \hline
Baseline    & ResNet-18 & 200        & 0.1      & (1/10, 81 \& 122, Single)         & 32  & $10^{-4}$    & 0.9               \\ \hline
Vary non-IIDness     & ResNet-18 & 300        & 0.8      & (1/2, 200, 10)                    & 32  & $10^{-4}$    & 0.9               \\ \hline
Vary Heterogeneity     & ResNet-18 & 100        & 0.1      & (1/2, 50, 10)                     & 32  & $10^{-4}$    & 0.9               \\ \hline
Vary \# of Clients   & ResNet-18 & 200        & 0.1      & (1/10, 81 \& 122, Single)         & 32  & $10^{-4}$    & 0.9               \\ \hline
Vary Topology    & ResNet-50 & 200        & 0.1      & (1/2, 100, 10)                    & 64  & $10^{-4}$    & 0.9               \\ \hline
\end{tabular}
}
\caption{Hyperparameters for all experiments.}
\label{tab:hyperparameters}
\end{table}

In Table~\ref{tab:hyperparameters}, one can see that the step-size decay column is split into the following sections: rate, $E$, and frequency.
The rate is the decay rate for the step-size.
For example, in the Baseline row, the step-size decays by 1/10.
The term $E$ is the epoch at which decay begins during training.
For example, in the Baseline row, the step-size decays at $E=81 \text{ and } 122$.
Frequency simply is how often the step-size decays.
For example, in the Vary Topology row, the step-size decays every 10 epochs.
\subsection{Network Topologies}

\textbf{Ring Topology.} Please refer to Figure~\ref{fig:ring-topology}.

\begin{figure}[!htbp]
\centering
\begin{minipage}{0.25\textwidth}
\begin{center}
    \includegraphics[width=\textwidth]{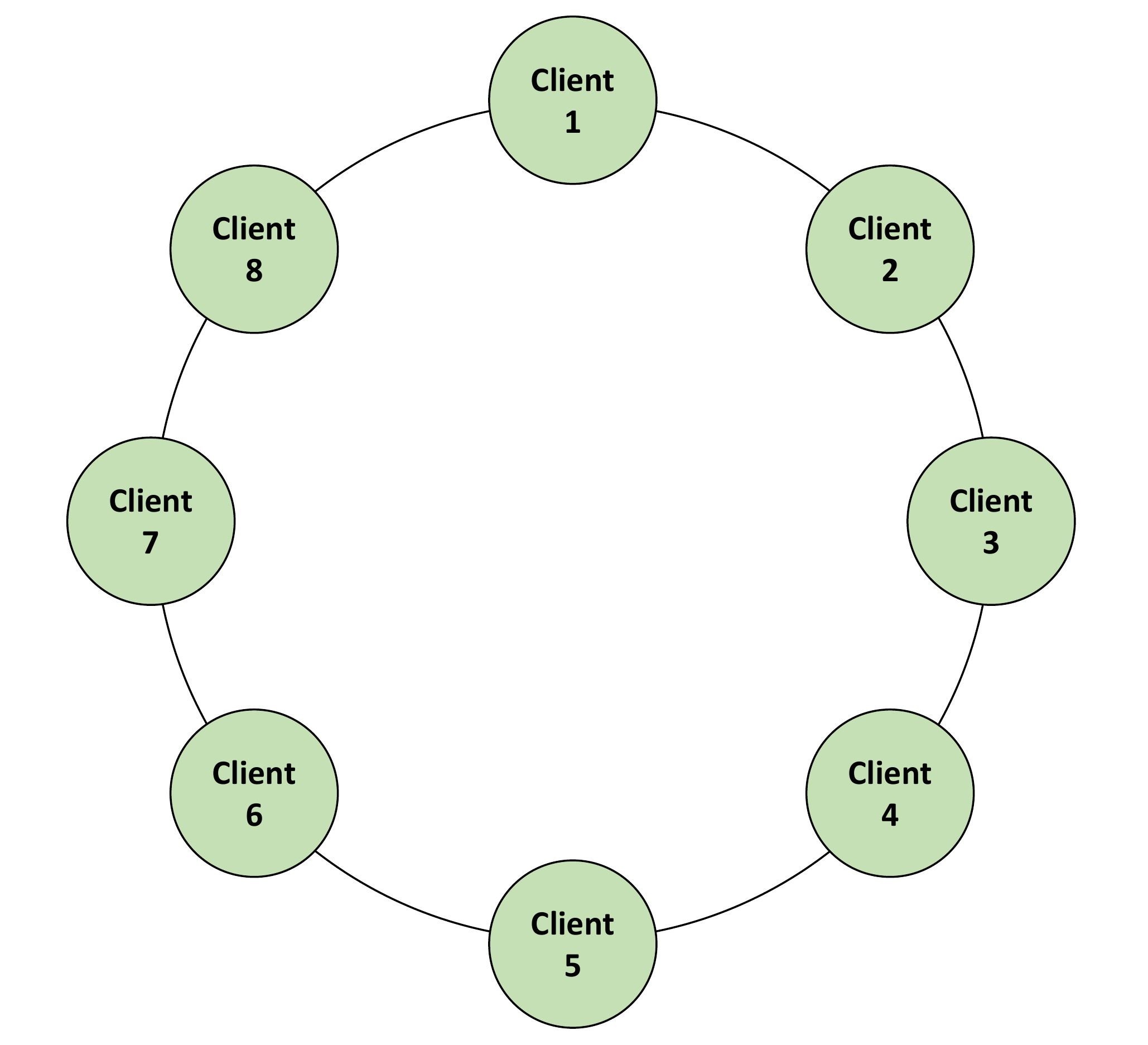}
\end{center}
\end{minipage}
\caption{An 8 Client Ring.}
    \label{fig:ring-topology}
\end{figure}

\textbf{Ring of Cliques Topology.} Please refer to Figure~\ref{fig:network-topologies}.

\begin{figure}[!htbp]
\begin{minipage}{0.325\textwidth}
\begin{center}
    \includegraphics[width=\textwidth]{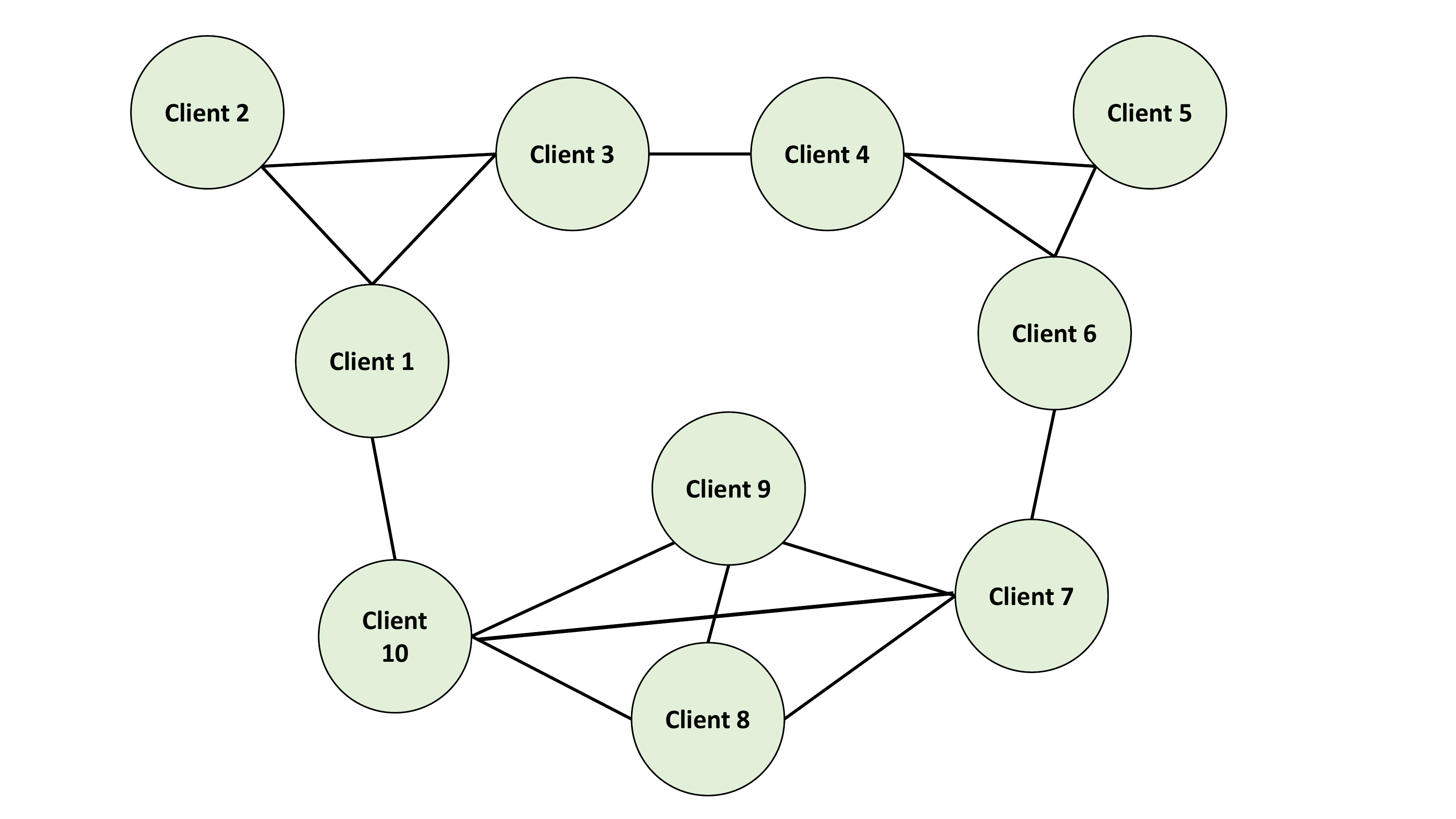}
    \label{fig:10-roc-3c}
\end{center}
\end{minipage}
\hfil
\begin{minipage}{0.325\textwidth}
\begin{center}
    \includegraphics[width=\textwidth]{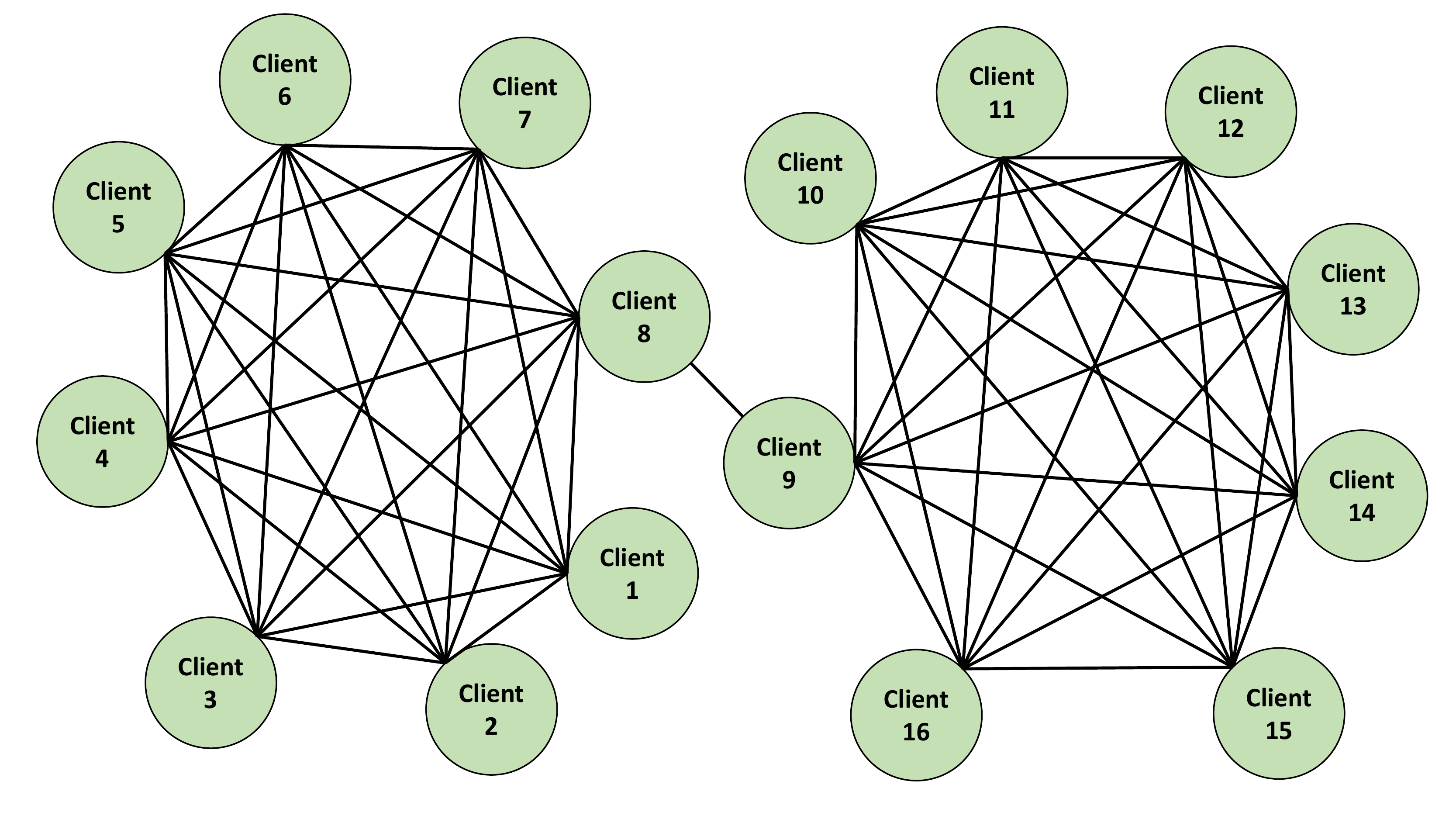}
    \label{fig:16-roc-2c}
\end{center}
\end{minipage}
\hfil
\begin{minipage}{0.325\textwidth}
    \begin{center}
    \includegraphics[width=\textwidth]{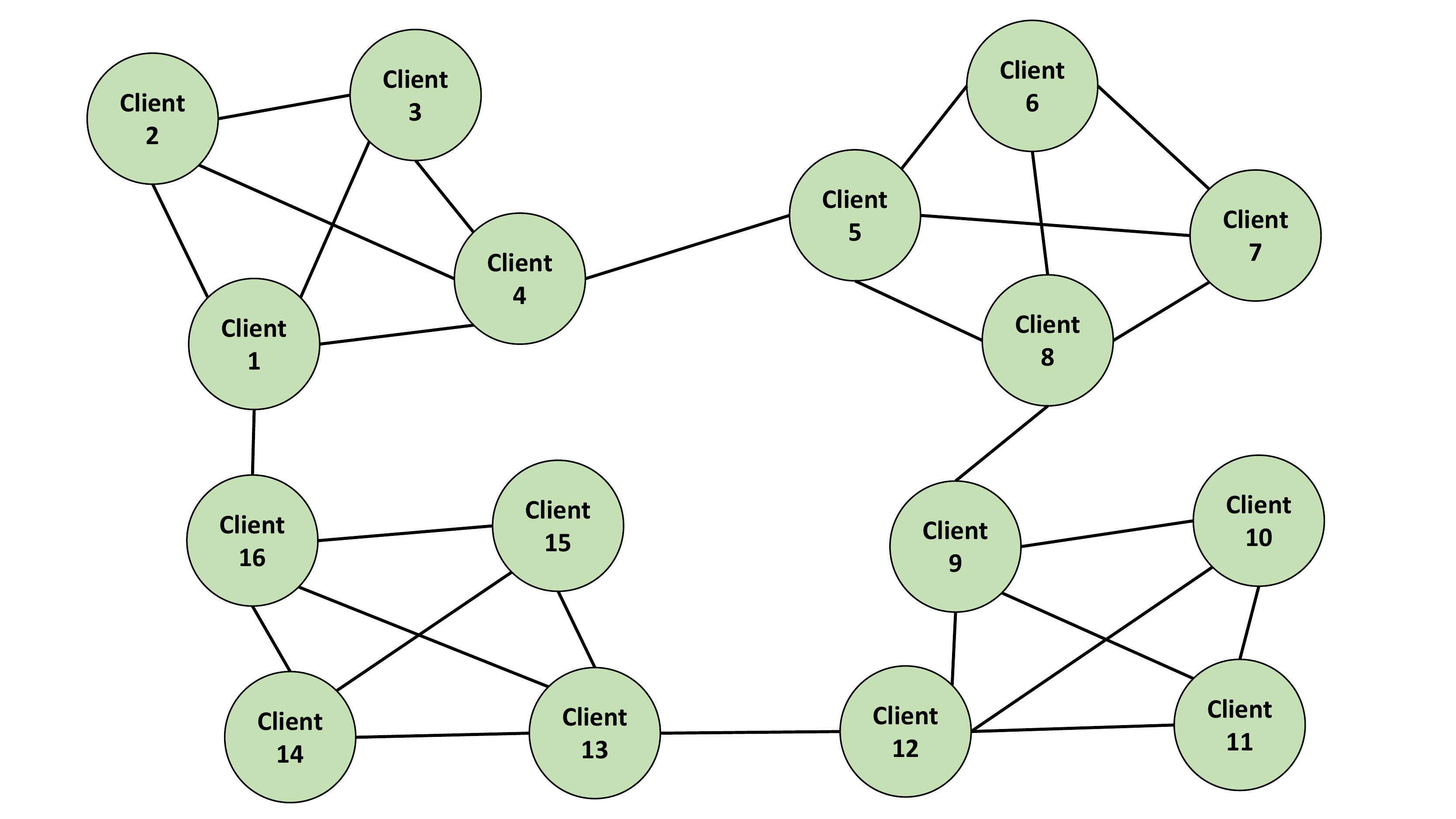}
    \label{fig:16-roc-4c}
    \end{center}
\end{minipage}
\caption{\textbf{(Left)} 10 Client, 3-Cluster. \textbf{(Middle)} 16 Client, 2-Cluster. \textbf{(Right)} 16 Client, 4-Cluster.}
    \label{fig:network-topologies}
\end{figure}

\section{Additional Properties and Algorithm Details} 
\label{app:matrix-properties}

\textbf{Stochastic Matrices.}\quad
Within our work, we use a non-symmetric, non-doubly-stochastic matrix $W^t_{i_t}$ for client averaging.
Utilizing $W^t_{i_t}$ comes with some analysis issues (it is non-symmetric), however it provides the wait-free nature of \ouralgo.
Interestingly, $W^t_{i_t}$ does have some unique properties: it is column-stochastic.
Lemma \ref{lem:stochastic-bound} proves that the product of stochastic matrices converges exponentially to a stochastic vector with common ratio $\nu \in [0, 1)$.

\textbf{Symmetric and Doubly-Stochastcic Matrices.}\quad
As mentioned in Section \ref{sec:theory}, we utilize Algorithm \ref{alg:ccs} to select client weights such that we have a symmetric and doubly-stochastic communication matrix $\bar{W}^t$ under expectation. 
By Lemma \ref{lem:matrix-bound}, there exists a scalar $\rho \in [0, 1)$ such that
$
    \max \{ |\lambda_2\big((\bar{W}^t)^\intercal \bar{W}^t\big)|, |\lambda_n\big((\bar{W}^t)^\intercal \bar{W}^t\big)|\} \leq \rho, \; \forall t.
$
This parameter $\rho$ reflects the connectivity of the underlying graph topology. 
The value of $\rho$ is inversely proportionate to how fast information spreads in the client network. 
A small value of $\rho$ results in information spreading faster ($\rho = 0$ in centralized settings). 

Within our analysis, we denote the parameter $\rho_\nu$ as a combination of $\rho$ and $\nu$:
\begin{equation}
    \rho_\nu := \frac{n-1}{n} (\frac{7}{2(1-\rho)} + \frac{\sqrt{\rho}}{(1-\sqrt{\rho})^2} + \frac{384}{(1-\nu^2)})
\end{equation}

\textbf{Optimal Step-Size Under Uniform Client Influence}. \quad
The defined step-size $\gamma$  and total iterations $T$ for \ouralgo is
\begin{align*}
\gamma := \frac{\sqrt{Mn^2 \Delta_f}}{\sqrt{TL}+\sqrt{M}}\leq  \sqrt{\frac{Mn^2\Delta_f}{TL}}, \quad T \geq 193^2 LM\Delta_f \rho_\nu^2 n^4 p_{max}^2.
\end{align*}
Therefore, $\gamma$ can be rewritten as
$$
\gamma \leq  \sqrt{\frac{Mn^2\Delta_f}{(193^2 LM\Delta_f \rho_\nu^2 n^4 p_{max}^2)L}} = \sqrt{\frac{1}{193^2 L^2 \rho_\nu^2 n^2 p_{max}^2}}.
$$
When the client influence scores are uniform (i.e, $p_i = 1/n \; \forall i \in \mathcal{V}$), one can see that our step-size becomes
\begin{align*}
    \gamma = \mathcal{O}\left(\frac{1}{L}\right).
\end{align*}
This mirrors the optimal step-size in analysis of gradient descent convergence to first-order stationary points $\mathcal{O}(1/L)$ \citep{nesterov}.

\textbf{Communication Coefficient Selection (CCS) Initialization}. \quad
We include different client-communication vector initializations if the client influence scores are uniform versus non-uniform.
The reason for this is to ensure that the self-weight for each client $i$, $w_{i,i}$, has a value greater than $1/n$.
This naturally occurs when the CIS are uniform, however is not so when they are non-uniform.
\section{Review of Existing Inter-Client Communication in Decentralized FL} \label{app:sec:review-DecentralizedFL}

\textbf{Decentralized SGD (D-SGD) \citep{lian2017can}} \quad One of the foundational decentralized Federated Learning algorithms is Decentralized SGD. In order to minimize \Eqref{eq:global_objective}, D-SGD orchestrates a local gradient step for all clients before performing synchronous neighborhood averaging. The D-SGD process for a single client $i$ is defined as:
\begin{equation}
\setlength\abovedisplayskip{0pt}
\setlength\belowdisplayskip{0pt}
\label{eq:dsgd}
    x^{t+1}_i = \sum_{j=1}^n W_{ij} \big[ x_j^t - g(x_j^t, \xi_j^t) \big].
\end{equation}
The term $g(x_j^t, \xi_j^t)$ denotes the stochastic gradient of $x_j^t$ with mini-batch data $\xi_j^t$ sampled from the local data distribution of client $j$. The matrix $W$ is a weighting matrix, where $W_{ij}$ is the amount of $x_i^{t+1}$ which will be made up of client $j$'s local model after one local gradient step (e.g. if $W_{ij} = 1/2$, then half of $x_i^{t+1}$ will have been composed of client $j$'s model after its local gradient step). The weighting matrix only has a zero value $W_{ij} = 0$ if clients $i$ and $j$ are not connected (they are not within the same neighborhood). The values of $W_{ij}$ are generally selected ahead of time by a central host, with the usual weighting scheme being uniform.
In D-SGD, model communication occurs only after all local gradient updates are finished. These gradient updates are computed in parallel.

\textbf{Periodic Averaging SGD (PA-SGD) \citep{wang2018cooperative}} \quad The Periodic Averaging SGD algorithm is an extension of D-SGD. 
In order to save communication costs when the number of clients grows to be large, PA-SGD performs model averaging after an additional $I_1$ local gradient steps. 
Thus, the communication set for PA-SGD is defined as:
$$\mathcal{C}_{I_1} = \{ t \in \N | \; t \mod (I_1 + 1) = 0\}.$$
The special case of $I_1 = 0$ reduces to D-SGD. The PA-SGD process for a single client $i$ is defined as:
\begin{align}
    x^{t+1}_i = \begin{cases}\sum_{j=1}^w W_{ij} \big[x_j^t - g(x_j^t, \xi_j^t) \big], \quad &t \in \mathcal{C}_{I_1}\\
    x_i^t - g(x_i^t, \xi_i^t), \quad &\text{otherwise}.
    \end{cases}
\end{align}
Compared with D-SGD, PA-SGD still suffers from the inefficiency of having to wait for the slowest client for each update. However, PA-SGD saves communication costs by reducing the frequency of communication.

\begin{figure}[!htbp]
    \centering
    \includegraphics[width=0.5\textwidth]{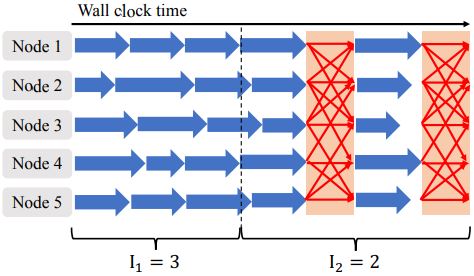}
    \caption{$I_1$, $I_2$ Depiction (from \citep{li2019communication}).}
    \label{fig:ldsgd}
\end{figure}

\textbf{Local Decentralized SGD (LD-SGD) \citep{li2019communication}} \quad Continuing to generalize the foundational decentralized Federated Learning algorithms is Local Decentralized SGD. 
LD-SGD generalizes PA-SGD by allowing multiple chunks of singular D-SGD updates, as described in \Eqref{eq:dsgd}, in between the increased local gradient steps seen in PA-SGD.
The number of D-SGD chunks is dictated by a new parameter $I_2$.

In this case, the communication set for LD-SGD is defined as 
\begin{align*}\mathcal{C}_{I_1, I_2} = \begin{cases}\bigcup \sum_{i=I_1}^{I_1+I_2}\{ t \in \N | \; t \mod (i + 1) = 0\} \quad &\text{if $I_1 > 0$,}\\
\{ t \in \N \} \quad &\text{if $I_1 = 0$.}\end{cases}
\end{align*}
For example, in the case $I_1 = 3, I_2 = 2$, LD-SGD will take three local gradient steps and then perform two D-SGD updates (which consists of a local gradient step and then averaging) as shown in Figure~\ref{fig:ldsgd}.
The special case of $I_2 = 1$ reduces to PA-SGD.
The LD-SGD process for a single client $i$ is defined as:
\begin{align}
    &x^{t+1}_i = \begin{cases}
    \text{Perform } \sum_{j=1}^w W_{ij} \big[x_j^{t} - g(x_j^{t}, \xi_j^{t}) \big], \quad &t \in \mathcal{C}_{I_1, I_2}\\
    x_i^{t} - g(x_i^{t}, \xi_i^{t}), \quad &\text{otherwise}
    \end{cases}
\end{align}

\section{Proof of the Main Theorem}
\label{supp:sec:proof}
Before beginning, we quickly define the expected gradient $\E_{i_t}\big[ G(x^t_{i_t}, \xi^{i_t}_{*,t})\big]$ as
\begin{align}
    \bar{G}(X^t, \xi^t) &:= \E_{i_t}\big[ G(x^t_{i_t}, \xi^{t}_{i_t})\big] = \sum_{i=1}^n p_i G(x^t_{i_t}, \xi^{t}_{i_t}).
\end{align}
\begin{proof}[Proof of Theorem 1]
In this theorem, we characterize the convergence of the average of \textit{all} local models.
Using the Gradient Lipschitz assumption with \Eqref{eq:altered-update} yields
\begin{align}
    f\Big(\frac{X^{t+1} \bm{\bm{1_n}}}{n}\Big) \leq & f\Big(\frac{X^{t} \bm{1_n}}{n}\Big) + \bigg \langle \nabla f\Big(\frac{X^{t} \bm{1_n}}{n}\Big), \frac{X^t(W^t_{i_t} - \bar{W}^t) \bm{1_n}}{n} - \gamma \frac{G(x^t_{i_t}, \xi^{i_t}_{*,t})\bm{1_n}}{n}   \bigg \rangle
    \nonumber\\
    &+ \frac{L}{2}\norm{\frac{X^t(W^t_{i_t} - \bar{W}^t) \bm{1_n}}{n} - \gamma \frac{G(x^t_{i_t}, \xi^{i_t}_{*,t})\bm{1_n}}{n} }^2.
\end{align}
We first denote the average over all local models as $\bar{x}^t := \frac{X^{t} \bm{1_n}}{n}$.
Taking the expectation with respect to the updating client $i_t$ yields
\begin{align}
    f(\bar{x}^{t+1}) \leq & f(\bar{x}^{t}) + \bigg \langle \nabla f(\bar{x}^{t}), X^t(\bar{W}^t - \bar{W}^t) \frac{\bm{1_n}}{n} - \gamma \bar{G}(X^t, \xi_{*,t}) \frac{\bm{1_n}}{n}  \bigg \rangle
    \nonumber
    \\
    &+ \frac{L}{2} \E_{i_t} \norm{X^t(W^t_{i_t} - \bar{W}^t) \frac{\bm{1_n}}{n} - \gamma G(x^t_{i_t}, \xi^{i_t}_{*,t})\frac{\bm{1_n}}{n} }^2\\
    =& f(\bar{x}^{t}) + \bigg \langle \nabla f(\bar{x}^{t}), - \gamma \bar{G}(X^t, \xi_{*,t}) \frac{\bm{1_n}}{n}  \bigg \rangle 
    \nonumber
    \\
    &+ \frac{L}{2} \E_{i_t} \norm{X^t(W^t_{i_t} - \bar{W}^t) \frac{\bm{1_n}}{n} - \gamma G(x^t_{i_t}, \xi^{i_t}_{*,t})\frac{\bm{1_n}}{n} }^2\\
    =& f(\bar{x}^{t}) + \bigg \langle \nabla f(\bar{x}^{t}),- \frac{\gamma}{Mn} \sum_{i=1}^n \sum_{m=1}^M p_i \nabla \ell (x^t_{i}, \xi^{i}_{m,t})  \bigg \rangle\nonumber\\
    &+ \frac{L}{2} \E_{i_t} \norm{X^t(W^t_{i_t} - \bar{W}^t) \frac{\bm{1_n}}{n} - \gamma G(x^t_{i_t}, \xi^{i_t}_{*,t})\frac{\bm{1_n}}{n} }^2\\
    =& f(\bar{x}^{t}) - \gamma \bigg \langle \nabla f(\bar{x}^{t}), \frac{1}{Mn}\sum_{i=1}^n \sum_{m=1}^M p_i \nabla \ell (x^t_{i}, \xi^{i}_{m,t}) \bigg \rangle\nonumber\\
    &+ \frac{L}{2} \E_{i_t} \norm{X^t(W^t_{i_t} - \bar{W}^t) \frac{\bm{1_n}}{n} - \gamma G(x^t_{i_t}, \xi^{i_t}_{*,t})\frac{\bm{1_n}}{n} }^2.
\end{align}
Taking the expectation over all local data $\E_{\xi \sim \mathcal{D}_i}$ yields
\begin{align}
    f(\bar{x}^{t+1})- f(\bar{x}^{t}) \leq& - \frac{\gamma}{n} \bigg \langle \nabla f(\bar{x}^{t}), \sum_{i=1}^n p_i \nabla f_i(x^t_{i})   \bigg \rangle\nonumber\\
    &+ \frac{L}{2} \E_{\xi \sim \mathcal{D}_i, i_t} \norm{X^t(W^t_{i_t} - \bar{W}^t) \frac{\bm{1_n}}{n} - \gamma G(x^t_{i_t}, \xi^{i_t}_{*,t})\frac{\bm{1_n}}{n} }^2.
\end{align}
By properties of the inner product
\begin{align} \label{eq:termA}
    f(\bar{x}^{t+1})- f(\bar{x}^{t})  \leq & - \frac{\gamma}{2n} \bigg(  \norm{\nabla f(\bar{x}^{t})}^2 + \norm{\sum_{i=1}^n p_i \nabla f_i(x^t_{i})}^2 - \norm{\nabla f(\bar{x}^{t}) - \sum_{i=1}^n p_i \nabla f_i(x^t_{i})}^2 \bigg) \nonumber
    \\
     &+ \frac{L}{2} \underbrace{\E_{\xi \sim \mathcal{D}_i, i_t} \norm{X^t(W^t_{i_t} - \bar{W}^t) \frac{\bm{1_n}}{n} - \gamma G(x^t_{i_t}, \xi^{i_t}_{*,t})\frac{\bm{1_n}}{n} }^2}_{:=A}.
\end{align}
\textbf{Bounding Term $\bm{A}$:}
Given the update equation, term $A$ can be transformed into
\begin{align}
    \E_{\xi \sim \mathcal{D}_i, i_t} \norm{\bigg( X^t(W^t_{i_t} - \bar{W}^t) - \gamma G(x^t_{i_t}, \xi^{i_t}_{*,t}) \bigg) \frac{\bm{1_n}}{n} }^2 = \E_{\xi \sim \mathcal{D}_i, i_t} \norm{\bigg( X^{t+1} - X^t\bar{W}^t \bigg) \frac{\bm{1_n}}{n} }^2.
\end{align}
Due to the symmetric and doubly stochastic property of $\bar{W}^t$ this reduces to
\begin{align}
    &\E_{\xi \sim \mathcal{D}_i, i_t} \norm{ \bar{x}^{t+1} - \bar{x}^{t} }^2 = \E_{\xi \sim \mathcal{D}_i, i_t} \norm{ \frac{1}{n} \sum_{i=1}^n \big( x_i^{t+1} - x_i^t \big)}^2 = \E_{\xi \sim \mathcal{D}_i, i_t} \norm{ \frac{1}{n} \big( x_{i_t}^{t+1} - x_{i_t}^t \big)}^2\\
    &= \frac{1}{n^2} \E_{\xi \sim \mathcal{D}_i, i_t} \norm{  x_{i_t}^{t+1} - x_{i_t}^t }^2\\
    &\leq \frac{3}{n^2} \E_{\xi \sim \mathcal{D}_i, i_t} \bigg(  \norm{x_{i_t}^{t+1} - \bar{x}^{t+1}}^2 + \norm{\bar{x}^t - x_{i_t}^t}^2 + \norm{\bar{x}^{t+1} - \bar{x}^t}^2  \bigg).
\end{align}
Combining like terms yields
\begin{align}
    (1 - \frac{3}{n^2})\E_{\xi \sim \mathcal{D}_i, i_t} \norm{ \bar{x}^{t+1} - \bar{x}^{t} }^2 \leq \frac{3}{n^2} \E_{\xi \sim \mathcal{D}_i, i_t} \bigg(  \norm{x_{i_t}^{t+1} - \bar{x}^{t+1}}^2 + \norm{\bar{x}^t - x_{i_t}^t}^2  \bigg).
\end{align}
Since $n \geq 2$ (we assume at least 2 devices are running the algorithm) we find the following result
\begin{align}
    \E_{\xi \sim \mathcal{D}_i, i_t} \norm{ \bar{x}^{t+1} - \bar{x}^{t} }^2 &\leq \frac{3}{(n^2 - 3)} \E_{\xi \sim \mathcal{D}_i, i_t} \bigg(  \norm{x_{i_t}^{t+1} - \bar{x}^{t+1}}^2 + \norm{\bar{x}^t - x_{i_t}^t}^2  \bigg)\nonumber\\
    &= \frac{3}{(n^2 - 3)} \E_{\xi \sim \mathcal{D}_i} \sum_{i=1}^n p_i \bigg(  \norm{\bar{x}^{t+1} - x_{i}^{t+1}}^2 + \norm{\bar{x}^t - x_{i}^t}^2 \bigg)
\end{align}
Thus, we have bounded Term $A$. Substituting this back into \Eqref{eq:termA} results in
\begin{align}
     \leq& - \frac{\gamma}{2n} \bigg(  \norm{\nabla f(\bar{x}^{t})}^2 + \norm{\sum_{i=1}^n p_i \nabla f_i(x^t_{i})}^2 - \norm{\nabla f(\bar{x}^{t}) - \sum_{i=1}^n p_i \nabla f_i(x^t_{i})}^2 \bigg )]\nonumber
     \\
     &+  \frac{3L}{2(n^2 - 3)} \E_{\xi \sim \mathcal{D}_i} \sum_{i=1}^n p_i \bigg(  \norm{\bar{x}^{t+1} - x_{i}^{t+1}}^2 + \norm{\bar{x}^t - x_{i}^t}^2 \bigg).
\end{align}
Taking the sum from $t = 0$ to $t = T-1$ yields
\begin{align}
     \label{eq:main-2}
     f(\bar{x}^{T}) - f(\bar{x}^{0}) \leq & - \frac{\gamma}{2n} \bigg(  \sum_{t=0}^{T-1} \norm{\nabla f(\bar{x}^{t})}^2 - \sum_{t=0}^{T-1}\norm{\nabla f(\bar{x}^{t}) - \sum_{i=1}^n p_i \nabla f_i(x^t_{i})}^2\nonumber 
     \\
     &\quad\quad\quad\quad +\sum_{t=0}^{T-1} \norm{\sum_{i=1}^n p_i \nabla f_i(x^t_{i})}^2 \bigg ) 
     \nonumber
     \\
     &+ \frac{3L}{2(n^2 - 3)} \E_{\xi \sim \mathcal{D}_i} \sum_{t=0}^{T-1} \sum_{i=1}^n p_i \bigg(  \norm{\bar{x}^{t+1} - x_{i}^{t+1}}^2 + \norm{\bar{x}^t - x_{i}^t}^2 \bigg).
\end{align}
Using the Lipschitz Gradient assumption, the following term is bounded as
\begin{align}
    \sum_{t=0}^{T-1}\norm{\nabla f(\bar{x}^{t}) - \sum_{i=1}^n p_i \nabla f_i(x^t_{i})}^2 &= \sum_{t=0}^{T-1}\norm{\sum_{i=1}^n p_i \nabla f_i(\bar{x}^{t}) - \sum_{i=1}^n p_i \nabla f_i(x^t_{i})}^2\\
    &\leq \sum_{t=0}^{T-1}\sum_{i=1}^n p_i^2 \norm{\nabla f_i(\bar{x}^{t}) - \nabla f_i(x^t_{i})}^2\\
    &\leq L^2 \sum_{t=0}^{T-1}\sum_{i=1}^n p_i^2 \norm{\bar{x}^{t} - x^{t}_{i}}^2\\
    &\leq L^2 p_{max} \sum_{t=0}^{T-1}\sum_{i=1}^n p_i \norm{\bar{x}^{t} - x^{t}_{i}}^2 
\end{align}
Placing this back into \Eqref{eq:main-2}, and rearranging, yields
\begin{align}
     \label{eq:main-3}
     f(\bar{x}^{T}) - f(\bar{x}^{0}) \leq& - \frac{\gamma}{2n} \sum_{t=0}^{T-1} \norm{\nabla f(\bar{x}^{t})}^2 - \frac{\gamma}{2n} \sum_{t=0}^{T-1} \norm{\sum_{i=1}^n p_i \nabla f_i(x^t_{i})}^2 \nonumber \\
     &+ \frac{3L}{2(n^2 - 3)} \E_{\xi \sim \mathcal{D}_i} \sum_{t=0}^{T-1} \sum_{i=1}^n p_i \bigg(  \norm{\bar{x}^{t+1} - x_{i}^{t+1}}^2 + \norm{\bar{x}^t - x_{i}^t}^2 \bigg) \nonumber
     \\
     &+ \frac{\gamma L^2 p_{max}}{2n} \sum_{t=0}^{T-1} \sum_{i=1}^n p_i \norm{\bar{x}^t - x_{i}^t}^2
\end{align}
Given that $\bar{x}^0 = x^0_i$ for all clients $i$, one can see that
\begin{align}
    \label{eq:t-plus-1}
    \sum_{t=0}^{T-1} \sum_{i=1}^n p_i \norm{ \bar{x}^{t+1} - x_{i}^{t+1} }^2 &= \sum_{t=0}^{T-1} \sum_{i=1}^n p_i \norm{ \bar{x}^{t+1} - x_{i}^{t+1} }^2 + \sum_{i=1}^n p_i \norm{ \bar{x}^{0} - x_{i}^{0} }^2\\ 
    &= \sum_{t=0}^{T-1} \sum_{i=1}^n p_i \norm{ \bar{x}^{t} - x_{i}^{t} }^2 + \sum_{i=1}^n p_i \norm{ \bar{x}^{T} - x_{i}^{T} }^2
\end{align}
\begin{align}
    &\geq \sum_{t=0}^{T-1} \sum_{i=1}^n p_i \norm{ \bar{x}^{t} - x_{i}^{t} }^2.
\end{align}
Using the result of \Eqref{eq:t-plus-1} condenses \Eqref{eq:main-3} into
\begin{align}
    \label{eq:main-4}
     f(\bar{x}^{T}) - f(\bar{x}^{0}) \leq& - \frac{\gamma}{2n} \sum_{t=0}^{T-1} \norm{\nabla f(\bar{x}^{t})}^2 - \frac{\gamma}{2n} \sum_{t=0}^{T-1} \norm{\sum_{i=1}^n p_i \nabla f_i(x^t_{i})}^2 \nonumber \\
     &+ \bigg(\frac{3L}{(n^2 - 3)} + \frac{\gamma L^2 p_{max}}{2n} \bigg) \underbrace{\E_{\xi \sim \mathcal{D}_i} \sum_{t=0}^{T-1} \sum_{i=1}^n p_i  \norm{\bar{x}^{t+1} - x_{i}^{t+1}}^2}_{:= B}
\end{align}
\textbf{Bounding Term B.}
The recursion of our update rule can be written as
\begin{equation}
    X^{t+1} =  X^0 - \gamma \sum_{j=0}^{t} G(x^j_{i_j}, \xi^{i_j}_{*,j}) \prod_{q=j+1}^{t} \bar{W}^q - \gamma \sum_{j=0}^{t} G(x^j_{i_j}, \xi^{i_j}_{*,j}) \prod_{q=j+1}^{t} (W_{i_q}^q - \bar{W}^q)
\end{equation}
The recursion equation for the expected consensus model $\bar{x}^t$ and expected local model for client $i$ can be computed by multiplying by $\frac{\bm{1_n}}{n}$ and $\bm{e}_i$ respectively
\begin{align}
    \bar{x}^{t+1} &= \bar{x}^0 - \gamma\sum_{j=0}^{t} G(x^j_{i_j}, \xi^{i_j}_{*,j}) \frac{\bm{1_n}}{n} - \gamma \sum_{j=0}^{t} G(x^j_{i_j}, \xi^{i_j}_{*,j}) \prod_{q=j+1}^{t} (W_{i_q}^q - \bar{W}^q)\frac{\bm{1_n}}{n}\\ 
    x_i^{t+1} &= x_i^0 - \gamma\sum_{j=0}^{t} G(x^j_{i_j}, \xi^{i_j}_{*,j}) \prod_{q=j+1}^{t}\bar{W}^q \bm{e}_i - \gamma \sum_{j=0}^{t} G(x^j_{i_j}, \xi^{i_j}_{*,j}) \prod_{q=j+1}^{t} (W_{i_q}^q - \bar{W}^q)\bm{e}_i
\end{align}
Using the recursive equations above transforms the bound on term B
\begin{align}
    &\E_{\xi \sim \mathcal{D}_i} \sum_{t=0}^{T-1}\sum_{i=1}^n p_i \norm{\bar{x}^{t+1} - x^{t+1}_{i}}^2 \nonumber\\
    &\resizebox{.9\hsize}{!}{$= \gamma^2 \E_{\xi \sim \mathcal{D}_i} \sum_{t=0}^{T-1}\sum_{i=1}^n p_i \bigg |\bigg| - \sum_{j=0}^{t} G(x^j_{i_j}, \xi^{i_j}_{*,j}) \bigg( \frac{\bm{1_n}}{n} - \prod_{q=j+1}^{t}\bar{W}^q \bm{e}_i \bigg) - \sum_{j=0}^{t} G(x^j_{i_j}, \xi^{i_j}_{*,j}) \prod_{q=j+1}^{t} (W_{i_q}^q - \bar{W}^q)(\frac{\bm{1_n}}{n}-\bm{e}_i) \bigg |\bigg|^2$}\\
    \leq& 2\gamma^2 \E_{\xi \sim \mathcal{D}_i} \sum_{t=0}^{T-1}\sum_{i=1}^n p_i \bigg( \norm{\sum_{j=0}^{t} G(x^j_{i_j}, \xi^{i_j}_{*,j}) \bigg( \frac{\bm{1_n}}{n} - \prod_{q=j+1}^{t}\bar{W}^q \bm{e}_i \bigg)}^2 \nonumber \\
    &+ \norm{\sum_{j=0}^{t} G(x^j_{i_j}, \xi^{i_j}_{*,j}) \prod_{q=j+1}^{t} (W_{i_q}^q - \bar{W}^q)(\frac{\bm{1_n}}{n}-\bm{e}_i) }^2 \bigg)
\end{align}
\begin{align}
    =& 2\gamma^2 \E_{\xi \sim \mathcal{D}_i} \sum_{t=0}^{T-1}\sum_{i=1}^n p_i \bigg( \underbrace{\sum_{j=0}^{t}\norm{ G(x^j_{i_j}, \xi^{i_j}_{*,j}) \bigg( \frac{\bm{1_n}}{n} - \prod_{q=j+1}^{t}\bar{W}^q \bm{e}_i \bigg)}^2}_{:= B_1} \nonumber \\
    &+ \underbrace{\sum_{j=0}^{t}\norm{ G(x^j_{i_j}, \xi^{i_j}_{*,j}) \prod_{q=j+1}^{t} (W_{i_q}^q - \bar{W}^q)(\frac{\bm{1_n}}{n}-\bm{e}_i) }^2}_{:= B_3}  \nonumber \\
    &+ \underbrace{2 \sum_{j=0}^t\sum_{j'=j+1}^t \langle G(x^j_{i_j}, \xi^{i_j}_{*,j})\big( \frac{\bm{1_n}}{n}- \prod_{q=j+1}^{t} \bar{W}^q \bm{e}_i \big), G(x^{j'}_{i_{j'}}, \xi^{i_{j'}}_{*,j'}) \big( \frac{\bm{1_n}}{n} - \prod_{q=j'+1}^{t} \bar{W}^q \bm{e}_i \big) \rangle}_{:= B_2}  \nonumber \\
    &+ \resizebox{.85\hsize}{!}{$\underbrace{2 \sum_{j=0}^t\sum_{j'=j+1}^t \langle G(x^j_{i_j}, \xi^{i_j}_{*,j}) \prod_{q=j+1}^{t} (W_{i_q}^q - \bar{W}^q)(\frac{\bm{1_n}}{n}-\bm{e}_i), G(x^{j'}_{i_{j'}}, \xi^{i_{j'}}_{*,j'}) \prod_{q=j'+1}^{t} (W_{i_q}^q - \bar{W}^q)(\frac{\bm{1_n}}{n}-\bm{e}_i) \rangle}_{:= B_4} \bigg)$}
\end{align}
\textbf{Bounding Term $\bm{B_1}$.}
Using Lemma \ref{lem:matrix-bound},
\begin{align}
    &\E_{\xi \sim \mathcal{D}_i} \sum_{j=0}^{t}\norm{ G(x^j_{i_j}, \xi^{i_j}_{*,j}) \bigg( \frac{\bm{1_n}}{n}- \prod_{q=j+1}^{t}\bar{W}^q \bm{e}_i \bigg)}^2 \nonumber \\
    &\leq \E_{\xi \sim \mathcal{D}_i} \sum_{j=0}^{t}\norm{ G(x^j_{i_j}, \xi^{i_j}_{*,j})}^2 \norm{\bigg( \frac{\bm{1_n}}{n}- \prod_{q=j+1}^{t}\bar{W}^q \bm{e}_i \bigg)}^2\\
    &= \E_{\xi \sim \mathcal{D}_i} \sum_{j=0}^{t} \norm{ \frac{1}{M}\sum_{m=1}^M \nabla \ell(x^j_{i_j}; \xi_{m,j}^{i_j})}^2 \norm{\bigg( \frac{\bm{1_n}}{n}- \prod_{q=j+1}^{t}\bar{W}^q \bm{e}_i \bigg)}^2\\
    &\leq \E_{\xi \sim \mathcal{D}_i} \sum_{j=0}^{t} \norm{ \frac{1}{M}\sum_{m=1}^M \nabla \ell(x^j_{i_j}; \xi_{m,j}^{i_j})}^2 (\frac{n-1}{n})\rho^{t-j}
\end{align}
Using Lemma \ref{lem:gradient-bound}, the equation above becomes
\begin{align}
    &= 2\sum_{j=0}^{t} \bigg( \frac{\sigma^2}{M} + \norm{\nabla f_{i_j}(x^j_{i_j})}^2 \bigg)(\frac{n-1}{n})\rho^{t-j}\\
    &= 2 \bigg( \sum_{j=0}^{t}  \frac{\sigma^2}{M}(\frac{n-1}{n})\rho^{t-j} + \sum_{j=0}^{t} \norm{\nabla f_{i_j}(x^j_{i_j})}^2(\frac{n-1}{n})\rho^{t-j} \bigg)\\
    \label{eq:b1}
    &\leq  \frac{2(n-1)\sigma^2}{(1-\rho)Mn} + 2\sum_{j=0}^{t} \norm{\nabla f_{i_j}(x^j_{i_j})}^2(\frac{n-1}{n})\rho^{t-j} 
\end{align}
Taking the expectation over worker $i_j$ yields the desired bound
\begin{align}
    \E_{i_j} \bigg[  \frac{2(n-1)\sigma^2}{(1-\rho)Mn} + 2\sum_{j=0}^{t} \norm{\nabla f_{i_j}(x^j_{i_j})}^2(\frac{n-1}{n})\rho^{t-j}  \bigg] \nonumber \\
    \label{eq:b1-bound}
    = \frac{2(n-1)\sigma^2}{(1-\rho)Mn} + \frac{2(n-1)}{n}\sum_{j=0}^{t} \E_{i_j} \norm{\nabla f_{i_j}(x^j_{i_j})}^2\rho^{t-j} 
\end{align}
\textbf{Bounding Term $\bm{B_2}$.}
Using Lemma \ref{lem:matrix-bound},
\begin{align}
    &2 \sum_{j=0}^{t}\sum_{j'=j+1}^{t} \langle G(x^j_{i_j}, \xi^{i_j}_{*,j})\big( \frac{\bm{1_n}}{n}- \prod_{q=j+1}^{t} \bar{W}^q \bm{e}_i \big), G(x^{j'}_{i_{j'}}, \xi^{i_{j'}}_{*,j'}) \big( \frac{\bm{1_n}}{n}- \prod_{q=j'+1}^{t} \bar{W}^q \bm{e}_i \big) \rangle \nonumber \\
    &=\resizebox{.85\hsize}{!}{$2 \sum_{j=0}^{t}\sum_{j'=j+1}^{t} \norm{ G(x^j_{i_j}, \xi^{i_j}_{*,j})} \norm{\big( \frac{\bm{1_n}}{n}- \prod_{q=j+1}^{t} \bar{W}^q \bm{e}_i \big)} \norm{G(x^{j'}_{i_{j'}}, \xi^{i_{j'}}_{*,j'})} \norm{\big( \frac{\bm{1_n}}{n}- \prod_{q=j'+1}^{t} \bar{W}^q \bm{e}_i \big)}$}
\end{align}
For any $\alpha_{j,j'} >0$ we find
\begin{align}
    \leq& 2 \sum_{j=0}^{t}\sum_{j'=j+1}^{t} \bigg( \frac{\norm{ G(x^j_{i_j}, \xi^{i_j}_{*,j})}^2 \norm{G(x^{j'}_{i_{j'}}, \xi^{i_{j'}}_{*,j'})}^2}{2\alpha_{j,j'}} \nonumber \\
    &+ \frac{\alpha_{j,j'}\norm{\big( \frac{\bm{1_n}}{n}- \prod_{q=j+1}^{t} \bar{W}^q \bm{e}_i \big)}^2\norm{\big( \frac{\bm{1_n}}{n}- \prod_{q=j'+1}^{t} \bar{W}^q \bm{e}_i \big)}^2}{2}\bigg) \\
    \leq& \sum_{j\neq j'}^{t} \bigg( \frac{\norm{ G(x^j_{i_j}, \xi^{i_j}_{*,j})}^2 \norm{G(x^{j'}_{i_{j'}}, \xi^{i_{j'}}_{*,j'})}^2}{2\alpha_{j,j'}}  + \frac{\alpha_{j,j'} \rho^{t-\min \{j,j'\}}}{2}(\frac{n-1}{n})^2\bigg), \alpha_{j,j'} = \alpha_{j',j} 
\end{align}
By applying inequality of arithmetic and geometric means to the term in the last step, we can choose $\alpha_{j,j'} > 0$ s.t.
\begin{align}
    &\leq \frac{n-1}{n}\sum_{j\neq j'}^{t} \bigg( \norm{ G(x^j_{i_j}, \xi^{i_j}_{*,j})} \norm{G(x^{j'}_{i_{j'}}, \xi^{i_{j'}}_{*,j'})}  \rho^{\frac{t-\min \{j,j'\}}{2}}\bigg)\\
    &\leq \frac{n-1}{n}\sum_{j\neq j'}^{t} \bigg( \frac{\norm{ G(x^j_{i_j}, \xi^{i_j}_{*,j})}^2+ \norm{G(x^{j'}_{i_{j'}}, \xi^{i_{j'}}_{*,j'})}^2}{2} \rho^{\frac{t-\min \{j,j'\}}{2}}\bigg)\\
    &= \frac{n-1}{n}\sum_{j\neq j'}^{t} \norm{ G(x^j_{i_j}, \xi^{i_j}_{*,j})}^2 \rho^{\frac{t-\min \{j,j'\}}{2}}\\
    &=\frac{n-1}{n}\sum_{j=0}^{t}\sum_{j'=j+1}^{t} \norm{ G(x^j_{i_j}, \xi^{i_j}_{*,j})}^2 \rho^{\frac{t-j}{2}}\\
    &= \frac{n-1}{n}\sum_{j=0}^{t}  \norm{ G(x^j_{i_j}, \xi^{i_j}_{*,j})}^2 2(t-j)\rho^{\frac{t-j}{2}}\\
    &=\frac{n-1}{n}\sum_{j=0}^{t} 2(t-j)\rho^{\frac{t-j}{2}}  \norm{ \frac{1}{M}\sum_{m=1}^M\nabla \ell(x^j_{i_j}; \xi_{m,j}^{i_j})}^2 
\end{align}
Using Lemma \ref{lem:gradient-bound} (and the expectation $\E_{\xi \sim \mathcal{D}_i}$ that was omitted above but is present) yields
\begin{align}
    &= \frac{2(n-1)}{n}\sum_{j=0}^{t} 2(t-j)\rho^{\frac{t-j}{2}} \bigg( \frac{\sigma^2}{M} + \norm{\nabla f_{i_j}(x^j_{i_j})}^2 \bigg)\\
    \label{eq:b2}
    &\leq \frac{4(n-1)\sqrt{\rho}\sigma^2}{Mn(1-\sqrt{\rho})^2} + \frac{2(n-1)}{n} \sum_{j=0}^{t} \norm{\nabla f_{i_j}(x^j_{i_j})}^2 2(t-j)\rho^{\frac{t-j}{2}}
\end{align}
Taking the expectation over worker $i_j$ yields the desired bound
\begin{align}
    \E_{i_j} \bigg[ \frac{4(n-1)\sqrt{\rho}\sigma^2}{Mn(1-\sqrt{\rho})^2} + \frac{2(n-1)}{n} \sum_{j=0}^{t} \norm{\nabla f_{i_j}(x^j_{i_j})}^2 2(t-j)\rho^{\frac{t-j}{2}} \bigg] \nonumber \\
    = \frac{4(n-1)\sqrt{\rho}\sigma^2}{Mn(1-\sqrt{\rho})^2} + \frac{2(n-1)}{n} \sum_{j=0}^{t} \E_{i_j} \norm{\nabla f_{i_j}(x^j_{i_j})}^2 2(t-j)\rho^{\frac{t-j}{2}}
\end{align}
\textbf{Bounding Term $\bm{B_3}$.} 
\begin{align}
    \label{eq:g-bound}
    &\E_{\xi \sim \mathcal{D}_i} \sum_{j=0}^{t} \norm{G(x^j_{i_j}, \xi^{i_j}_{*,j}) \prod_{q=j+1}^{t} (W_{i_q}^q - \bar{W}^q)(\frac{\bm{1_n}}{n}-\bm{e}_i)}^2 \nonumber \\
    =&\E_{\xi \sim \mathcal{D}_i} \sum_{j=0}^{t} \norm{G(x^j_{i_j}, \xi^{i_j}_{*,j}) \prod_{q=j+1}^{t} (W_{i_q}^q - \phi 1^\intercal + \phi 1^\intercal - \bar{W}^q )(\frac{\bm{1_n}}{n}-\bm{e}_i)}^2 \\
    =&\resizebox{.9\hsize}{!}{$\E_{\xi \sim \mathcal{D}_i} \sum_{j=0}^{t}\norm{G(x^j_{i_j}, \xi^{i_j}_{*,j}) \bigg[ \prod_{q=j+1}^{t} (W_{i_q}^q - \phi 1^\intercal)(\frac{\bm{1_n}}{n}-\bm{e}_i) - \prod_{q=j+1}^{t}(\bar{W}^q - \phi 1^\intercal )(\frac{\bm{1_n}}{n}-\bm{e}_i)\bigg]}^2$}\\
    \leq& 2\E_{\xi \sim \mathcal{D}_i} \sum_{j=0}^{t}\norm{G(x^j_{i_j}, \xi^{i_j}_{*,j}) \prod_{q=j+1}^{t} (W_{i_q}^q - \phi 1^\intercal)(\frac{\bm{1_n}}{n}-\bm{e}_i)}^2\nonumber \\
    &+ 2\E_{\xi \sim \mathcal{D}_i}\sum_{j=0}^{t} \norm{G(x^j_{i_j}, \xi^{i_j}_{*,j}) \prod_{q=j+1}^{t}(\bar{W}^q - \phi 1^\intercal )(\frac{\bm{1_n}}{n}-\bm{e}_i)}^2
\end{align}
Due to the structure of $\phi 1^\intercal$, multiplying this matrix by $\frac{\bm{1_n}}{n}$ or $\bm{e}_i$ yields the same result.
Using this, as well as the double stochasticity of $\bar{W}$, we find
\begin{align}
    &= \resizebox{.85\hsize}{!}{$2\E_{\xi \sim \mathcal{D}_i}\sum_{j=0}^{t} \bigg[ \norm{G(x^j_{i_j}, \xi^{i_j}_{*,j}) \prod_{q=j+1}^{t} (W_{i_q}^q - \phi 1^\intercal)(\frac{\bm{1_n}}{n}-\bm{e}_i)}^2 + \norm{G(x^j_{i_j}, \xi^{i_j}_{*,j}) \prod_{q=j+1}^{t}\bar{W}^q (\frac{\bm{1_n}}{n}-\bm{e}_i)}^2 \bigg]$}\\
    &= \resizebox{.85\hsize}{!}{$2\E_{\xi \sim \mathcal{D}_i} \sum_{j=0}^{t}\bigg[ \norm{G(x^j_{i_j}, \xi^{i_j}_{*,j}) \prod_{q=j+1}^{t} (W_{i_q}^q - \phi 1^\intercal)(\frac{\bm{1_n}}{n}-\bm{e}_i)}^2 + \underbrace{\norm{G(x^j_{i_j}, \xi^{i_j}_{*,j}) (\frac{\bm{1_n}}{n}- \prod_{q=j+1}^{t}\bar{W}^q \bm{e}_i)}^2}_{= B_1} \bigg]$}
\end{align}
Using the result of Lemma \ref{lem:stochastic-bound}, as our communication graph $\mathcal{G}$ is uniformly strongly connected and $[W^t_{i_t}]_{i,i} \geq 1/n$ by construction, we see that 
\begin{align}
    |\big[ \prod_{q=j+1}^{t}\bar{W}^q -\phi 1^\intercal \big]_{h,k}| \leq 4 \nu^{t-j-1} \; \forall \; h, k.
\end{align}
Using this result, we find that
\begin{align}
    |\big[ \prod_{q=j+1}^{t} (W_{i_q}^q - \phi 1^\intercal)(\frac{\bm{1_n}}{n}-\bm{e}_i) \big]_{h}| \leq 8(\frac{n-1}{n}) \nu^{t-j-1} \; \forall \; h.
\end{align}
We can remove the -1 exponent by doubling the constant out front
\begin{align}
    |\big[ \prod_{q=j+1}^{t} (W_{i_q}^q - \phi 1^\intercal)(\frac{\bm{1_n}}{n}-\bm{e}_i) \big]_{h}| \leq 16(\frac{n-1}{n}) \nu^{t - j} \; \forall \; h.
\end{align}
Finally, using the fact that $G(x^j_{i_j}, \xi^{i_j}_{*,j})$ is all zeros except for one column, the $i_j$-th column, yields the desired result
\begin{align}
    &\E_{\xi \sim \mathcal{D}_i} \sum_{j=0}^{t}\norm{G(x^j_{i_j}, \xi^{i_j}_{*,j}) \prod_{q=j+1}^{t} (W_{i_q}^q - \phi 1^\intercal)(\frac{\bm{1_n}}{n}-\bm{e}_i)}^2 \nonumber \\
    &\leq \sum_{j=0}^{t} 256(\frac{n-1}{n})^2 \nu^{2(t - j)} \E_{\xi \sim \mathcal{D}_i} \norm{\frac{1}{M}\sum_{m=1}^M \nabla \ell(x^j_{i_j}, \xi^{i_j}_{*,j})}^2.
\end{align}
Utilizing Lemma \ref{lem:gradient-bound} yields
\begin{align}
     \leq 256(\frac{n-1}{n})^2 \sum_{j=0}^{t} \nu^{2(t - j)}\bigg( \frac{\sigma^2}{M} +  \norm{ \nabla f_{i_j}(x^t_{i_j})}^2 \bigg).
\end{align}
By properties of geometric series, and taking the expectation over worker $i_j$, we find
\begin{align}
     \leq \frac{256\sigma^2}{(1-\nu^2)M}(\frac{n-1}{n})^2 + 256(\frac{n-1}{n})^2 \sum_{j=0}^{t} \E_{i_j} \norm{ \nabla f_{i_j}(x^t_{i_j})}^2 \nu^{2(t - j)}.
\end{align}
Using the bound of $B_1$ in the main proof above, we arrive at the final bound of $B_3$
\begin{align}
    \E_{\xi \sim \mathcal{D}_i}& \sum_{j=0}^{t} \norm{G(x^j_{i_j}, \xi^{i_j}_{*,j}) \prod_{q=j+1}^{t} (W_{i_q}^q - \bar{W}^q)(\frac{\bm{1_n}}{n}-\bm{e}_i)}^2\nonumber \\
    \leq& \frac{512\sigma^2}{(1-\nu^2)M}(\frac{n-1}{n})^2 + \frac{4(n-1)\sigma^2}{(1-\rho)Mn} + 512(\frac{n-1}{n})^2 \sum_{j=0}^{t} \E_{i_j} \norm{ \nabla f_{i_j}(x^t_{i_j})}^2 \nu^{2(t - j)} \nonumber  \\
    &+ \frac{4(n-1)}{n}\sum_{j=0}^{t} \E_{i_j} \norm{\nabla f_{i_j}(x^j_{i_j})}^2\rho^{t-j} \\
    \leq& \frac{4(n-1)\sigma^2}{Mn}\bigg( \frac{1}{(1-\rho)} + \frac{128}{(1-\nu^2)} \bigg) \nonumber \\
    &+ \frac{4(n-1)}{n}\sum_{j=0}^{t} \E_{i_j} \norm{\nabla f_{i_j}(x^j_{i_j})}^2 \bigg( \rho^{t-j} + 128\nu^{2(t - j)} \bigg)
\end{align}
\textbf{Bounding Term $\bm{B_4}$.}
Following similar steps as bounding Term $B_2$ we find
\begin{align}
    &\resizebox{.9\hsize}{!}{$2 \E_{\xi \sim \mathcal{D}_i} \sum_{j=0}^{t}\sum_{j'=j+1}^{t} \langle G(x^j_{i_j}, \xi^{i_j}_{*,j}) \prod_{q=j+1}^{t} (W_{i_q}^q - \bar{W}^q)(\frac{\bm{1_n}}{n}-\bm{e}_i), G(x^{j'}_{i_{j'}}, \xi^{i_{j'}}_{*,j'}) \prod_{q=j'+1}^{t} (W_{i_q}^q - \bar{W}^q)(\frac{\bm{1_n}}{n}-\bm{e}_i) \rangle$} \nonumber \\ 
    \leq& 2 \E_{\xi \sim \mathcal{D}_i} \sum_{j=0}^{t} \sum_{j'=j+1}^{t} \bigg(  \frac{\norm{G(x^j_{i_j}, \xi^{i_j}_{*,j}) \prod_{q=j+1}^{t} (W_{i_q}^q - \bar{W}^q)(\frac{\bm{1_n}}{n}-\bm{e}_i)}^2}{2} \nonumber \\
    &+ \frac{\norm{G(x^{j'}_{i_{j'}}, \xi^{i_{j'}}_{*,j'}) \prod_{q=j'+1}^{t} (W_{i_q}^q - \bar{W}^q)(\frac{\bm{1_n}}{n}-\bm{e}_i)}^2}{2} \bigg)\\
    \leq& 2 \underbrace{\E_{\xi \sim \mathcal{D}_i} \sum_{j=0}^{t}  \norm{G(x^j_{i_j}, \xi^{i_j}_{*,j}) \prod_{q=j+1}^{t} (W_{i_q}^q - \bar{W}^q)(\frac{\bm{1_n}}{n}-\bm{e}_i)}^2}_{=B_3}
\end{align}
Once again, we can use the proof of Lemma \ref{lem:client-error-bound} to bound this result.

\textbf{Finishing Bound of Term $\bm{B}$.}
Putting all terms together, we find that Term $B$ is bounded above by
\begin{align}
    &\sum_{t=0}^{T-1}\sum_{i=1}^n p_i \norm{\bar{x}^{t+1} - x^{t+1}_{i}}^2 \nonumber \\
    \leq& 2\gamma^2 \sum_{t=0}^{T-1}\sum_{i=1}^n p_i \bigg(\frac{2(n-1)\sigma^2}{(1-\rho)Mn} + \frac{2(n-1)}{n}\sum_{j=0}^{t} \E_{i_j} \norm{\nabla f_{i_j}(x^j_{i_j})}^2\rho^{t-j} \nonumber \\
    &+  \frac{4(n-1)\sqrt{\rho}\sigma^2}{Mn(1-\sqrt{\rho})^2} + \frac{2(n-1)}{n} \sum_{j=0}^{t} \E_{i_j} \norm{\nabla f_{i_j}(x^j_{i_j})}^2 2(t-j)\rho^{\frac{t-j}{2}} \nonumber \\
    &+ \frac{12(n-1)\sigma^2}{Mn}\bigg( \frac{1}{(1-\rho)} + \frac{128}{(1-\nu^2)} \bigg) \nonumber \\
    &+ \frac{12(n-1)}{n}\sum_{j=0}^{t} \E_{i_j} \norm{\nabla f_{i_j}(x^j_{i_j})}^2 \bigg( \rho^{t-j} + 128\nu^{2(t - j)} \bigg) \bigg) 
\end{align}
Simplifying results in
\begin{align}
    \leq& 2\gamma^2 \sum_{t=0}^{T-1}\sum_{i=1}^n p_i \bigg(    \frac{4(n-1)\sigma^2}{Mn}\bigg( \frac{7}{2(1-\rho)} + \frac{\sqrt{\rho}}{(1-\sqrt{\rho})^2} + \frac{384}{(1-\nu^2)} \bigg) \nonumber \\
    &+ \frac{4(n-1)}{n}\sum_{j=0}^{t} \E_{i_j} \norm{\nabla f_{i_j}(x^j_{i_j})}^2 \bigg( \frac{7}{2}\rho^{t-j} + (t-j)\rho^{\frac{t-j}{2}} + 384\nu^{2(t-j)} \bigg) \bigg) \\
    \label{eq:rho_nu}
    \leq&  \frac{8\gamma^2 \sigma^2 T}{M}\bigg( \underbrace{ \frac{n-1}{n} (\frac{7}{2(1-\rho)} + \frac{\sqrt{\rho}}{(1-\sqrt{\rho})^2} + \frac{384}{(1-\nu^2)})}_{:= \rho_\nu} \bigg) \nonumber \\
    &+ \frac{8(n-1)\gamma^2}{n} \sum_{t=0}^{T-1}\sum_{i=1}^n p_i \sum_{j=0}^{t} \E_{i_j} \norm{\nabla f_{i_j}(x^j_{i_j})}^2 \bigg( \frac{7}{2}\rho^{t-j} + (t-j)\rho^{\frac{t-j}{2}} + 384\nu^{2(t-j)} \bigg) 
\end{align}
Using Lemma \ref{lem:outside-sum} results in
\begin{align}
    \leq&  \frac{8\gamma^2 \sigma^2 T \rho_\nu}{M} + \frac{8(n-1)\gamma^2}{n} \sum_{t=0}^{T-1}\sum_{i=1}^n p_i \sum_{j=0}^{t} \bigg( 2\norm{\sum_{i=1}^n p_{i}\nabla f_{i}(x_{i}^j)}^2 \nonumber \\
    &+ 12 L^2 \sum_{i=1}^n p_{i} \norm{ \bar{x}^j - x_{i}^j}^2+ 6\zeta^2 \bigg) \bigg( \frac{7}{2}\rho^{t-j} + (t-j)\rho^{\frac{t-j}{2}} + 384\nu^{2(t-j)} \bigg)
\end{align}
\begin{align}
    \leq&  \frac{8\gamma^2 \sigma^2 T \rho_\nu}{M} + \frac{16(n-1)\gamma^2}{n} \sum_{t=0}^{T-1} \sum_{j=0}^{t} \bigg( \norm{\sum_{i=1}^n p_{i}\nabla f_{i}(x_{i}^j)}^2 \nonumber \\
    &+ 6 L^2 \sum_{i=1}^n p_{i} \norm{ \bar{x}^j - x_{i}^j}^2+ 3\zeta^2 \bigg) \bigg( \frac{7}{2}\rho^{t-j} + (t-j)\rho^{\frac{t-j}{2}} + 384\nu^{2(t-j)} \bigg)\\
    \leq&  \frac{8\gamma^2 \sigma^2 T \rho_\nu}{M} + \frac{16(n-1)\gamma^2}{n} \sum_{j=0}^{T-1} \sum_{t=j+1}^{\infty} \bigg( \norm{\sum_{i=1}^n p_{i}\nabla f_{i}(x_{i}^j)}^2 \nonumber \\
    &+ 6 L^2 \sum_{i=1}^n p_{i} \norm{ \bar{x}^j - x_{i}^j}^2+ 3\zeta^2 \bigg) \bigg( \frac{7}{2}\rho^{t-j} + (t-j)\rho^{\frac{t-j}{2}} + 384\nu^{2(t-j)} \bigg)\\
    \leq&  \frac{8\gamma^2 \sigma^2 T \rho_\nu}{M} + \frac{16(n-1)\gamma^2}{n} \sum_{j=0}^{T-1} \bigg( \norm{\sum_{i=1}^n p_{i}\nabla f_{i}(x_{i}^j)}^2 \nonumber \\
    &+ 6 L^2 \sum_{i=1}^n p_{i} \norm{ \bar{x}^j - x_{i}^j}^2+ 3\zeta^2 \bigg) \bigg(\sum_{h=0}^{\infty} \frac{7}{2}\rho^{h} + h\rho^{\frac{h}{2}} + 384\nu^{2h} \bigg)\\
    \leq&  \frac{8\gamma^2 \sigma^2 T \rho_\nu}{M} + 16\rho_\nu \gamma^2 \sum_{j=0}^{T-1} \bigg( \norm{\sum_{i=1}^n p_{i}\nabla f_{i}(x_{i}^j)}^2 + 6 L^2 \sum_{i=1}^n p_{i} \norm{ \bar{x}^j - x_{i}^j}^2+ 3\zeta^2 \bigg)
\end{align}
Using \Eqref{eq:t-plus-1} ends with
\begin{align}
    \leq&  \frac{8\gamma^2 \sigma^2 T \rho_\nu}{M} + 48T\rho_\nu \gamma^2 \zeta^2 + 16\rho_\nu \gamma^2 \sum_{t=0}^{T-1} \norm{\sum_{i=1}^n p_{i}\nabla f_{i}(x_{i}^t)}^2 \nonumber \\
    &+ 96L^2 \rho_\nu \gamma^2 \sum_{t=0}^{T-1} \sum_{i=1}^n p_{i} \norm{ \bar{x}^{t+1} - x_{i}^{t+1}}^2 
\end{align}
Now subtract the final term on the right hand side from both sides
\begin{align}
    \sum_{t=0}^{T-1} \sum_{i=1}^n p_{i} \norm{ \bar{x}^{t+1} - x_{i}^{t+1}}^2 \bigg( \underbrace{1 - 96L^2 \rho_\nu \gamma^2}_{:= z} \bigg) \leq&  \frac{8\gamma^2 \sigma^2 T \rho_\nu}{M} + 48T\rho_\nu \gamma^2\zeta^2 \nonumber \\
    &+ 16\rho_\nu \gamma^2 \sum_{t=0}^{T-1} \norm{\sum_{i=1}^n p_{i}\nabla f_{i}(x_{i}^t)}^2
\end{align}
By Lemma \ref{lem:z}, we can divide $z$ from both sides
\begin{align}
    \sum_{t=0}^{T-1} \sum_{i=1}^n p_{i} \norm{ \bar{x}^{t+1} - x_{i}^{t+1}}^2 \leq  \frac{8\gamma^2 \sigma^2 T \rho_\nu}{Mz} + \frac{48T\rho_\nu \gamma^2 \zeta^2}{z} + \frac{16\rho_\nu \gamma^2}{z} \sum_{t=0}^{T-1} \norm{\sum_{i=1}^n p_{i}\nabla f_{i}(x_{i}^t)}^2
\end{align}
\textbf{Finishing Bound of Term $\bm{A}$}.
Substituting the bound of $B$ above into \Eqref{eq:main-4} yields
\begin{align}
     f(\bar{x}^{T}) - f(\bar{x}^{0}) \leq& - \frac{\gamma}{2n} \sum_{t=0}^{T-1} \norm{\nabla f(\bar{x}^{t})}^2 - \frac{\gamma}{2n} \sum_{t=0}^{T-1} \norm{\sum_{i=1}^n p_i \nabla f_i(x^t_{i})}^2 \nonumber \\
     &+ \bigg(\underbrace{\frac{3L}{(n^2 - 3)} + \frac{\gamma L^2 p_{max}}{2n}}_{:= \phi} \bigg) \bigg( \frac{8\gamma^2 \sigma^2 T \rho_\nu}{Mz} + \frac{48T\rho_\nu \gamma^2 \zeta^2}{z} \nonumber \\
     &+ \frac{16\rho_\nu \gamma^2}{z} \sum_{t=0}^{T-1} \norm{\sum_{i=1}^n p_{i}\nabla f_{i}(x_{i}^t)}^2 \bigg)
\end{align}
Rearranging terms simplifies the inequality above to
\begin{align}
     f(\bar{x}^{T}) - f(\bar{x}^{0}) \leq& - \frac{\gamma}{2n} \sum_{t=0}^{T-1} \norm{\nabla f(\bar{x}^{t})}^2 + \frac{\gamma}{2n} \bigg(\frac{32\rho_\nu \phi \gamma n}{z} - 1 \bigg) \sum_{t=0}^{T-1} \norm{\sum_{i=1}^n p_i \nabla f_i(x^t_{i})}^2 \nonumber \\
     &+  \frac{8\phi \gamma^2 \sigma^2 T \rho_\nu}{Mz} + \frac{48\phi T\rho_\nu \gamma^2 \zeta^2}{z} 
\end{align}
From Lemma \ref{lem:bound1}, we find that $(1 - \frac{32\rho_\nu \phi \gamma n}{z}) \geq 0$.
Therefore, the second term of the right hand side above can be removed.
\begin{align}
     f(\bar{x}^{T}) - f(\bar{x}^{0}) \leq - \frac{\gamma}{2n} \sum_{t=0}^{T-1} \norm{\nabla f(\bar{x}^{t})}^2 +  \frac{8\phi \gamma^2 \sigma^2 T \rho_\nu}{Mz} + \frac{48\phi T\rho_\nu \gamma^2\zeta^2}{z} 
\end{align}
Rearranging the inequality above and dividing by $T$ yields
\begin{align}
    \frac{1}{T}\sum_{t=0}^{T-1} \norm{\nabla f(\bar{x}^{t})}^2 &\leq \frac{2n\big(f(\bar{x}^{0}) - f(\bar{x}^{T})\big)}{T\gamma} +  \frac{16n\phi \gamma \sigma^2 \rho_\nu}{Mz} + \frac{96n \phi \rho_\nu \gamma \zeta^2}{z}\\
    &= \frac{2n\big(f(\bar{x}^{0}) - f(\bar{x}^{T})\big)}{T\gamma} + \frac{16\gamma\phi n \rho_\nu}{z} \big( \frac{\sigma^2}{M} + 6\zeta^2 \big)
\end{align}
From Lemmas \ref{lem:z} and \ref{lem:phi}, the inequality above becomes
\begin{align}
    \frac{1}{T}\sum_{t=0}^{T-1} \norm{\nabla f(\bar{x}^{t})}^2 &\leq \frac{2n\big(f(\bar{x}^{0}) - f(\bar{x}^{T})\big)}{T\gamma} + \frac{\frac{1921}{10}L\gamma \rho_\nu}{n} \big( \frac{\sigma^2}{M} + 6\zeta^2 \big)
\end{align}
Substituting in the defined step-size $\gamma$ (as well as its bound) yields
\begin{align}
    \frac{1}{T}\sum_{t=0}^{T-1} \norm{\nabla f(\bar{x}^{t})}^2 \leq& \frac{2n\big(f(\bar{x}^{0}) - f(\bar{x}^{T})\big)}{T}\bigg( \frac{\sqrt{TL} + \sqrt{M}}{\sqrt{Mn^2 \Delta_f}}\bigg) \nonumber \\
    &+ \frac{\frac{1921}{10}L}{n}\bigg(\frac{\sqrt{Mn^2 \Delta_f}}{\sqrt{TL}}\bigg) \rho_\nu \big( \frac{\sigma^2}{M} + 6\zeta^2 \big)\\
    =& \frac{2\sqrt{\Delta_f}}{T} + \frac{2\sqrt{L \Delta_f}}{\sqrt{TM}} + \frac{\frac{1921}{10}\sqrt{L \Delta_f}\rho_\nu \big( \sigma^2 + 6\zeta^2 \sqrt{M} \big)}{\sqrt{TM}}
\end{align}
The final desired result is shown as
\begin{align}
    \frac{1}{T}\sum_{t=0}^{T-1} \norm{\nabla f(\bar{x}^{t})}^2 &\leq \frac{2\sqrt{\Delta_f}}{T} + \frac{2\sqrt{L \Delta_f}\bigg(1 + \frac{1921}{20}\rho_\nu \big( \sigma^2 + 6\zeta^2 \sqrt{M} \big) \bigg)}{\sqrt{TM}}
\end{align}

\end{proof}

\section{Additional Lemmas}

\begin{lemma}[From Lemma 3 in Lian et al. 2018 \citep{lian2018asynchronous}]
\label{lem:matrix-bound}
Let $W^t$ be a symmetric doubly stochastic matrix for each iteration $t$. Then
$$
\norm{\frac{\bm{1}_n}{n} - \prod_{t=1}^T W^t e_i}^2 \leq \frac{n-1}{n}\rho^T, \; \forall T \geq 0.
$$
\end{lemma}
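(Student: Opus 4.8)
The plan is to exploit that $\bm{1}_n$ is a common (right-)eigenvector of every factor with eigenvalue $1$, and that on the orthogonal complement $\bm{1}_n^{\perp}$ each symmetric doubly-stochastic matrix acts as a contraction. First I would note that double stochasticity gives $W^t \bm{1}_n = \bm{1}_n$, hence $\big(\prod_{t=1}^T W^t\big)\tfrac{\bm{1}_n}{n} = \tfrac{\bm{1}_n}{n}$, which lets me recentre the target quantity as
\begin{equation}
\frac{\bm{1}_n}{n} - \prod_{t=1}^T W^t \bm{e}_i = \prod_{t=1}^T W^t \frac{\bm{1}_n}{n} - \prod_{t=1}^T W^t \bm{e}_i = -\prod_{t=1}^T W^t\Big(\bm{e}_i - \frac{\bm{1}_n}{n}\Big).
\end{equation}
The new initial vector $v := \bm{e}_i - \tfrac{\bm{1}_n}{n}$ is orthogonal to $\bm{1}_n$, since $\bm{1}_n^{\intercal} v = 1 - n\cdot\tfrac1n = 0$.

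The key step is to control the action of each $W^t$ on $\bm{1}_n^{\perp}$. Because $W^t$ is symmetric, any $w \perp \bm{1}_n$ has image again in $\bm{1}_n^{\perp}$: indeed $\bm{1}_n^{\intercal} W^t w = (W^t \bm{1}_n)^{\intercal} w = \bm{1}_n^{\intercal} w = 0$. Moreover, the eigenvalues of $(W^t)^{\intercal} W^t = (W^t)^2$ are the squares of the eigenvalues of $W^t$; the top one equals $1$ with eigenspace exactly $\mathrm{span}(\bm{1}_n)$ (simple because $\rho<1$ forces the second eigenvalue below $1$), and by the definition of $\rho$ every remaining eigenvalue is at most $\rho$ in magnitude. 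Hence the spectral norm of $W^t$ restricted to $\bm{1}_n^{\perp}$ is at most $\sqrt{\rho}$, i.e. $\norm{W^t w} \leq \sqrt{\rho}\,\norm{w}$ for all $w \in \bm{1}_n^{\perp}$. Applying this estimate once per factor — valid irrespective of the product order, since each application both preserves orthogonality to $\bm{1}_n$ and shrinks the norm by $\sqrt{\rho}$ — yields
\begin{equation}
\norm{\prod_{t=1}^T W^t\, v}^2 \leq \rho^T \norm{v}^2.
\end{equation}

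It then only remains to evaluate $\norm{v}^2$: a direct computation gives $\norm{\bm{e}_i - \tfrac{\bm{1}_n}{n}}^2 = (1-\tfrac1n)^2 + (n-1)\tfrac1{n^2} = \tfrac{n-1}{n}$, and combining the two displays produces $\norm{\tfrac{\bm{1}_n}{n} - \prod_{t=1}^T W^t \bm{e}_i}^2 \leq \tfrac{n-1}{n}\rho^T$, with the case $T=0$ holding with equality. The main obstacle to guard against is that $\prod_{t=1}^T W^t$ is in general \emph{not} symmetric, so one cannot directly bound its second singular value by $\rho^{T/2}$; instead the argument must proceed factor-by-factor, using the invariance of $\bm{1}_n^{\perp}$ under each individual symmetric $W^t$ together with its per-step contraction constant $\sqrt{\rho}$, and taking care to identify $\rho$ as the bound on the squared nontrivial singular values of $W^t$ rather than a bound on $W^t$ itself.
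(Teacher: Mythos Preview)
Your argument is correct. The paper does not supply its own proof of this lemma; it simply cites Lemma~3 of Lian et al.~2018 and uses the result as a black box. What you have written is precisely the standard proof underlying that cited result: recentre via $W^t\bm{1}_n=\bm{1}_n$, observe that $v=\bm{e}_i-\tfrac{\bm{1}_n}{n}\in\bm{1}_n^{\perp}$, use symmetry of each $W^t$ to keep iterates in $\bm{1}_n^{\perp}$, apply the per-factor contraction $\norm{W^t w}\le\sqrt{\rho}\,\norm{w}$ for $w\perp\bm{1}_n$ (which follows from the definition of $\rho$ as a bound on the nontrivial eigenvalues of $(W^t)^{\intercal}W^t=(W^t)^2$), and finish with $\norm{v}^2=\tfrac{n-1}{n}$. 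Your closing remark about why one must argue factor-by-factor rather than bound the product's second singular value directly is also on point.
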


\begin{lemma}[From Corollary 2 in Nedic and Olshevsky 2014 \citep{nedic2014distributed}]
\label{lem:stochastic-bound}
Let the communication graph $\mathcal{G}$ be uniformly strongly connected (otherwise known as $B$-strongly-connected for some integer $B > 0$), and $A(t) \in \R^{n \times n}$ be a column stochastic matrix with $[A(t)]_{i,i} \geq 1/n \; \forall i, t$.
Define the product of matrices $A(t)$ through $A(s)$ (for $t \geq s \geq 0$) as $A(t:s) := A(t)\ldots A(s)$.
Then, there exists a stochastic vector $\phi(t) \in \R^n$ such that
$$
|[A(t:s)]_{i,j} - \phi_i (t)| \leq C \nu^{t-s}
$$
will always hold for the following values of $C$ and $\nu$,
$$
C=4, \quad \nu = (1 - 1/n^{nB})^{1/B} < 1.
$$
\end{lemma}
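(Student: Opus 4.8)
The plan is to establish this as a consensus (ergodicity) estimate for backward products of column-stochastic matrices, following the argument of Nedic--Olshevsky \citep{nedic2014distributed}, whose Corollary~2 gives exactly this statement. Since each $A(t)$ is column-stochastic, every column of the product $A(t:s)$ is again a probability vector over the row index, so the content of the claim is that all $n$ columns of $A(t:s)$ collapse onto a single stochastic vector $\phi(t)$ at a geometric rate. The natural tool is the Dobrushin coefficient of ergodicity applied to the transposed (row-stochastic) matrices $A(t)^\intercal$: define $d(M) := \tfrac12 \max_{j,k}\sum_i |[M]_{i,j}-[M]_{i,k}|$, which measures the maximal total-variation spread between columns of $M$. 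First I would record that $d$ is submultiplicative, $d(M_1 M_2)\le d(M_1)\, d(M_2)$, which is the classical ergodicity-coefficient inequality read through the transpose, and that $d(M) \le 1 - n \min_{i,j}[M]_{i,j}$ whenever all entries are positive.

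The crux is to produce a uniform positive lower bound on the entries of block products. Here I would use the diagonal condition $[A(t)]_{i,i}\ge 1/n$: positive self-weights let a directed path of length $\ell \le n-1$ be padded to any longer horizon by holding mass in place, so the $(i,j)$ entry of a product over a window is bounded below as soon as a temporally ordered $i\to j$ path exists. Combining this with $B$-strong-connectivity --- the union of the communication graphs over every window of $B$ consecutive steps is strongly connected, hence of diameter at most $n-1$ --- I would show that over any block of $nB$ consecutive steps every entry of the corresponding product is at least $\beta := (1/n)^{nB}$. By the two facts above, each such block then has $d \le 1 - n\beta$, so $d(A(t:s))$ contracts by this factor every $nB$ steps; unwinding to a per-step rate yields the stated $\nu = (1-1/n^{nB})^{1/B}$.

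Finally I would extract the limit vector and the explicit constant. Geometric decay of $d(A(t:s))$ makes the columns of $A(t:s)$ a Cauchy family as $s$ decreases, so for fixed terminal time $t$ they converge to a common stochastic vector $\phi(t)$ (the absolute probability sequence of $\{A(\cdot)\}$ at time $t$); the deviation $|[A(t:s)]_{i,j}-\phi_i(t)|$ is then controlled by the accumulated spread, and summing the geometric tail together with the factor-of-two conversions between the $\ell_1$ spread and entrywise differences produces the constant $C=4$.

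The main obstacle is not the contraction mechanism itself but the directed, time-varying, column-stochastic setting: unlike the symmetric doubly-stochastic case (Lemma~\ref{lem:matrix-bound}), there is no fixed stationary vector, the limit $\phi(t)$ depends on the terminal time, and a single strongly connected window does not by itself yield positive products --- information must propagate across up to $n-1$ successive windows, which is precisely why the horizon $nB$ and the resulting mixing constant $1/n^{nB}$ appear. Verifying that our matrices satisfy the hypotheses --- each $W^t_{i_t}$ is column-stochastic with $[W^t_{i_t}]_{i,i}\ge 1/n$ by the construction in \Eqref{eq:weight-matrix} --- and that the constants specialize to $C=4$ and the stated $\nu$ is then a matter of matching our assumptions to those of Corollary~2 in \citep{nedic2014distributed}.
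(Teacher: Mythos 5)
The paper never proves this lemma: as its bracketed title says, it is imported wholesale from Corollary~2 of Nedic and Olshevsky \citep{nedic2014distributed}, and no proof appears in the appendix (unlike Lemmas~\ref{lem:client-error-bound} and \ref{lem:gradient-bound}--\ref{lem:bound1}). Your sketch is a faithful reconstruction of the argument behind that corollary --- submultiplicativity of the ergodicity coefficient under transposition, uniform positivity of block products over windows of length $nB$, per-step contraction rate $\nu=(1-1/n^{nB})^{1/B}$, and a Cauchy argument extracting the absolute probability sequence $\phi(t)$ with the bookkeeping constant $C=4$ --- so structurally it matches the cited source, which is the only ``proof'' the paper has.

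One step, however, outruns the hypotheses as transcribed here, and it is worth flagging because the transcription itself is too weak. Your block-positivity step bounds $[A(t:s)]_{i,j}$ below by $(1/n)^{nB}$ by padding a temporally ordered $i\to j$ path with self-loops; for a product of at most $nB$ factors to be at least $(1/n)^{nB}$, \emph{every} factor along the path --- including the off-diagonal entries realizing the edges of the $B$-strongly-connected windows --- must be at least $1/n$, whereas the lemma as stated only guarantees $[A(t)]_{i,i}\geq 1/n$ on the diagonal. With the diagonal bound alone the conclusion is false with these constants: take $n=2$, $B=1$, and the constant column-stochastic matrix $A(t)\equiv\begin{pmatrix}1-\epsilon & \epsilon\\ \epsilon & 1-\epsilon\end{pmatrix}$ with $\epsilon$ small; the graph is strongly connected every step and the diagonals exceed $1/2$, yet the columns of $A(t:s)$ merge only at rate $(1-2\epsilon)^{t-s}$, which for small $\epsilon$ eventually violates $4\nu^{t-s}$ with $\nu=3/4$. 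In \citep{nedic2014distributed} this cannot happen because their push-sum matrices have entries $1/d_j(t)\geq 1/n$ whenever nonzero, which is the hypothesis your padding argument actually uses and which the paper's statement silently drops; note also that the paper's own matrices $W^t_{i_t}$ from \Eqref{eq:weight-matrix} guarantee only $w_{i,i}\geq 1/n$, with CCS-assigned off-diagonal coefficients that may be arbitrarily small, so if you want your proof (or the paper's invocation of this lemma inside Lemma~\ref{lem:client-error-bound}) to be airtight, you should strengthen the hypothesis to a uniform lower bound on all positive entries and let that bound, rather than $1/n$ alone, enter $\nu$.
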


\begin{lemma}
\label{lem:gradient-bound}
Under Assumption 1, the following inequality holds
$$
\E_{\xi \sim \mathcal{D}_i} \norm{ \frac{1}{M} \sum_{m=1}^M \nabla \ell (x^t_{i}, \xi^{i}_{m,t}) }^2 \leq \frac{\sigma^2}{M} +  \norm{ \nabla f_{i}(x^t_{i})}^2.
$$
    
\end{lemma}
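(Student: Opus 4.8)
The plan is to prove this via the standard bias–variance decomposition of the mini-batch gradient estimator. Write $Y := \frac{1}{M}\sum_{m=1}^M \nabla \ell(x_i^t, \xi^i_{m,t})$ for the averaged stochastic gradient over the mini-batch. The identity I will invoke is that for any random vector, $\E\norm{Y}^2 = \norm{\E Y}^2 + \E\norm{Y - \E Y}^2$, so the squared norm of the estimator splits cleanly into a squared-mean (bias) piece and a variance piece. The goal is then to identify the first piece with $\norm{\nabla f_i(x_i^t)}^2$ and bound the second by $\sigma^2/M$.

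First I would compute $\E Y$. Since the mini-batch samples $\{\xi^i_{m,t}\}_{m=1}^M$ are each drawn from $\mathcal{D}_i$, linearity of expectation together with the unbiasedness hypothesis (Assumption~\ref{assum:unbiasedgradient}), which gives $\E_{\xi \sim \mathcal{D}_i}[\nabla \ell(x_i^t;\xi)] = \nabla f_i(x_i^t)$ for each term, yields $\E Y = \frac{1}{M}\sum_{m=1}^M \nabla f_i(x_i^t) = \nabla f_i(x_i^t)$. Hence $\norm{\E Y}^2 = \norm{\nabla f_i(x_i^t)}^2$, which already accounts for the second term on the right-hand side of the claimed bound.

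Next I would handle the variance term $\E\norm{Y - \E Y}^2 = \E\big\|\frac{1}{M}\sum_{m=1}^M (\nabla \ell(x_i^t,\xi^i_{m,t}) - \nabla f_i(x_i^t))\big\|^2$. The key step is that the summands are independent and each mean-zero (by the computation above), so upon expanding the square all cross terms $\E\langle \nabla \ell(x_i^t,\xi^i_{m,t}) - \nabla f_i(x_i^t),\ \nabla \ell(x_i^t,\xi^i_{m',t}) - \nabla f_i(x_i^t)\rangle$ with $m \neq m'$ vanish, leaving only the $M$ diagonal terms scaled by $1/M^2$. Each diagonal term is bounded by $\sigma^2$ via the variance hypothesis (Assumption~\ref{assum:boundedvariance}), so $\E\norm{Y - \E Y}^2 \leq \frac{1}{M^2}\cdot M \sigma^2 = \frac{\sigma^2}{M}$. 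Adding the two pieces gives exactly $\E\norm{Y}^2 \leq \frac{\sigma^2}{M} + \norm{\nabla f_i(x_i^t)}^2$.

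The only real obstacle is the independence argument that eliminates the cross terms: this requires that the mini-batch is sampled i.i.d.\ from $\mathcal{D}_i$ (with or without replacement, though without replacement would only improve the constant), which is implicit in the mini-batch sampling of Algorithm~\ref{alg:swift}. Everything else is routine. I note that, despite the lemma being stated ``Under Assumption~1,'' the proof actually relies on Assumptions~\ref{assum:unbiasedgradient} and~\ref{assum:boundedvariance}; the smoothness assumption is not needed here.
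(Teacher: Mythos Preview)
Your proposal is correct and is essentially the same argument as the paper's: both add and subtract $\nabla f_i(x_i^t)$, use unbiasedness to kill the cross term, and use independence across the $M$ samples together with the variance bound to get $\sigma^2/M$. Your observation that the proof actually relies on Assumptions~\ref{assum:unbiasedgradient} and~\ref{assum:boundedvariance} rather than Assumption~\ref{assum:gradientlip} is also accurate.
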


\begin{lemma}
\label{lem:outside-sum}
Under Assumption 1, the following inequality holds
$$
\E_i \norm{\nabla f_i(x_i^t)}^2 \leq 2\norm{\sum_{i=1}^n p_{i}\nabla f_{i}(x_{i}^t)}^2 + 12 L^2 \sum_{i=1}^n p_{i} \norm{ \bar{x}^t - x_{i}^t}^2 + 6\zeta^2.
$$
\end{lemma}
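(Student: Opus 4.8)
The plan is to prove the stated inequality by establishing the stronger bound with constants $2$, $4L^2$, and $4\zeta^2$ in place of $2$, $12L^2$, and $6$; since every term on the right is nonnegative and these constants do not exceed those claimed, the stated inequality follows at once. Throughout I write $\E_i[\cdot] = \sum_{i=1}^n p_i(\cdot)$ and set $\bar{g}^t := \sum_{i=1}^n p_i \nabla f_i(x_i^t)$, so that $\bar{g}^t$ is exactly the $p$-weighted mean $\E_i[\nabla f_i(x_i^t)]$, and $S := \sum_{i=1}^n p_i \norm{\bar{x}^t - x_i^t}^2 \ge 0$ is the consensus-error quantity appearing on the right-hand side.

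First I would peel off the mean. Writing $\nabla f_i(x_i^t) = \bar{g}^t + \bigl(\nabla f_i(x_i^t) - \bar{g}^t\bigr)$, applying $\norm{a+b}^2 \le 2\norm{a}^2 + 2\norm{b}^2$, and taking $\E_i$ (with $\bar{g}^t$ constant in $i$) gives
\[
\E_i \norm{\nabla f_i(x_i^t)}^2 \le 2\norm{\bar{g}^t}^2 + 2\,\E_i\norm{\nabla f_i(x_i^t) - \bar{g}^t}^2 .
\]
This already produces the leading term $2\norm{\sum_i p_i \nabla f_i(x_i^t)}^2$, so the task reduces to controlling the variance term $\E_i\norm{\nabla f_i(x_i^t) - \bar{g}^t}^2$ by $S$ and $\zeta^2$.

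The key step is to re-center this variance. Because $\bar{g}^t = \E_i[\nabla f_i(x_i^t)]$, the identity $\E\norm{Y - c}^2 = \E\norm{Y - \E Y}^2 + \norm{\E Y - c}^2$ shows the mean minimizes the mean-squared deviation, so I may replace $\bar{g}^t$ by the more convenient center $\nabla f(\bar{x}^t) = \sum_i p_i \nabla f_i(\bar{x}^t)$, obtaining $\E_i\norm{\nabla f_i(x_i^t) - \bar{g}^t}^2 \le \E_i\norm{\nabla f_i(x_i^t) - \nabla f(\bar{x}^t)}^2$. I would then split $\nabla f_i(x_i^t) - \nabla f(\bar{x}^t) = \bigl(\nabla f_i(x_i^t) - \nabla f_i(\bar{x}^t)\bigr) + \bigl(\nabla f_i(\bar{x}^t) - \nabla f(\bar{x}^t)\bigr)$ and again use $\norm{a+b}^2 \le 2\norm{a}^2 + 2\norm{b}^2$. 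The first piece is handled by $L$-smoothness of each $f_i$ (Assumption~\ref{assum:gradientlip}), giving $\E_i\norm{\nabla f_i(x_i^t) - \nabla f_i(\bar{x}^t)}^2 \le L^2 S$; the second is precisely the inter-client gradient variance evaluated at $\bar{x}^t$, which Assumption~\ref{assum:boundedvariance} bounds by $\zeta^2$. Combining yields $\E_i\norm{\nabla f_i(x_i^t) - \bar{g}^t}^2 \le 2L^2 S + 2\zeta^2$, and substituting into the display above gives $\E_i\norm{\nabla f_i(x_i^t)}^2 \le 2\norm{\bar{g}^t}^2 + 4L^2 S + 4\zeta^2$, which implies the claim.

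The main obstacle — indeed the only non-mechanical point — is the re-centering step: one must notice that $\bar{g}^t$ is the $p$-weighted mean of the $\nabla f_i(x_i^t)$, so that the mean-minimizes-deviation inequality licenses swapping it for $\nabla f(\bar{x}^t)$. Without this observation, a direct triangle-inequality split of $\nabla f_i(x_i^t) - \bar{g}^t$ additionally forces one to bound $\norm{\nabla f(\bar{x}^t) - \bar{g}^t}^2$ separately (via Jensen's inequality for $v \mapsto \norm{v}^2$ against the weights $p_i$, together with smoothness), which inflates the constants; the re-centering avoids this and comfortably meets the stated bound. The remaining steps are routine applications of $\norm{a+b}^2 \le 2\norm{a}^2 + 2\norm{b}^2$, Jensen's inequality, and the two assumptions.
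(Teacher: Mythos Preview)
Your proof is correct and in fact yields the sharper bound $2\norm{\bar g^t}^2 + 4L^2 S + 4\zeta^2$, which implies the stated inequality. Both you and the paper begin identically by peeling off the mean $\bar g^t$; the divergence is in how the variance term $\E_i\norm{\nabla f_i(x_i^t)-\bar g^t}^2$ is handled. The paper inserts $\pm\nabla f_i(\bar x^t)$ and $\pm\nabla f(\bar x^t)$ and applies a three-term split $\norm{a+b+c}^2\le 3(\norm a^2+\norm b^2+\norm c^2)$, which produces an extra cross-client smoothness term $\E_i\norm{\sum_j p_j(\nabla f_j(\bar x^t)-\nabla f_j(x_j^t))}^2\le L^2 S$ and hence the larger constants $12L^2$ and $6$. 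Your observation that $\bar g^t$ is the $p$-weighted mean lets you re-center to $\nabla f(\bar x^t)$ at no cost via the bias--variance identity, after which only a two-term split is needed and that third term never appears. The payoff is tighter constants with a shorter argument; the paper's route is slightly more mechanical but wastes a factor.
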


\begin{lemma}
\label{lem:z}
Given the defined step-size $\gamma$ and total iterations $T$ in Theorem \ref{thm:convergence}, the term $z$ is bounded
$$
1> z := 1 - 96L^2 \rho_\nu \gamma^2 \geq 1 -  \frac{384}{193^2 (775)}.
$$
\end{lemma}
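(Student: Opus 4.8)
The statement has two halves. The strict upper bound $z<1$ is immediate: since $L,\rho_\nu,\gamma>0$, the subtracted quantity $96L^2\rho_\nu\gamma^2$ is strictly positive, so $z=1-96L^2\rho_\nu\gamma^2<1$. All the work is in the lower bound, which is equivalent to the claim $96L^2\rho_\nu\gamma^2\le \frac{384}{193^2(775)}$. The plan is to bound $\gamma^2$ from above using the prescribed step-size and iteration count, substitute, and reduce the whole thing to a clean lower bound on the scalar $\rho_\nu n^2 p_{max}^2$.

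First I would use the step-size bound $\gamma\le\sqrt{Mn^2\Delta_f/(TL)}$ together with $T\ge 193^2 LM\Delta_f\rho_\nu^2 n^4 p_{max}^2$. Substituting the lower bound on $T$ into the bound on $\gamma$ (exactly the computation already carried out in the ``Optimal Step-Size'' discussion of Appendix~\ref{app:matrix-properties}) gives $\gamma^2\le \frac{1}{193^2 L^2\rho_\nu^2 n^2 p_{max}^2}$. Plugging this into $96L^2\rho_\nu\gamma^2$ cancels the $L^2$ and one power of $\rho_\nu$, leaving $96L^2\rho_\nu\gamma^2\le \frac{96}{193^2\rho_\nu n^2 p_{max}^2}$. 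Comparing with the target $\frac{384}{193^2(775)}$ and noting $384=96\cdot4$, the claim reduces to showing $\rho_\nu n^2 p_{max}^2\ge \frac{775}{4}$.

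It then suffices to lower-bound the two factors separately. For the probability factor, since $\sum_i p_i=1$ over $n$ clients the maximum satisfies $p_{max}\ge 1/n$, hence $n^2 p_{max}^2\ge 1$. For $\rho_\nu$ I would invoke its definition $\rho_\nu=\frac{n-1}{n}\big(\frac{7}{2(1-\rho)}+\frac{\sqrt{\rho}}{(1-\sqrt{\rho})^2}+\frac{384}{1-\nu^2}\big)$ and bound each piece over the admissible ranges: $n\ge2$ gives $\frac{n-1}{n}\ge\frac12$, while $\rho,\nu\in[0,1)$ give $\frac{7}{2(1-\rho)}\ge\frac72$, $\frac{\sqrt\rho}{(1-\sqrt\rho)^2}\ge0$, and $\frac{384}{1-\nu^2}\ge384$. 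Thus $\rho_\nu\ge\frac12\big(\frac72+384\big)=\frac{775}{4}$, and multiplying by $n^2 p_{max}^2\ge1$ yields $\rho_\nu n^2 p_{max}^2\ge \frac{775}{4}$, closing the argument.

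The argument is essentially a chain of substitutions, so there is no deep obstacle; the only point requiring care is getting the constants to line up exactly. The worst case $\rho=\nu=0$, $n=2$ (where $\frac{n-1}{n}=\frac12$) is precisely what makes $\rho_\nu$ as small as $\frac{775}{4}$, which is why the constant $775$ appears in the statement; I would double-check this boundary case to confirm the bound is tight and that $n\ge2$ (rather than a larger $n$) is the binding constraint.
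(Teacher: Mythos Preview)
Your proposal is correct and follows essentially the same route as the paper: substitute the step-size and iteration bounds to get $96L^2\rho_\nu\gamma^2\le \frac{96}{193^2\rho_\nu n^2 p_{max}^2}$, then invoke $n\ge2$, $p_{max}\ge 1/n$, and $\rho_\nu\ge 775/4$ to finish. You are in fact slightly more careful than the paper, which simply asserts $\rho_\nu\ge 775/4$ without deriving it from the definition as you do.
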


\begin{lemma}
\label{lem:phi}
Given the defined step-size $\gamma$ and total iterations $T$ in Theorem \ref{thm:convergence}, the term $\phi$ is bounded
$$
 \phi := \frac{3L}{(n^2 - 3)} + \frac{L^2\gamma p_{max}}{2n} \leq \frac{L}{n^2}(12 + \frac{2}{775(193)}).
$$
\end{lemma}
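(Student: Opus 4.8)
The plan is to bound the two summands of $\phi = \frac{3L}{n^2-3} + \frac{L^2 \gamma p_{max}}{2n}$ separately and add the results; the first term will supply the constant $12$ and the second the constant $\frac{2}{775\cdot 193}$. For the first term I would use only the standing assumption $n \geq 2$ (at least two clients run the algorithm). The claim $\frac{3L}{n^2-3} \leq \frac{12L}{n^2}$ is equivalent, after clearing denominators, to $3n^2 \leq 12(n^2-3)$, i.e. $n^2 \geq 4$, which holds for every $n \geq 2$ with equality at $n=2$. This disposes of the first summand.

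For the second term I would substitute the step-size bound already recorded in the appendix. Starting from $\gamma \leq \sqrt{Mn^2\Delta_f/(TL)}$ and the iteration requirement $T \geq 193^2 LM\Delta_f \rho_\nu^2 n^4 p_{max}^2$ from Theorem~\ref{thm:convergence}, one obtains $\gamma \leq \frac{1}{193 L \rho_\nu n p_{max}}$. Plugging this into $\frac{L^2\gamma p_{max}}{2n}$, the factors of $L$, $p_{max}$, and one power of $n$ cancel, leaving $\frac{L^2\gamma p_{max}}{2n} \leq \frac{L}{386 \rho_\nu n^2}$, where $386 = 2\cdot 193$.

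The crux is then the lower bound $\rho_\nu \geq 775/4$, which converts the $\rho_\nu$ in the denominator into an explicit constant. This follows directly from the definition $\rho_\nu = \frac{n-1}{n}\big(\frac{7}{2(1-\rho)} + \frac{\sqrt{\rho}}{(1-\sqrt{\rho})^2} + \frac{384}{1-\nu^2}\big)$: each bracketed term is minimized at $\rho = 0$ and $\nu = 0$, giving $\frac{7}{2} + 0 + 384 = \frac{775}{2}$, while $\frac{n-1}{n} \geq \frac{1}{2}$ for $n \geq 2$, so $\rho_\nu \geq \frac{1}{2}\cdot\frac{775}{2} = \frac{775}{4}$. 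Substituting yields $\frac{L}{386\rho_\nu n^2} \leq \frac{L}{386(775/4)n^2} = \frac{L}{n^2}\cdot\frac{2}{775\cdot 193}$, and adding the two bounds gives exactly $\phi \leq \frac{L}{n^2}\big(12 + \frac{2}{775\cdot 193}\big)$.

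Although every step is elementary, the main obstacle is matching the constants exactly: the bound is tight, with equality attained simultaneously at $n=2$, $\rho=0$, and $\nu=0$. Consequently I must use the sharp estimates $\frac{3}{n^2-3}\leq \frac{12}{n^2}$ and $\rho_\nu \geq \frac{775}{4}$ rather than looser ones, and keep track of the cancellation $386 = 2\cdot 193$ so that the factor of $193$ coming from the step-size denominator aligns precisely with the $193$ appearing in the target expression.
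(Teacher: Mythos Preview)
Your proposal is correct and follows essentially the same route as the paper: factor out $L/n^{2}$, bound $\tfrac{3}{1-3/n^{2}}\le 12$ via $n\ge 2$, substitute $\gamma\le \tfrac{1}{193L\rho_\nu n p_{\max}}$ from the definition of $T$, and then invoke $\rho_\nu\ge 775/4$. The only difference is cosmetic—you bound the two summands separately rather than first extracting the common $L/n^{2}$—and you additionally justify the inequality $\rho_\nu\ge 775/4$, which the paper simply asserts.
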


\begin{lemma}
\label{lem:bound1}
Given the defined step-size $\gamma$ and total iterations $T$ in Theorem \ref{thm:convergence}, we find the following bound
$$
\big(1 - \frac{32\rho_\nu \phi \gamma n}{z}  \big) \geq 0
$$
\end{lemma}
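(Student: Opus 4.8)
The plan is to reduce the claim $1 - \frac{32\rho_\nu \phi \gamma n}{z} \geq 0$ to the equivalent algebraic inequality $32\rho_\nu \phi \gamma n \leq z$, which is legitimate because $z>0$ by Lemma~\ref{lem:z}. I would then bound each factor on the left using results already in hand, arranging the cancellations so that the entire product collapses to a purely numerical quantity independent of $L$, $\rho_\nu$, $M$, and $\Delta_f$, and compare it against the lower bound on $z$.

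First I would record the step-size bound in its cleanest form. Substituting the iteration requirement $T \geq 193^2 LM\Delta_f \rho_\nu^2 n^4 p_{max}^2$ into $\gamma \leq \sqrt{Mn^2\Delta_f/(TL)}$ yields
\begin{align*}
\gamma \leq \frac{1}{193\, L \rho_\nu n\, p_{max}},
\end{align*}
exactly as derived in the step-size discussion of Appendix~\ref{app:matrix-properties}. The essential feature is that this bound supplies a factor $1/(\rho_\nu L)$ that will cancel the $\rho_\nu$ appearing in $\rho_\nu\phi\gamma n$ and the $L$ hidden inside $\phi$.

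Next I would insert the bound $\phi \leq \frac{L}{n^2}\big(12 + \frac{2}{775\cdot 193}\big)$ from Lemma~\ref{lem:phi} together with the $\gamma$ bound into $32\rho_\nu\phi\gamma n$. After cancelling $\rho_\nu$ and $L$ and collecting the powers of $n$, the product becomes
\begin{align*}
32\rho_\nu\phi\gamma n \leq \frac{32}{193\, n^2 p_{max}}\Big(12 + \frac{2}{775\cdot 193}\Big).
\end{align*}
Using $p_{max} \geq 1/n$ (the maximum of $n$ probabilities summing to one) and then $n \geq 2$ turns the right-hand side into the pure constant $\frac{16}{193}\big(12 + \frac{2}{775\cdot 193}\big) = \frac{192}{193} + \frac{32}{193^2\cdot 775}$.

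Finally, invoking $z \geq 1 - \frac{384}{193^2\cdot 775}$ from Lemma~\ref{lem:z}, the target inequality $32\rho_\nu\phi\gamma n \leq z$ reduces to verifying $\frac{192}{193} + \frac{416}{193^2\cdot 775} \leq 1$, i.e.\ $\frac{416}{193\cdot 775}\leq 1$, which holds with ample slack. The main obstacle is purely bookkeeping: the constant $193$ has plainly been calibrated so that the dominant term equals $192/193$, uncomfortably close to the ceiling $1$, so one must ensure the residual corrections coming from Lemmas~\ref{lem:z} and~\ref{lem:phi} do not tip the sum over. The delicate points are getting every power of $n$ right after the cancellations and orienting the two substitutions $p_{max}\geq 1/n$ and $n\geq 2$ in the direction that preserves the inequality.
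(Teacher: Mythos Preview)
Your proposal is correct and follows essentially the same route as the paper's own proof: substitute the step-size bound $\gamma \leq 1/(193L\rho_\nu n\,p_{max})$ and the bound on $\phi$ from Lemma~\ref{lem:phi}, then use $p_{max}\geq 1/n$ and $n\geq 2$ to collapse everything to a pure numerical inequality, finally comparing against the lower bound on $z$ from Lemma~\ref{lem:z}. The only cosmetic difference is that you first clear the denominator $z$ (reducing to $32\rho_\nu\phi\gamma n \leq z$) whereas the paper keeps $z$ in the denominator and plugs in its lower bound at the end; the arithmetic is identical.
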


\section{Lemma Proofs}
\label{sec:lemma-proofs}

\begin{proof}[Proof of Lemma \ref{lem:client-error-bound}]

    These steps are shown in the Main Theorem proof. 
    Due to the symetry and double stochasticity of $\bar{W}$ we find
\begin{align}
    &\E_{\xi \sim \mathcal{D}_i} \sum_{j=0}^{t} \norm{G(x^j_{i_j}, \xi^{i_j}_{*,j}) \prod_{q=j+1}^{t} (W_{i_q}^q - \bar{W}^q)(\frac{\bm{1_n}}{n}-\bm{e}_i)}^2 \\
    =&\E_{\xi \sim \mathcal{D}_i} \sum_{j=0}^{t} \norm{G(x^j_{i_j}, \xi^{i_j}_{*,j}) \prod_{q=j+1}^{t} (W_{i_q}^q - \phi 1^\intercal + \phi 1^\intercal - \bar{W}^q )(\frac{\bm{1_n}}{n}-\bm{e}_i)}^2 \\
    =&\resizebox{.9\hsize}{!}{$\E_{\xi \sim \mathcal{D}_i} \sum_{j=0}^{t}\norm{G(x^j_{i_j}, \xi^{i_j}_{*,j}) \bigg[ \prod_{q=j+1}^{t} (W_{i_q}^q - \phi 1^\intercal)(\frac{\bm{1_n}}{n}-\bm{e}_i) - \prod_{q=j+1}^{t}(\bar{W}^q - \phi 1^\intercal )(\frac{\bm{1_n}}{n}-\bm{e}_i)\bigg]}^2$}\\
    \leq& 2\E_{\xi \sim \mathcal{D}_i} \sum_{j=0}^{t}\norm{G(x^j_{i_j}, \xi^{i_j}_{*,j}) \prod_{q=j+1}^{t} (W_{i_q}^q - \phi 1^\intercal)(\frac{\bm{1_n}}{n}-\bm{e}_i)}^2 \nonumber \\
    &+ 2\E_{\xi \sim \mathcal{D}_i}\sum_{j=0}^{t} \norm{G(x^j_{i_j}, \xi^{i_j}_{*,j}) \prod_{q=j+1}^{t}(\bar{W}^q - \phi 1^\intercal )(\frac{\bm{1_n}}{n}-\bm{e}_i)}^2
\end{align}
Due to the structure of $\phi 1^\intercal$, multiplying this matrix by $\frac{\bm{1_n}}{n}$ or $\bm{e}_i$ yields the same result.
Using this, as well as the double stochasticity of $\bar{W}$, we find
\begin{align}
    &= \resizebox{.85\hsize}{!}{$2\E_{\xi \sim \mathcal{D}_i}\sum_{j=0}^{t} \bigg[ \norm{G(x^j_{i_j}, \xi^{i_j}_{*,j}) \prod_{q=j+1}^{t} (W_{i_q}^q - \phi 1^\intercal)(\frac{\bm{1_n}}{n}-\bm{e}_i)}^2 + \norm{G(x^j_{i_j}, \xi^{i_j}_{*,j}) \prod_{q=j+1}^{t}\bar{W}^q (\frac{\bm{1_n}}{n}-\bm{e}_i)}^2 \bigg]$}\\
    &= \resizebox{.85\hsize}{!}{$2\E_{\xi \sim \mathcal{D}_i} \sum_{j=0}^{t}\bigg[ \norm{G(x^j_{i_j}, \xi^{i_j}_{*,j}) \prod_{q=j+1}^{t} (W_{i_q}^q - \phi 1^\intercal)(\frac{\bm{1_n}}{n}-\bm{e}_i)}^2 + \underbrace{\norm{G(x^j_{i_j}, \xi^{i_j}_{*,j}) (\frac{\bm{1_n}}{n}- \prod_{q=j+1}^{t}\bar{W}^q \bm{e}_i)}^2}_{=B_1} \bigg]$}
\end{align}
Using the result of Lemma \ref{lem:stochastic-bound}, as our communication graph $\mathcal{G}$ is uniformly strongly connected and $[W^t_{i_t}]_{i,i} \geq 1/n$ by construction, we see that 
\begin{align}
    |\big[ \prod_{q=j+1}^{t}\bar{W}^q -\phi 1^\intercal \big]_{h,k}| \leq 4 \nu^{t-j-1} \; \forall \; h, k.
\end{align}
Using this result, we find that
\begin{align}
    |\big[ \prod_{q=j+1}^{t} (W_{i_q}^q - \phi 1^\intercal)(\frac{\bm{1_n}}{n}-\bm{e}_i) \big]_{h}| \leq 8(\frac{n-1}{n}) \nu^{t-j-1} \; \forall \; h.
\end{align}
We can remove the -1 exponent by doubling the constant out front
\begin{align}
    |\big[ \prod_{q=j+1}^{t} (W_{i_q}^q - \phi 1^\intercal)(\frac{\bm{1_n}}{n}-\bm{e}_i) \big]_{h}| \leq 16(\frac{n-1}{n}) \nu^{t - j} \; \forall \; h.
\end{align}
Since $G(x^j_{i_j}, \xi^{i_j}_{*,j})$ is all zeros except for one column, the $i_j$-th column, we find
\begin{align}
    &\E_{\xi \sim \mathcal{D}_i} \sum_{j=0}^{t}\norm{G(x^j_{i_j}, \xi^{i_j}_{*,j}) \prod_{q=j+1}^{t} (W_{i_q}^q - \phi 1^\intercal)(\frac{\bm{1_n}}{n}-\bm{e}_i)}^2 \nonumber \\
    &\leq \sum_{j=0}^{t} 256(\frac{n-1}{n})^2 \nu^{2(t - j)} \E_{\xi \sim \mathcal{D}_i} \norm{\frac{1}{M}\sum_{m=1}^M \nabla \ell(x^j_{i_j}, \xi^{i_j}_{*,j})}^2.
\end{align}
Utilizing Lemma \ref{lem:gradient-bound} yields
\begin{align}
     \leq 256(\frac{n-1}{n})^2 \sum_{j=0}^{t} \nu^{2(t - j)}\bigg( \frac{\sigma^2}{M} +  \norm{ \nabla f_{i_j}(x^t_{i_j})}^2 \bigg).
\end{align}
By properties of geometric series, and taking the expectation over worker $i_j$, we find
\begin{align}
     \leq \frac{256\sigma^2}{(1-\nu^2)M}(\frac{n-1}{n})^2 + 256(\frac{n-1}{n})^2 \sum_{j=0}^{t} \E_{i_j} \norm{ \nabla f_{i_j}(x^t_{i_j})}^2 \nu^{2(t - j)}.
\end{align}
Using the bound \Eqref{eq:b1-bound} (term $B_1$) in the main proof above, we arrive at the final bound
\begin{align}
    &\E_{\xi \sim \mathcal{D}_i} \sum_{j=0}^{t} \norm{G(x^j_{i_j}, \xi^{i_j}_{*,j}) \prod_{q=j+1}^{t} (W_{i_q}^q - \bar{W}^q)(\frac{\bm{1_n}}{n}-\bm{e}_i)}^2 \nonumber \\
    \leq& \frac{512\sigma^2}{(1-\nu^2)M}(\frac{n-1}{n})^2 + \frac{4(n-1)\sigma^2}{(1-\rho)Mn} + 512(\frac{n-1}{n})^2 \sum_{j=0}^{t} \E_{i_j} \norm{ \nabla f_{i_j}(x^t_{i_j})}^2 \nu^{2(t - j)} \nonumber \\
    &+ \frac{4(n-1)}{n}\sum_{j=0}^{t} \E_{i_j} \norm{\nabla f_{i_j}(x^j_{i_j})}^2\rho^{t-j}\\
    \leq& \frac{4(n-1)\sigma^2}{Mn}\bigg( \frac{1}{(1-\rho)} + \frac{128}{(1-\nu^2)} \bigg) \nonumber \\
    &+ \frac{4(n-1)}{n}\sum_{j=0}^{t} \E_{i_j} \norm{\nabla f_{i_j}(x^j_{i_j})}^2 \bigg( \rho^{t-j} + 128\nu^{2(t - j)} \bigg)
\end{align}

Thus, we have our desired result
\begin{equation}
        \E \sum_{j=0}^{t} \norm{G(x^j_{i_j}, \xi^{i_j}_{*,j}) \prod_{q=j+1}^{t} (W_{i_q}^q - \bar{W}^q)(\frac{\bm{1_n}}{n}-\bm{e}_i)}^2 \leq \mathcal{O}(\frac{\sigma^2 }{M} + \E \sum_{j=0}^{t}\norm{\nabla f_{i_j}(x^j_{i_j})}^2).
    \end{equation}
\end{proof}

\begin{proof}[Proof of Lemma \ref{lem:gradient-bound}]
\begin{align}
    &\E_{\xi \sim \mathcal{D}_i} \norm{  \frac{1}{M}\sum_{m=1}^M \nabla \ell (x^t_{i}, \xi^{i}_{m,t}) }^2\\
    =& \frac{1}{M^2} \E_{\xi \sim \mathcal{D}_i} \norm{  \sum_{m=1}^M \nabla \ell (x^t_{i}, \xi^{i}_{m,t}) - \nabla f_{i}(x^t_{i}) + \nabla f_{i}(x^t_{i})}^2\\
    =&  \frac{1}{M^2}\E_{\xi \sim \mathcal{D}_i} \norm{  \sum_{m=1}^M \nabla \ell (x^t_{i}, \xi^{i}_{m,t}) - \nabla f_{i}(x^t_{i})} + \E_{\xi \sim \mathcal{D}_i} \norm{\sum_{m=1}^M \nabla f_{i}(x^t_{i})}^2 \nonumber \\
    &+ \frac{2}{M^2} \E_{\xi \sim \mathcal{D}_i} \big \langle \sum_{m=1}^M \big(\nabla \ell(x^t_{i}, \xi_{m,t}^{i}) - \nabla f_{i}(x^t_{i})\big), \sum_{m=1}^M \nabla f_{i}(x^t_{i}) \big \rangle \\
    =& \frac{1}{M^2} \sum_{m=1}^M \E_{\xi \sim \mathcal{D}_i} \norm{  \nabla \ell (x^t_{i}, \xi^{i}_{m,t}) - \nabla f_{i}(x^t_{i})} + M^2 \norm{ \nabla f_{i}(x^t_{i})}^2 \\
    \leq& \frac{\sigma^2}{M} + \norm{ \nabla f_{i}(x^t_{i})}^2 
\end{align}
\end{proof}

\begin{proof}[Proof of Lemma \ref{lem:outside-sum}]
\begin{align}
    \E_i \norm{\nabla f_i(x_i^t)}^2 &= \E_i \norm{\nabla f_i(x_i^t) - \sum_{i'=1}^n p_{i'} \nabla f_i(x_{i'}^t) + \sum_{i'=1}^n p_{i'}\nabla f_i(x_{i'}^t)}^2\\
    &\leq 2 \E_i \bigg(   \norm{\nabla f_i(x_i^t) - \sum_{j=1}^n p_{j} \nabla f_{j}(x_{j}^t)}^2 + \norm{\sum_{j=1}^n p_{j}\nabla f_{j}(x_{j}^t)}^2 \bigg)
\end{align}
The first term on the right hand side can be bounded by
\begin{align}
    &\E_i\norm{\nabla f_i(x_i^t) - \sum_{j=1}^n p_{j} \nabla f_j(x_{j}^t)}^2 \nonumber \\
    =& 3 \E_i\bigg( \norm{\nabla f_i(x_i^t) - \nabla f_i(\frac{X^t \bm{1_n}}{n})}^2+ \norm{\sum_{j=1}^n p_{j} \nabla f_j(\frac{X^t \bm{1_n}}{n}) - \sum_{j=1}^n p_{j} \nabla f_j(x_{j}^t)}^2 \nonumber \\
    &+ \norm{\nabla f_i(\frac{X^t \bm{1_n}}{n})  - \sum_{j=1}^n p_{j} \nabla f_j(\frac{X^t \bm{1_n}}{n})}^2 \bigg)\\
    \leq& 3 \E_i \bigg( L^2 \norm{x_i^t - \bar{x}^t}^2 + \sum_{j=1}^n p_{j} \norm{ \nabla f_j(\bar{x}^t) - \nabla f_j(x_{j}^t)}^2 + \norm{\nabla f_i(\bar{x}^t)  -  \nabla f (\bar{x}^t)}^2 \bigg)\\
    \leq& 3 \E_i \bigg( L^2 \norm{x_i^t - \bar{x}^t}^2 + L^2 \sum_{j=1}^n p_{j} \norm{ \bar{x}^t - x_{j}^t}^2\bigg) + 3\zeta^2 \\
    =& 6 L^2 \sum_{i=1}^n p_{i} \norm{ \bar{x}^t - x_{i}^t}^2 + 3\zeta^2
\end{align}
Combining all terms yields the final result
\begin{align}
    \E_i \norm{\nabla f_i(x_i^t)}^2 \leq 2\norm{\sum_{i=1}^n p_{i}\nabla f_{i}(x_{i}^t)}^2 + 12 L^2 \sum_{i=1}^n p_{i} \norm{ \bar{x}^t - x_{i}^t}^2 + 6\zeta^2
\end{align}
\end{proof}
\begin{proof}[Proof of Lemma \ref{lem:z}]
It is trivial to see that $z < 1$. We now determine the lower bound of $z$ given $n \geq 2$, $p_{max} \geq 1/n$, and $\rho_\nu \geq 775/4$
\begin{align}
    z &= 1 - 96L^2 \rho_\nu \gamma^2 = 1 - 96L^2 \rho_\nu (\frac{Mn^2 \Delta_f}{TL}) = 1 - 96L^2 \rho_\nu (\frac{1}{193^2 L^2\rho_\nu^2 n^2 p_{max}^2})\\
    &= 1 -  \frac{96 }{193^2 \rho_\nu n^2 p_{max}^2} \leq 1 -  \frac{384}{193^2 (775)}
\end{align}
\end{proof}

\begin{proof}[Proof of Lemma \ref{lem:phi}]
Given $n \geq 2$ and $\rho_\nu \geq 775/4$, and the definition of $\gamma$, one can see
\begin{align}
    \phi &= (\frac{3L}{(n^2 - 3)} + \frac{L^2\gamma p_{max}}{2n}) = \frac{L}{n^2}(\frac{3}{(1 - 3/n^2)} + \frac{L\gamma n p_{max}}{2})\\
    &\leq \frac{L}{n^2}(\frac{3}{(1 - 3/n^2)} + \frac{Ln p_{max}}{2}(\frac{\sqrt{Mn^2 \Delta_f}}{\sqrt{TL}}))
\end{align}
Using the definition of $T$ it follows that
\begin{align}
    \phi &\leq \frac{L}{n^2}(\frac{3}{(1 - 3/n^2)} + \frac{Ln p_{max}}{2}(\frac{1}{193L\rho_\nu n p_{max}})) \leq \frac{L}{n^2}(\frac{3}{(1 - 3/4)} + \frac{1}{386\rho_\nu})\\
    &= \frac{L}{n^2}(12 + \frac{2}{775(193)})
\end{align}
\end{proof}

\begin{proof}[Proof of Lemma \ref{lem:bound1}]
Given $n \geq 2$, $p_{max} \geq 1/n$ Lemma \ref{lem:z}, and Lemma \ref{lem:phi}, one can see
\begin{align}
    1 - 1 - \frac{32\rho_\nu \phi \gamma n}{z} = 1 - \frac{32\rho_\nu \phi n}{z}(\frac{1}{193 L \rho_\nu n p_{max}}) = 1 - \frac{32n\phi}{193Lz}\\
    \geq 1 - \frac{32(12 + \frac{2}{775(193)})}{193zn} \geq 1 - \frac{16(12 + \frac{2}{775(193)})}{193(1 -  \frac{384}{193^2 (775)})} \geq 0
\end{align}

\end{proof}

\end{document}